\documentclass{article}[14pt]

\normalsize

%\hyphenation{op-tical net-works semi-conduc-tor}

\usepackage{cite}
\usepackage{amsmath,amssymb,amsfonts}
\usepackage{graphicx}
\usepackage{textcomp}
\usepackage{xcolor}

%\PassOptionsToPackage{T1}{fontenc}
\usepackage[T2A,T1]{fontenc} %T2A for Cyrillic symbols
\usepackage[utf8]{inputenc}
\PassOptionsToPackage{english}{babel}
\PassOptionsToPackage{svgnames}{xcolor}
%\PassOptionsToPackage{ruled,boxed,linesnumbered,vlined,noend}{algorithm2e}
% \PassOptionsToPackage{longnamesfirst}{natbib}
\PassOptionsToPackage{nameinlink,capitalize,noabbrev}{cleveref}
\PassOptionsToPackage{shortlabels}{enumitem}
\usepackage{%
  fontenc,inputenc,babel,%
  xstring,enumitem,%algorithm2e,%
  amsmath,amssymb,amsthm,thm-restate,mathabx,mathtools,bm,%
  multirow,etoolbox,fancyhdr,lastpage,newfloat,tabularx,%
  booktabs,%
  rotating,%
  tcolorbox,caption,%
  hyperref,cleveref,%
  xspace,
}

\usepackage{algorithm}
\usepackage[noend]{algpseudocode}

%\SetKwInput{KwInput}{Input}
%\SetKwInput{KwOutput}{Output}

%%%%%%%%%%%%%%%%%%%%%%%%%%%% Setup %%%%%%%%%%%%%%%%%%%%%%%%%%%%%%%%%%%%%

\hypersetup{
  colorlinks,
  linkcolor=black,
  citecolor=black,
  urlcolor=black,
  bookmarksnumbered,
  %pdfpagelabels=true,
}
\newcommand{\doi}[1]{doi: \href{http://dx.doi.org/#1}{\path{#1}}}

% natbib
% \bibliographystyle{plainnat}
% \renewcommand{\bibpreamble}{\addcontentsline{toc}{section}{References}}
% \setcitestyle{citesep={;}}

% amsthm
\newtheorem{theorem}{Theorem}[section]
\newtheorem{proposition}[theorem]{Proposition}
\newtheorem{lemma}[theorem]{Lemma}
\newtheorem{corollary}[theorem]{Corollary}
\newtheorem{definition}[theorem]{Definition}
\newtheorem{remark}[theorem]{Remark}
\newtheorem{example}[theorem]{Example}

\numberwithin{equation}{section}

%% fancyhdr
%\pagestyle{fancy}
%\fancyhf{}
%\fancyhead[C]{\it\color{red}Preprint dated \today}
%\fancyfoot[C]{Page \thepage{} of \pageref{LastPage}}
%\renewcommand{\headrulewidth}{0pt}
%\renewcommand{\footrulewidth}{0pt}

% other

%%%%%%%%%%%%%%%%%%%%%%%%%%%% Macros %%%%%%%%%%%%%%%%%%%%%%%%%%%%%%%%%%%%
% Command for testing for null
\newcommand\ifnull[3]{%
  \ifx\null#1%
    #2%
  \else%
    #3%
  \fi}

% standard stuff
\newcommand{\ZZ}{\ensuremath{\mathbb{Z}}}

\newcommand{\RR}{\ensuremath{\mathbb{R}}}

% math
\newcommand{\field}{\mathbb{F}_q}
\newcommand{\fieldc}{\widebar{\mathbb{F}}_q}
\newcommand{\ffield}{F}

\newcommand{\places}{\mathbb{P}}

\renewcommand{\vec}[1]{\bm{#1}}
\newcommand{\mat}[1]{\vec{#1}}
\renewcommand{\L}{\mathcal{L}}
\newcommand{\dimL}{l}
\newcommand{\Pinf}{P_{\infty}}
\newcommand{\val}[1][\null]{\ifnull{#1}{v_{\Pinf}}{v_{#1}}\xspace}

\newcommand{\code}{\mathcal{C}_{\L}(D,G)}
\renewcommand{\r}{\vec{r}}

\newcommand{\errors}{\mathcal E}
\newcommand{\M}{\mathcal M_{s,\ell}(D,G)}
\newcommand{\y}[1][\null]{\ifnull{#1}{y}{y^{(#1)}}}
\newcommand{\g}[1]{g^{(#1)}}

\renewcommand{\H}{\mathcal{H}}

\renewcommand{\S}{\mathcal{S}}

\newcommand{\pow}[1]{\widehat{#1}}

\newcommand{\A}{\mathcal{A}}
\newcommand{\Y}{\mathcal{Y}}

% and / for
\renewcommand{\and}{\quad \text{and} \quad}
\newcommand{\for}{\quad \text{for }}

% Algorithms
 % typesetting for algo names (NOT USED)
\newcommand{\assign}{\leftarrow} % assign value to variable
\newcommand{\Input}{\Statex \textbf{Input: }} % Algorithm input
\newcommand{\Output}{\Statex \textbf{Output: }} % Algorithm output

\newcommand{\algo}[1]{\ensuremath{\mathsf{#1}}\xspace}
\newcommand{\Decode}{\algo{Decode}}
\newcommand{\Evaluate}{\algo{Evaluate}}
\newcommand{\Interpolate}{\algo{Interpolate}}
\newcommand{\GenYa}{\algo{Generators}_{\Ya}}
\newcommand{\BasisProducts}{\algo{BasisProducts}_{\field[x]}}
\newcommand{\GenFqx}{\algo{Generators}_{\field[x]}}
\newcommand{\RootFind}{\algo{RootFinding}}

% Get a Cyrillic symbol in math mode
\makeatletter
\def\easycyrsymbol#1{\mathord{\mathchoice
  {\mbox{\fontsize\tf@size\z@\usefont{T2A}{cmr}{m}{n}#1}}
  {\mbox{\fontsize\tf@size\z@\usefont{T2A}{cmr}{m}{n}#1}}
  {\mbox{\fontsize\sf@size\z@\usefont{T2A}{cmr}{m}{n}#1}}
  {\mbox{\fontsize\ssf@size\z@\usefont{T2A}{cmr}{m}{n}#1}}
}}
\makeatother
\newcommand{\Ya}{\easycyrsymbol{\CYRYA}}

% cost
\newcommand{\bigO}{\mathcal{O}}
\newcommand{\softO}{\tilde \bigO}
\newcommand{\N}{\mathcal{N}}
\newcommand{\V}{\mathcal{V}}

% operators
\DeclareMathOperator{\supp}{supp}

\DeclareMathOperator{\ev}{ev}

\DeclareMathOperator{\rem}{ \ rem \, }
\DeclareMathOperator{\Con}{Con}

\begin{document}

\title{Fast Decoding of AG Codes}

\author{Peter~Beelen,
        Johan~Rosenkilde,
        and~Grigory~Solomatov.
      }

\maketitle

\begin{abstract}
  We present an efficient list decoding algorithm in the style of Guruswami-Sudan for algebraic geometry codes. Our decoder can decode any such code using $\softO(s\ell^{\omega}\mu^{\omega-1}(n+g))$ operations in the underlying finite field, where $n$ is the code length, $g$ is the genus of the function field used to construct the code, $\ell$ is the designed list size and $\mu$ is the smallest positive element in the Weierstrass semigroup at some chosen place; the ``soft-O'' notation $\softO(\cdot)$ is similar to the ``big-O'' notation $\bigO(\cdot)$, but ignores logarithmic factors. For the interpolation step, which constitutes the computational bottleneck of our approach, we use known algorithms for univariate polynomial matrices, while the root-finding step is solved using existing algorithms for root-finding over univariate power series.
\end{abstract}

%\begin{keywords}
%  Algebraic Geometry Codes
%\end{keywords}

\section{Introduction}

Containing some of the best error-correcting codes currently known, algebraic geometry (AG) codes have received a lot of attention since their introduction by Goppa in \cite{goppa_algebraico-geometric_1983}. The celebrated Guruswami-Sudan decoder \cite{guruswami1998improved} for these codes relies on an interpolation step as well as a root-finding step and is capable of decoding beyond half the designed minimum distance by returning a list of all codewords within a certain Hamming distance $\tau$ from the received word. In this article, we present an efficient realization of this decoder, achieving the best known complexity in the fully general setting of arbitrary AG codes. Moreover, except for the particularly simple case of Reed-Solomon codes, our decoder is at least as fast as all existing decoders which are tailored for specific families of codes. This article is based on a chapter of the PhD thesis of the third author \cite{solomatov2021thesis}.

Following the common practice, we will measure algorithmic complexity by asymptotically upper-bounding the number of arithmetic operations in the underlying finite field $\field$, relying on the \emph{big-O} notation $\bigO(\cdot)$ as well as the \emph{soft-O} notation $\softO(\cdot)$, which ignores logarithmic factors. Formally, $\softO(h) = \bigcup_{j=0}^\infty \bigO(h \log(h)^j)$ for any function $h : \RR_{\ge 0} \to \RR_{\ge 0}$, where $\RR_{\ge 0}$ denotes the set of non-negative real numbers. Analogously to $\RR_{\ge 0}$, we will also write $\ZZ_{\ge 0}$ and $\ZZ_{> 0}$ for the non-negative and the positive integers respectively. Our complexity estimates will also involve $\omega$, which denotes some real number such that the product of any two matrices in $\field^{m \times m}$ can be computed using $\bigO(m^\omega)$ operations in $\field$. The naive algorithm for matrix multiplication yields $\omega = 3$, and it is clear that $\omega \ge 2$ in general; the current record with $\omega < 2.37286$ is due to \cite{alman2021refined}.

Our decoder has the complexity $\softO(\ell^{\omega+1}\mu^{\omega-1}(n+g))$, however, in a series of remarks throughout the article we explain how it can be slightly improved to $\softO(s\ell^{\omega}\mu^{\omega-1}(n+g))$; here $n$ is the code length, $g$ is the genus of the function field used to construct the code, $\ell$ is the designed list size, $s \le \ell$ is the multiplicity, and $\mu$ is the smallest element in the Weierstrass semigroup of some rational place $\Pinf$ which is not one of the evaluation places. As we will see in \cref{sec:simpler-setting}, the existence of such $\Pinf$ can be assumed without any loss of generality.

\subsection{Related work}
%To conclude the current section, let us provide some historical context to our contribution.

As mentioned earlier, the paradigm of Guruswami-Sudan list decoding revolves around two main steps: interpolation and root-finding. As former is generally more computationally demanding, it has historically received the most attention.
Several authors, including \cite{nielsen2000decoding, o2002grobner, mceliece_guruswami-sudan_2003, alekhnovich_linear_2005, farr2005grobner}, formulated the interpolation step as a problem of finding a polynomial, minimal with respect to a weighted monomial order, in a certain vanishing ideal. Prompted by this, Lee and O'Sullivan developed a technique for obtaining such a polynomial from a Gr\"obner basis (of  $\field[x]$-modules), that was itself computed starting from a particular generating set\,--\,first for RS codes \cite{lee_list_2008}, and then for one-point Hermitian codes \cite{lee_list_2009}. The complexity of this strategy was further improved by Beelen and Brander in \cite{beelen_efficient_2010} by utilizing Alekhnovich's algorithm for row reduction of polynomial matrices \cite{alekhnovich_linear_2005}. Furthermore, their decoder was applicable to the wider family of one-point codes over $C_{ab}$ curves, making it more general. Specializing back to one-point Hermitian codes, Rosenkilde and Beelen \cite{nielsen_sub-quadratic_2015} sped up this approach even more by delegating the row-reduction phase to the algorithm by Giorgi, Jeannerod and Villard \cite{giorgi_complexity_2003}, which is more efficient than the one by Alekhnovich. Doing this required additional improvements to keep up with the new target complexity, including efficient computation of the initial $\field[x]$-basis, as well as a way of handling fractional weights. The result was the first list-decoder of one-point Hermitian codes having sub-quadratic complexity in the code length.
In the current article, we generalize the tools from \cite{nielsen_sub-quadratic_2015} to be applicable to all AG codes, relying on the conceptual framework from \cite{lee_unique_2014} to represent function field elements using \emph{Ap\'ery systems}. %which we accomplish by relying on the conceptual framework from \cite{lee_unique_2014}.

Before shifting our attention to the root-finding step, we ought to mention the multivariate interpolation algorithm by Chowdhury, Jeannerod, Neiger, Schost and Villard \cite{chowdhury_faster_2015}\,--\,it was the first to enable the currently best complexity in the special case of RS codes, albeit in a probabilistic manner. A deterministic algorithm with the same complexity was later given in \cite{jeannerod_computing_2017}.

Some of the earliest root-finding algorithms for Guruswami-Sudan list-decoding include Roth and Ruckenstein's \cite{roth_efficient_2000} as well as Gao and Shokrollahi's \cite{gao2000computing}. Alekhnovich described in \cite{alekhnovich_linear_2005} an efficient approach for computing the $\field[\![x]\!]$-roots modulo $x^\beta$ of a polynomial $Q \in \field[\![x]\!][z]$; its complexity was shown in \cite{nielsen_sub-quadratic_2015} to be $\softO(\beta^2 \ell)$ operations in $\field$, where $\ell$ is the $z$-degree of $Q$. Another technique by Berthomieu, Lecerf and Quintin \cite{berthomieu_polynomial_2013} achieved the cost $\softO(\beta \ell^2)$. In this article, we rely on the algorithm by Neiger, Rosenkilde and Schost \cite{neiger_fast_2017}, whose complexity of $\softO(\beta \ell)$ operations is provably quasi-optimal.

The complexity of our decoder is at least as good as, and often faster than, the complexity of previous decoders based on the Guruswami-Sudan paradigm. As far as we know, there is only one exception: in the case of RS codes, the complexity of the algorithms from \cite{chowdhury_faster_2015} or \cite{jeannerod_computing_2017} is a factor of $\ell/s$ better. To illustrate the strength and versatility of our results, in \cref{sec:decoding}, examples are given of the list decoding of various families of AG codes. One further remark should be made, namely the case of bounded distance decoding. Setting $s = \ell = 1$ and assuming that $g \in \bigO(n)$, the complexity our decoder simplifies to $\softO(\mu^{\omega-1}n)$. In this case, the decoder can always correct up to $(d^*-1-g)/2$ errors, where $d^*$ denotes the designed minimum distance of an AG code, also known as the Goppa bound. The same decoding radius is achieved in \cite{sakata_fast_2014-1} with complexity $\bigO(\mu n^2)$. Since $\mu \le g$ and we assumed $g \in \bigO(n)$, our complexity is better. However, Sakata's extension of the Berlekamp-Massey decoder \cite{sakata_extension_1990, sakata_generalized_1995} yields a decoding algorithm able to correct up to at least $(d^*-1)/2$ errors. In \cite{hoholdt_algebraic_1998}, the complexity in the case of certain so-called one-point AG codes is $\bigO(\mu n^2+q^{t+1}(a_1+\cdots+a_t)+tnq^t)$, where $a_1,\dots,a_t$ form a minimal set of generators of the Weierstrass semigroup at $P_\infty$. To achieve the same decoding radius with our decoder, we could choose $s$ and $\ell$ in $\bigO(g)$, but doing so might not be as efficient, since our complexity would then increase by a factor of $\bigO(g^{\omega + 1}) \subseteq \bigO(n^{\omega + 1})$.

\subsection{Strategy outline and contributions}

With the aim of making the exposition easier in the subsequent sections of the article, we now present an overview of the main steps in the proposed decoder. This consists of a way of simplifying the general setting as well as a way of efficiently carrying out the classical steps of interpolation and root-finding. The complete decoder is presented in \cref{sec:decoding}, where it is also exemplified for special cases of AG codes.

\begin{itemize}
\item{\bf Simplified setting:} In \cref{sec:simpler-setting}, we show how extending the constant field $\field$ allows us to make certain simplifying assumptions. The important takeaway here is that no generality is sacrificed in the process, while only a minor penalty is introduced into the computational complexity. In return, we may assume existence of certain rational places, as well as existence of a special function field element $x$ with controlled zeroes and poles. Having access to additional rational places is useful for a variety of reasons, among which is efficient multiplication of function field elements in a pointwise manner; the carefully chosen function $x$ acts as a fundamental building block in the way we represent function field elements.
\item{\bf Interpolation step:} This is the most involved part of the article and requires all of the computational tools from \cref{sec:algorithms}\,--\,except for \cref{subsec:findroots}, which deals with root-finding. In \cref{subsec:structure-of-M-as-Ya-module}, it is explained how the interpolation step can be viewed as a problem of finding a ``small'' element $Q$ in a certain interpolation module whose underlying ring consists of all functions that have no poles except for possibly at a fixed rational place $\Pinf$. This ring, denoted by $\Ya$, is itself a free module over $\field[x]$, which essentially means that we can represent everything as tuples of univariate polynomials. The computational path for obtaining $Q$ boils down to first constructing a generating set of the interpolation module over $\Ya$, then expanding this to a generating set over $\field[x]$, and finally, using efficient algorithms for matrices over $\field[x]$ to reduce this generating set to a ``small'' basis that contains a satisfactory $Q$.
\item{\bf Root-finding step:}
  Structurally, the obtained $Q$ is a univariate polynomial whose coefficients are function field elements; and according to the Guruswami-Sudan paradigm, list decoding reduces to finding the roots of this polynomial. We accomplish this by expressing the coefficients of $Q$ as power series in $x$, which is always possible in our simplified setting in which $x$ is a local parameter of some appropriate rational place. An existing algorithm for root-finding over the ring of power-series is then used to obtain the sought roots, albeit represented as power series; the final step of our decoder therefore consists of converting these roots back to the original representation as well as discarding any potential ``spurious'' solutions. All of this is detailed in \cref{subsec:findroots}.
\end{itemize}

Our decoder relies on a mixture of new and existing results; the novel contributions include:

\begin{itemize}
\item reduction of the fully general setting to a simpler one (\cref{sec:simpler-setting}),
\item an algorithm for encoding general AG codes with complexity $\softO(\mu n)$ (\cref{subsec:mpe}),
\item an interpolation algorithm with complexity $\softO(\mu^{\omega-1}(n+g))$ (\cref{subsec:interpolation}),
\item a root-finding algorithm with complexity $\softO(\ell^2\mu^{\omega-1}(n+g))$ (\cref{subsec:findroots}),
\item an algorithm for computing an $\field[x]$-basis of $\langle h \rangle_{\Ya}$ for any function $h$ (\cref{subsec:genFqx}).
\end{itemize}
Not counting precomputation, all of the algorithms above are sufficiently efficient to reach our target cost. Although the cost of precomputation has not been investigated in detail, it is not expected to be much more expensive than that of Gaussian elimination. A list of all precomputed objects can be found in \cref{sec:decoding}.

%The article is organized as follows: In \cref{sec:preliminaries} we give a brief introduction to AG codes and show how the fully general setting can be reduced to a simpler one by incurring only a marginal cost penalty. This section also contains existing algorithmic results about univariate polynomial matrices, which will be relied upon throughout the article. \cref{sec:representation} is concerned with how function field elements can be represented on a computer, and this representation is then used in \cref{sec:Guruswami-Sudan} to formulate the main definitions and results needed for Guruswami-Sudan decoding. \cref{sec:algorithms} is the largest section and contains all of the intermediate algorithms needed for our decoder. Finally, the complete decoder \done{is presented in \cref{sec:decoding}, where it is also exemplified for special cases of AG codes.} %as well as a comparison with existing work is presented in \cref{sec:decoding}.

\section{Preliminaries}\label{sec:preliminaries}

\subsection{AG codes}
Let $\field$ be a finite field with $q$ elements, where $q$ is a power of a prime number $p$. Further, let $\ffield$ be a function field of genus $g$ and full constant field $\field$. As is common, we denote by $\places_{\ffield}$ the set of places of $\ffield$. %Assume that $\ffield$ has at least $n$ rational places, say $P_1,\dots,P_n$ and write $D=P_1+\cdots + P_n$.

For any divisor $A=\sum_i n_i Q_i$ of $\ffield$, we denote by $\supp(A)$ the support of $A$, which consists of all places $Q_i$ such that $n_i \neq 0$. A divisor $A$ is called effective, denoted by $A \ge 0$, if for all $i$ it holds $n_i \ge 0$. Further, the degree of $A$, is defined as $\deg(A)=\sum_i n_i \deg(Q_i),$ where $\deg(Q_i)$ denotes the degree of the place $Q_i$.

The well-known Riemann-Roch space of a divisor $A$ is given by
\[
  \L(A)=\{f \in \ffield\setminus\{0\} \mid (f)+A \ge 0\} \cup \{0\} \ ,
\]
where $(f)$ denotes the divisor of $f$. The Riemann-Roch space $\L(A)$ is a vector space over $\field$, whose dimension will be denoted by $\dimL(A)$. The theorem of Riemann-Roch \cite[Theorem 1.5.15]{stichtenoth_algebraic_2009} implies that $\dimL(A) \ge \deg(A)+1-g$ and that equality holds if $\deg(A) \ge 2g-1$. Moreover $\dimL(A)=0$ if $\deg(A)<0$ since the degree of a principal divisor is zero.
\begin{definition}
  \label{def:ev}
Assume that $\ffield$ has at least $n$ rational places, say $P_1,\dots,P_n$ and write $D=P_1+\cdots + P_n$. Further, let $G$ be a divisor of $\ffield$ such that $\supp(G) \cap \supp(D)=\emptyset$\footnote{The assumption that $\supp(G) \cap \supp(D)=\emptyset$ can be removed \cite[Remark 2.2.15]{stichtenoth_algebraic_2009}, but doing so does not give new AG codes up to monomial equivalence.}.
Then we define
\[
  \code = \{
  \ev_D(f) \mid f \in \L(G) \} \subset \field^n \ ,
\]
where for any $f \in \L(G)$, $\ev_D(f) = (f(P_1),\dots,f(P_n)) \in \field^n$.
\end{definition}
For future reference, we state some properties of this code, see \cite[Chapter 2]{stichtenoth_algebraic_2009} for details. First of all, it is well-known that this code has minimum distance at least $n-\deg(G).$ Since the kernel of $\ev_D$ is $\L(G-D)$, the dimension of the code equals $\dimL(G)-\dimL(G-D)$. In particular, $\code$ is the zero code if $\deg(G) <0$. Further, using the theorem of Riemann-Roch, we see that $\dim(\code)=n$, i.e. $\code=\field^n$, whenever $\deg(G) \ge n+2g-1$. Because of this, we may assume
\begin{equation}\label{eq:degree_bounds_G}
0 \le \deg(G) \le n+2g-1.
\end{equation}

\begin{remark}
  In his original construction, Goppa considered AG codes $\mathcal{C}_{\Omega}(D,G)$ defined using residues of certain differentials. These codes can also be obtained as evaluation codes \cite[Proposition 2.2.10]{stichtenoth_algebraic_2009}. Hence our decoder can also handle codes of the form $\mathcal{C}_{\Omega}(D,G)$.
\end{remark}

%The $\tau$-bounded distance decoding problem for $\code$ is the following:
%\begin{problem}
%  Let $f \in \L(G)$ be a \emph{message} and let $\c = \ev_D(f)$ be the corresponding \emph{codeword}. Given the \emph{received word} $\r = (r_1,\dots,r_n) = \c + \e$, where $\e = (e_1,\dots,e_n) \in \field^n$ is some unknown \emph{error vector} with Hamming weight at most $\tau$, recover $f$.
%\end{problem}

\subsection{Reduction to a simpler setting}
\label{sec:simpler-setting}

In this subsection, we will show that without significant increase of decoding complexity, we can assume several things about the function field $\ffield$ and the AG code $\code$ that will make the exposition of our decoding algorithm simpler later on. For example, it will be convenient to have an additional rational place $\Pinf$ of $\ffield$ that is not used in the evaluation map $\ev_D$. In fact, for some of our later algorithms, it will be convenient to have additional rational places as well. An easy way out is to increase the constant field $\field$ to $\mathbb{F}_{q^e}$ for some small value of $e$, thus introducing new rational places that can be used as additional rational places. We will denote by $F\mathbb{F}_{q^e}$, the function field obtained from $\ffield$ by extending the constant field to $\mathbb{F}_{q^e}$.

As far as decoding is concerned, the AG code $\code$ is in a trivial way a subcode (not $\mathbb{F}_{q^e}$-linear, but $\field$-linear) of the AG code obtained from the function field $F\mathbb{F}_{q^e}$ using the divisors $\Con(D)$ and $\Con(G)$, where $\Con$ denotes the conorm with respect to $F\mathbb{F}_{q^e}/F$, \cite[Definition 3.1.8]{stichtenoth_algebraic_2009}. Since all places in $\supp(D)$ are rational, we may with slight abuse of notation write $\Con(D)=D$. Hence if for a given $\tau$, one has a list decoding algorithm for $\mathcal{C}_{\mathcal L}(D,\Con(G))$ that produces all codewords at distance at most $\tau$ from a received word, one immediately obtains a list decoding algorithm for $\code$. However, since multiplication of two elements in $\mathbb{F}_{q^e}$ can be done in $\softO(e)$ operations in $\field$ \cite{cantor_fast_1991}, the value of $e$ should be small for complexity reasons. Therefore we now give a series of lemmas, each aiming to show that for small $e$, simplifying assumptions can be made about the function field $\ffield$ and the code $\code$.

\begin{lemma}\label{lem:more_rational_places}
Let $\ffield$ be a function field over $\field$ of genus $g$ and denote by $N_e$ the number of rational places of the function field $F\mathbb{F}_{q^e}$ over $\mathbb{F}_{q^e}$. If $N,e \in \ZZ_{>0}$ are such that $e \geq 2\log_q \max\{N,2g + 1\}$, then $N_e > N$. %In particular, if $N \in \bigO(N_1+g)$, then we can choose $e \in \bigO(\log_q(N_1+g))$.
\end{lemma}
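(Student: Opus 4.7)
The plan is to reduce the statement to an elementary inequality via the Hasse-Weil bound applied to the constant field extension $F\mathbb{F}_{q^e}/\mathbb{F}_{q^e}$, which still has genus $g$ (constant field extensions preserve the genus when the full constant field is $\mathbb{F}_q$). The Hasse-Weil bound gives
\[
  N_e \;\geq\; q^e + 1 - 2g\, q^{e/2},
\]
so it suffices to show that the right-hand side exceeds $N$ under the stated hypothesis on $e$.

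To do this I would substitute $t := q^{e/2}$, so the target becomes
\[
  t^2 - 2g\, t + 1 \;>\; N,
\qquad\text{i.e.}\qquad
  t(t - 2g) \;\geq\; N.
\]
The hypothesis $e \geq 2\log_q \max\{N, 2g+1\}$ translates to $t \geq \max\{N,\, 2g+1\}$. From $t \geq 2g+1$ I get $t - 2g \geq 1$, and combined with $t \geq N$ this yields $t(t-2g) \geq N$, which is exactly what is needed.

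Chaining things together, $N_e \geq t^2 - 2gt + 1 \geq N + 1 > N$, completing the proof.

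There is no real obstacle here: the only subtlety worth flagging is that one must be sure to apply the Hasse-Weil bound in the extended field (where the genus is unchanged and the cardinality becomes $q^e$), and to handle both cases $N \geq 2g+1$ and $N < 2g+1$ uniformly via the $\max$. The split into the two inequalities $t \geq N$ and $t - 2g \geq 1$ handles both cases in one stroke.
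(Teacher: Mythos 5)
Your proof is correct and is essentially the paper's own argument: both apply the Hasse--Weil bound to $F\mathbb{F}_{q^e}$ and exploit that $q^{e/2} \geq \max\{N, 2g+1\}$ forces $q^{e/2}(q^{e/2}-2g) \geq N$; the paper merely phrases the final comparison via logarithms rather than via your substitution $t = q^{e/2}$. No gaps.
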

\begin{proof}
  The Hasse-Weil bound $|(q^e + 1) - N_e| \leq 2 q^{e/2} g$ implies that
  \[
    \log_q N_e > \log_q (q^e - 2 q^{e/2} g) = e/2 + \log_q(q^{e/2} - 2g) \geq e/2 \geq \log_qN \ .
  \]
%The final statement follows directly from the previous.
\end{proof}

Now we show that if the function field $\ffield$ has sufficiently many rational places, then one of them has particularly simple local parameter. Recall that a function $f \in \ffield$ is called a local parameter for a place $P$ if $v_P(f)=1$, where $v_P$ denotes the valuation at $P$.

\begin{lemma}\label{lem:x_local_parameter}
Let $\ffield$ be a function field over $\field$ of genus $g$ having a rational place $\Pinf$. Let $\mu$ be the smallest positive element from the Weierstrass semigroup of $\Pinf$. Any set containing at least $3g+1$ rational places distinct from $\Pinf$, contains a place $P_0$ with local parameter from $\L(\mu\Pinf)$.
\end{lemma}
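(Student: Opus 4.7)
The plan is to reduce the statement to a ramification count for the covering $\ffield \to \field(x)$ induced by a suitable $x \in \L(\mu\Pinf)$. First I would observe that $\L(\mu\Pinf) = \field \oplus \field \cdot x$ is a two-dimensional $\field$-space. Since $\mu$ is the smallest positive element of the Weierstrass semigroup at $\Pinf$, each of $1,2,\dots,\mu-1$ is a Weierstrass gap; as there are exactly $g$ gaps in total, this forces $\mu \le g+1$. The dimension $\dimL(k\Pinf)$ increases by one precisely at pole numbers of $\Pinf$, so $\dimL(k\Pinf) = 1$ for $0 \le k \le \mu-1$ and $\dimL(\mu\Pinf) = 2$; picking any $x \in \L(\mu\Pinf)$ with $\val[\Pinf](x) = -\mu$ then yields the basis $\{1,x\}$.

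Next I would translate the existence of a local parameter into a ramification condition. For any rational place $P_0 \ne \Pinf$, a nonzero $a + bx \in \L(\mu\Pinf)$ can satisfy $\val[P_0](a+bx) \ge 1$ only if $b \ne 0$ and $a = -b\,x(P_0)$, in which case $a + bx$ is a scalar multiple of $x - x(P_0)$. Thus $P_0$ admits a local parameter from $\L(\mu\Pinf)$ if and only if $\val[P_0](x - x(P_0)) = 1$, i.e., $P_0$ is unramified in the degree-$\mu$ extension $\ffield/\field(x)$ above the point $x(P_0)\in\mathbb{P}^1$.

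Finally I would bound the number of bad (i.e., ramified) rational places via Riemann--Hurwitz. Since $\Pinf$ is the unique pole of $x$ and $\val[\Pinf](x) = -\mu$, it is the unique place of $\ffield$ above $\infty \in \mathbb{P}^1$, with ramification index $e_{\Pinf} = \mu$. Riemann--Hurwitz gives $\deg \mathrm{Diff}(\ffield/\field(x)) = 2g - 2 + 2\mu$, while the different exponent at $\Pinf$ satisfies $d_{\Pinf} \ge e_{\Pinf} - 1 = \mu - 1$. Therefore the contribution to the different from places other than $\Pinf$ is at most
\[
(2g-2+2\mu) - (\mu-1) = 2g + \mu - 1 \le 3g,
\]
where the last inequality uses $\mu \le g+1$. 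Each bad rational place contributes at least $d_{P_0} \ge e_{P_0} - 1 \ge 1$ to this remainder, so there are at most $3g$ bad rational places distinct from $\Pinf$. Hence any set of at least $3g+1$ rational places distinct from $\Pinf$ must contain at least one good place.

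The only subtlety to watch is wild ramification, which could a priori inflate different exponents beyond $e_P - 1$, but since the counting only uses the inequality $d_P \ge e_P - 1$ (which holds universally), the bound goes through uniformly without a separate tame/wild case analysis.
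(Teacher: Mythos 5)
Your argument is essentially the paper's: both reduce the statement to counting ramified places of $\ffield/\field(x)$ via the Hurwitz genus formula, using $d(P|Q)\ge e(P|Q)-1$ at each ramified place and $\mu\le g+1$ to bound the number of bad places by $3g$. The translation ``$P_0$ has a local parameter in $\L(\mu\Pinf)$ iff $\val[P_0](x-x(P_0))=1$ iff $P_0$ is unramified'' is correct, and the arithmetic $(2g-2+2\mu)-(\mu-1)\le 3g$ matches the paper's.

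There is, however, one genuine gap: you apply the Riemann--Hurwitz formula $\deg\Diff(\ffield/\field(x))=2g-2+2\mu$ without verifying that the extension $\ffield/\field(x)$ is \emph{separable}. In positive characteristic this is not automatic for an arbitrary $x\in\ffield$, and the formula (indeed, the different itself in the form you use) is only available for separable extensions; the subtlety is not wild ramification, which you correctly dismiss, but whether the formula applies at all. The paper closes this by using the minimality of $\mu$ in an essential way: if $\ffield/\field(x)$ were inseparable, then $\field(x)\subseteq \ffield^p$, so $x=y^p$ for some $y\in\ffield$, and $y$ would have a pole only at $\Pinf$ of order $\mu/p<\mu$, contradicting that $\mu$ is the smallest positive element of the Weierstrass semigroup at $\Pinf$. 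With that paragraph added, your proof is complete and coincides with the paper's.
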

\begin{proof}
Let $x \in \L(\mu\Pinf)$ be a function satisfying $v_{\Pinf}(x)=-\mu$.
First of all, note that the extension $\ffield/\field(x)$ is separable. Indeed, assume that $\ffield/\field(x)$ is inseparable. Then by the general theory of inseparable extensions, we can find an intermediate field $E$ such that $E/\field(x)$ is separable and $\ffield/E$ is purely inseparable. Then by \cite[Proposition 3.10.2]{stichtenoth_algebraic_2009}, $\field(x) \subseteq \ffield^p=\{f^p \mid f \in F\}$, where $p$ is the characteristic. Hence $x=y^p$ for some $y \in \ffield.$ Since $x$ has a pole at $\Pinf$ only of order $\mu$, this would imply that the function $y$ also has a pole at $\Pinf$ only of order $\mu/p$. This gives a contradiction with the minimality of $\mu$.

The above implies that the Hurwitz genus formula (see for example \cite[Corollary 3.4.14]{stichtenoth_algebraic_2009}) applies to the extension $\ffield/\field(x)$. To prove the lemma, we estimate the genus of $\ffield$ using this formula. %For a place $P$ of $\ffield$, we denote by $Q=P\cap \field(x)$ its restriction to $\field(x)$.
We write $Q_\infty=\Pinf \cap \field(x)$, which is the place at infinity of the rational function field $\field(x)$. Since $v_{\Pinf}(x)=-\mu$ and $v_{Q_\infty}(x)=-1$, we see that $e(\Pinf|Q_{\infty})=\mu$. Now suppose we have $N>3g$ rational places distinct from $\Pinf$, say $P_1,\dots,P_N$. We write $Q_i=P_i \cap \field(x)$ for their restrictions to $\field(x)$. For these rational places, we have $$v_{P_i}(x-x(P_i))=e(P_i|Q_i)v_{Q_i}(x-x(P_i))=e(P_i|Q_i).$$
Suppose that for every rational place $P_i$ it holds that $e(P_i | Q_i) \geq 2$.

Since $\mu = [\ffield : \field(x)]$ by \cite[Theorem 1.4.11]{stichtenoth_algebraic_2009}, the Hurwitz genus formula combined with the estimate $d(P_i|Q_i) \ge e(P_i|Q_i)-1$ implies that
%  \[
%    \Diff(\ffield/\field(x)) = \sum_{Q \in \places_{\ffield}} d(P_x|P)P,
%  \]
  \begin{align*}
    2g - 2
    &\ge -2 [\ffield : \field(x)] + d(\Pinf|Q_\infty)+\sum_{i=1}^N d(P_i|Q_i) \\
    &\geq -2 \mu + (\mu - 1) + N(2-1) \ .
  \end{align*}
Since $\mu \leq g + 1$, we conclude that $N \leq 3g$, a contradiction.
Hence for one of the places $P_1,\dots,P_N$ we have $v_{P_i}(x-x(P_i))=1.$
\end{proof}

To motivate \cref{lem:x_local_parameter}, recall from \cref{lem:more_rational_places} that by extending our base field $\field$ to $\mathbb{F}_{q^e}$, we can easily ``create'' as many new rational places as we need without compromising our target complexity. By doing this, we can ensure that there exists a function $x \in \L(\mu \Pinf)$ which is also a local parameter of some rational place $P_0$ not in $\supp(G)$. As we will see in Section \ref{sec:algorithms}, membership of $x \in \L(\mu \Pinf)$ allows us to impose an $\field[x]$-module structure on the interpolation step of Guruswami-Sudan decoding. In Section \ref{sec:algorithms} we will use the assumption that $x$ is a local parameter of $P_0$ to represent certain functions in $F$ as power series in $\field[\![x]\!]$, which allows us to solve the root-finding step efficiently.
%This lemma will later be used to assume that there exists a rational place $P$ of $\ffield$ for which $x$ is a local parameter. Indeed, if the conditions of the lemma are satisfied, we can find a rational place such that $x-x(P)$ is a local parameter. Since $x-x(P)$ also has a pole at $\Pinf$ only of the same order as $x$, we can let $x-x(P)$ play the role of $x$.

Next we consider a lemma showing that we can assume that the divisor $G$ used to define the AG code $\code$ is effective unless the code is degenerate. We call a code $\mathcal C$ degenerate if there exists $i$ such that $c_i=0$ for any codeword $c=(c_1,\dots,c_n) \in \mathcal C$. In particular the trivial code containing only the zero codeword is degenerate.

\begin{lemma}\label{lem:G_effective}
Let the function field $\ffield$ and divisors $G$ and $D$ be as before. Then either, the AG code $\code$ is degenerate or $\mathcal{C}_{\mathcal L}(D,\Con(G))$ is monomially equivalent over $\mathbb{F}_{q^e}$ with $e \ge 1+\lceil\log_q(n)\rceil$, to an AG code $\mathcal{C}_{\mathcal L}(D,G^\prime)$, where $G^\prime$ is an effective divisor of $F\mathbb{F}_{q^e}$ of degree $\deg(G)$.
\end{lemma}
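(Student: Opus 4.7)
The plan is to produce an auxiliary function $h \in \L(\Con(G))$ with $h(P_i) \neq 0$ for all $i = 1, \ldots, n$, and then use it to twist the code. Given such an $h$, set $G^\prime := \Con(G) + (h)$. This divisor is effective by the defining property of $\L(\Con(G))$, its support stays disjoint from $\supp(D)$ (since both $\Con(G)$ and $(h)$ have valuation zero at every $P_i$), and $\deg(G^\prime) = \deg(G)$ because $\deg((h)) = 0$ and $\deg(\Con(G))=\deg(G)$. The map $f \mapsto f/h$ is then an $\mathbb{F}_{q^e}$-linear bijection $\L(\Con(G)) \to \L(G^\prime)$, and at the level of evaluation it simply rescales the $i$-th coordinate by the nonzero scalar $1/h(P_i) \in \mathbb{F}_{q^e}$; this is exactly a monomial equivalence between $\mathcal{C}_{\L}(D, \Con(G))$ and $\mathcal{C}_{\L}(D, G^\prime)$.

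It remains to exhibit such an $h$, and this is where the case split enters. I would first observe that if $\code$ is not degenerate, then $\mathcal{C}_{\L}(D, \Con(G))$ is not degenerate either: the natural inclusion $\ffield \hookrightarrow \ffield\mathbb{F}_{q^e}$ identifies $\L(G)$ with an $\field$-subspace of $\L(\Con(G))$ on which evaluation at a rational place is unchanged, so $\code \subseteq \mathcal{C}_{\L}(D, \Con(G))$ as subsets of $\mathbb{F}_{q^e}^n$, and any coordinate that vanishes identically on the ambient code vanishes on the subcode. Consequently, for each $i$ the $\mathbb{F}_{q^e}$-subspace
\[
  H_i := \{ f \in \L(\Con(G)) : f(P_i) = 0 \}
\]
is a \emph{proper} subspace of $\L(\Con(G))$.

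The main obstacle, and the only place where the lower bound on $e$ is used, is finding a single $h$ outside \emph{all} of the $H_i$'s simultaneously. I would handle this by the standard union-of-subspaces count over $\mathbb{F}_{q^e}$: each proper subspace $H_i$ has cardinality at most $|\L(\Con(G))|/q^e$, so
\[
  \Bigl| \bigcup_{i=1}^n H_i \Bigr| \;\le\; \frac{n \, |\L(\Con(G))|}{q^e} \;<\; |\L(\Con(G))|
\]
whenever $q^e > n$. The hypothesis $e \ge 1 + \lceil \log_q n \rceil$ gives $q^e \ge q \cdot q^{\lceil \log_q n \rceil} \ge qn > n$, so an $h$ avoiding every $H_i$ exists and the construction in the first paragraph completes the proof.
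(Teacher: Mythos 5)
Your proof is correct and follows essentially the same route as the paper: both arguments produce a function of full evaluation weight by a counting argument over $\mathbb{F}_{q^e}$ (your union-of-proper-subspaces bound in $\L(\Con(G))$ is the same count the paper performs directly on the code via $(q^e)^k \le n(q^e)^{k-1}$), and both then twist by $G' = \Con(G) + (h)$ to obtain the monomial equivalence. The only cosmetic difference is that you phrase the rescaling via the bijection $f \mapsto f/h$ rather than writing down the column multipliers directly.
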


\begin{proof}
Consider the finite field extension $\mathbb{F}_{q^e}/\field$ and for convenience, let us write $\mathcal C=\mathcal{C}_{\mathcal L}(D,\Con(G))$ as well as $\mathcal C_i =\{c \in \mathcal C \mid c_i=0\}.$ If $\code$ is nondegenerate, then so is $\mathcal C$. In this case $\mathcal C \neq \mathcal C_i$ for all $i$.
If every codeword in $\mathcal C$ has at least one zero coordinate, then $\mathcal C = \bigcup_{i=1}^n \mathcal C_i$, which implies that $(q^e)^k \le n (q^e)^{k-1}$ with $k=\dim \mathcal{C}.$ We see that in this case $q^e \le n$, implying $e \le \log_q(n).$ This contradiction shows that $\mathcal C$ contains a codeword of full Hamming weight $n$, say $c=\ev_D(\tilde{f})$ for some $\tilde{f} \in \L(\Con(G))$. Since by construction $\tilde{f}(P_i) \neq 0$ for all $i$, we see that the codes  $\mathcal C$ and $\mathcal{C}_{\mathcal L}(D,\Con(G)+(\tilde{f}))$ are monomially equivalent using the map $(c_1,\dots,c_n) \mapsto (\tilde{f}(P_1)c_1,\dots,\tilde{f}(P_n)c_n)$. Note that the divisor $G^\prime=\Con(G)+(\tilde{f})$ is effective and has support disjoint from $D$.
\end{proof}

Degenerate codes are not very interesting from the error-correcting point of view. Indeed, if the $i$-th coordinate of all codewords is zero, it is trivial to correct errors in that position. On the other hand that position does not carry any information, so one might as well consider the punctured code where such a position has been removed, which will have the same dimension and minimum distance. Decoding a degenerate code can therefore be reduced using puncturing to decoding a nondegenerate code. Note that since the codes $\mathcal{C}_{\mathcal L}(D,\Con(G))$ and $\mathcal{C}_{\mathcal L}(D,G^\prime)$ are monomially equivalent, any (list) decoding algorithm for $\mathcal{C}_{\mathcal L}(D,G^\prime)$ immediately gives a (list) decoding algorithm for $\mathcal{C}_{\mathcal L}(D,\Con(G))$. The added complexity is that of dividing and multiplying with the column multipliers $\tilde{f}(P_i)$, which only costs $\bigO(n)$ operations in $\mathbb{F}_{q^e}$ and hence $\softO(ne)$ operations in $\field$. Moreover, as we will see, we will be able to choose $e$ so small that it will not affect the decoding complexity at all in the $\softO$ notation.

Now we state the simplifying assumptions and notation that will be used in the remainder of this article. %In particular, from now on we will use the symbols $D$, $G$, $P_1,\dots,P_n$, $\Pinf$, $P_0$, $x$ and $\mu$.

\begin{enumerate}
\item We assume that $G$ is an effective divisor, whose degree satisfies equation \eqref{eq:degree_bounds_G}.
\item We assume that apart from the rational places in $D=P_1+\cdots+P_n$, the function field $\ffield$ has at least one more rational place $\Pinf$. The place $\Pinf$ may or may not be in $\supp(G)$.
\item There exists a rational place $P_0$ of $\ffield$ which has $x$ as a local parameter, where $x \in \ffield$ is a function with pole at $\Pinf$ only of minimal pole order $\mu$. The place $P_0$ may be in $\supp(D)$, but is not in $\supp(G)$.
\end{enumerate}

Let us quickly assess the size of the needed extension degree $e$ in order to satisfy all three item simultaneously. Although one likely can do better, for our purposes it is sufficient to pick $e=e_1e_2e_3$, where $e_1,e_2,e_3$ are given below: to satisfy the first item, we extend $\field$ to $\mathbb{F}_{q^{e_1}}$, where $e_1  = 1+\lceil\log_q(n)\rceil$, using Lemma \ref{lem:G_effective}. To satisfy the second item, we apply Lemma \ref{lem:more_rational_places} with $N=n+1$. Hence we can choose $e_2=\lceil2\log_{q^{e_1}} \max\{n+1,2g + 1\}\rceil$ and extend $\mathbb{F}_{q^{e_1}}$ to $\mathbb{F}_{q^{e_1e_2}}$. For the third item, we need apart from $\Pinf$ and possible rational places in $\supp(G)$, an additional $3g+1$ rational places. Since $G$ is effective, we can apply Lemma \ref{lem:more_rational_places} with $N=1+\deg(G)+3g+1$, so using equation \eqref{eq:degree_bounds_G}, we can choose $e_3=\lceil 2\log_{q^{e_1e_2}}(5g + 1+n)\rceil$ extending $\mathbb{F}_{q^{e_1e_2}}$ to $\mathbb{F}_{q^{e}}$.
Using that $\log_{q^f}(A)=\log_q(A)/f$, it is easy to see that
\begin{align*}
e=e_1e_2e_3 & \le 2\log_{q}(5g + 1+n)+e_1e_2 \\
 & \le 2\log_{q}(5g + 1+n)+2\log_{q} \max\{n+1,2g + 1\}+e_1\\
 & \le 2\log_{q}(5g + 1+n)+2\log_{q} \max\{n+1,2g + 1\}+\log_q(n)+2.
\end{align*}
Hence the overall conclusion is that in terms of complexity only a logarithmic factor in $n+g$ is introduced when reducing from the general case to the simpler setting. In the remainder of this article, instead of writing $\mathbb{F}_{q^{e}}$, we will simply write $\field$ and assume $q$ is large enough so that all three simplifying assumptions stated above are satisfied.

\subsection{Shifted Popov forms of polynomial matrices}
Our decoder relies on efficient algorithms for (free) $\field[x]$-submodules of $\field[x]^m$; in the current subsection, we present well known results and definitions that we need needed for our use cases. For a comprehensive introduction, the reader is referred to \cite{vincent_neiger_bases_2016} and the references within.

We begin with a definition which, among other things, allows us to measure ``size'' of elements in $\field[x]^m$.

\begin{definition}
  \label{def:shifted-deg}
  For any polynomial vector $\vec{v} = (v_1,\dots,v_m) \in \field[x]^m$ and any $\vec{s} = (s_1,\dots,s_m) \in \ZZ^m$ (which we refer to as a \emph{shift}), we define the \emph{$\vec{s}$-degree} of $\vec{v}$ as
  \[
    \deg_{\vec{s}} \vec{v} = \max_k \{ \deg v_k + s_k \} \ .
  \]
  Furthermore, if $k \in \{1,\dots,m\}$ is maximal such that $\deg v_k + s_k = \deg_{\vec{s}} \vec{v}$, then we say that $v_k$ is the \emph{$\vec{s}$-pivot} of $\vec{v}$, and $k$ is its \emph{$\vec{s}$-pivot index}. If $\vec{s} = 0$, then we might omit writing $\vec{s}$ in the above notation, i.e. we might simply write: \emph{pivot}, \emph{pivot index} and \emph{degree}, denoting the latter by $\deg \vec{v} := \deg_{\vec{0}} \vec{v}$.pp
\end{definition}

Any $\field[x]$-basis of a submodule $\V \subseteq \field[x]^m$ of rank $m$ can be described using a nonsingular polynomial matrix $\mat{V} \in \field[x]^{m \times m}$ by identifying the basis elements with the rows of $\mat{V}$. This way, $\V$ is viewed as the $\field[x]$-row space of $\mat{V}$. We will be interested in obtaining the basis whose elements are ``smallest possible''; the following definition makes this notion precise in the context of polynomial matrices.

\begin{definition}
  Given a shift $\vec{s} \in \ZZ^{m}$, a nonsingular matrix $\mat{P} \in \field[x]^{m \times m}$ is said to be in \emph{$\vec{s}$-Popov form} if all of the $\vec{s}$-pivots of its rows lie on the diagonal, are monic and have degrees strictly greater than all other entries in their respective columns. Furthermore, if $\mat{P}$ shares its $\field[x]$-row space with some matrix $\mat{V} \in \field[x]^{r \times m}$, where $m \le r$, then $\mat{P}$ is said to be \emph{the $\vec{s}$-Popov form} of $\mat{V}$.
\end{definition}

Below, we summarize a few important structural properties of shifted Popov forms.

\begin{proposition}[{\cite[Section 1.1]{vincent_neiger_bases_2016}}]
  \label{prop:popov-properties}
  For any nonsingular matrix $\mat{V} \in \field[x]^{m \times m}$ and any shift $\vec{s} \in \ZZ^m$, there exists a unique matrix $\mat{P} \in \field[x]^{m \times m}$ in $\vec{s}$-Popov form having the same $\field[x]$-row space as $\mat{V}$. Furthermore, $\mat{P}$ has minimal shifted row degrees in the following sense: for any $\mat{V} \in \field[x]^{m \times m}$ with the same row space as $\mat{P}$, there exists a bijection between the rows of the two matrices such that the $\vec{s}$-degree of any row of $\mat{V}$ is no smaller than that of the corresponding row of $\mat{P}$. Finally, for any nonzero vector $\vec{v} \in \field[x]^{1 \times m}$ in the row space of $\mat{P}$ with $\vec{s}$-pivot index $k$ it holds that $\deg_{\vec{s}} \vec{v} \ge \deg_{\vec{s}} \vec{p}^{(k)}$, where $\vec{p}^{(k)}$ denotes the $k$-th row of $\mat{P}$.
\end{proposition}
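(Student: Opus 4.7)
The plan is to adapt the classical proof of the (unshifted) Popov form theorem to the shifted setting, making explicit the role of the shift throughout. One could formally reduce to the unshifted case by scaling columns by powers of $x$ (after translating the shifts to be non-negative by adding a common constant, which changes neither the pivot structure nor any of the comparison inequalities we care about). I would however carry out the arguments directly, since the shifted pivot dominance is what drives the uniqueness and minimality statements.

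For existence, start with any nonsingular $\mat{V}_0 \in \field[x]^{m \times m}$ whose row space is $\V$, and perform a weak reduction process: while two distinct rows $\vec{u},\vec{w}$ share the same $\vec{s}$-pivot index $k$, replace $\vec{w}$ by $\vec{w} - c\, x^{d} \vec{u}$ for suitable $c \in \field$ and $d \in \ZZ_{\ge 0}$ to cancel the leading contribution in coordinate $k$. This strictly decreases $\deg_{\vec{s}} \vec{w}$; since the row space has full rank the sum of shifted row degrees is bounded below, so the process terminates. After permuting the rows to place the pivots on the diagonal and rescaling each row so that its pivot is monic, one obtains a weak $\vec{s}$-Popov form. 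A final round of elementary row operations uses each diagonal pivot to kill any entry of equal or larger degree that survives in its pivot column, yielding a matrix in $\vec{s}$-Popov form whose row space is still $\V$.

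The crux of the remaining assertions is the \emph{predictable degree property} in the shifted setting: if $\mat{P}$ is in $\vec{s}$-Popov form with rows $\vec{p}^{(1)},\dots,\vec{p}^{(m)}$ and $\vec{v} = \sum_j f_j \vec{p}^{(j)}$ with not all $f_j$ zero, then
\[
  \deg_{\vec{s}} \vec{v} = \max_{j : f_j \neq 0} \bigl\{ \deg f_j + \deg_{\vec{s}} \vec{p}^{(j)} \bigr\}.
\]
The reason is that for the index $j^\star$ achieving the maximum on the right, the contribution of $f_{j^\star} \vec{p}^{(j^\star)}$ in column $j^\star$ has $\vec{s}$-degree equal to the maximum, while by the column-dominance condition every other row contributes strictly smaller $\vec{s}$-degree in column $j^\star$, so no cancellation is possible. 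From this, the minimality assertions of the proposition follow: for any nonzero $\vec{v}$ in the row space with $\vec{s}$-pivot index $k$, the predictable degree property combined with the observation that only $\vec{p}^{(k)}$ contributes to the pivot column with maximal $\vec{s}$-degree forces $f_k \neq 0$, hence $\deg_{\vec{s}} \vec{v} \ge \deg_{\vec{s}} \vec{p}^{(k)}$; the second statement is this applied row by row to any competing basis matrix via the bijection matching pivot indices. Uniqueness then follows by writing each row of a second $\vec{s}$-Popov form $\mat{P}'$ in terms of the rows of $\mat{P}$: the predictable degree property forces a single term with constant coefficient, and the monic-pivot-on-diagonal normalization pins that constant to $1$, after which the dominance of the diagonal pivot in its column forces $\mat{P} = \mat{P}'$.

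The main obstacle is the predictable-degree argument itself, which is the one place where shifts genuinely complicate the classical story: one must track that a summand $f_j \vec{p}^{(j)}$ of maximal $\vec{s}$-degree cannot be canceled in column $j$ by other summands, and this requires the strict inequality between the pivot degree and off-diagonal entries in that column to propagate correctly under arbitrary integer shifts. Making these inequalities robust to negative $s_k$ is the cleanest point at which to invoke the normalization trick of translating the shift by a common constant, after which all the relevant degree bookkeeping becomes non-negative and the usual arguments from the unshifted theory go through verbatim.
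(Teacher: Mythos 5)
The paper does not prove this proposition at all---it is imported verbatim from the cited reference---so your proof stands on its own. Your overall architecture (weak reduction for existence, the shifted predictable degree property (PDP) for minimality and uniqueness) is the standard and correct route, and the reductions of the second and third claims of the proposition to the PDP are essentially right. The problem is in your justification of the PDP itself, which you correctly flag as the crux. You pick an index $j^\star$ attaining $\max_j\{\deg f_j + \deg_{\vec{s}}\vec{p}^{(j)}\}$ and assert that in column $j^\star$ every other summand $f_j P_{j,j^\star}$ has strictly smaller degree than $f_{j^\star}P_{j^\star,j^\star}$ ``by the column-dominance condition.'' Column dominance only gives $\deg P_{j,j^\star} < \deg P_{j^\star,j^\star}$; since $\deg f_j$ may exceed $\deg f_{j^\star}$, the products can tie and cancel. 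Concretely, over $\mathbb{F}_2$ with $\vec{s}=\vec{0}$, the matrix $\mat{P}=\left(\begin{smallmatrix}x^2 & 1\\ x & x\end{smallmatrix}\right)$ is nonsingular and in Popov form (pivots $x^2$ and $x$ on the diagonal, column dominance holds), and with $f_1=1$, $f_2=x$ both rows attain the maximum $M=2$; choosing $j^\star=1$, the column-$1$ contributions are $x^2$ and $x\cdot x=x^2$, which cancel completely. The PDP conclusion is still true here (column $2$ witnesses it), but your argument does not establish it.

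The standard repair is to take $j^\star$ to be, among the indices attaining the maximum, the one with the \emph{largest} $\vec{s}$-pivot index (for a Popov matrix, simply the largest such $j$), and to argue in column $k=j^\star$ using the defining maximality of the pivot index: for every other contributing row $j$ one has $\deg P_{j,k}+s_k < \deg_{\vec{s}}\vec{p}^{(j)}$, either because $j$ does not attain the maximum or because its pivot index is $<k$, so coordinate $k$ lies strictly past its pivot. That strict inequality, not column dominance, is what forbids cancellation; column dominance is the hypothesis that upgrades weak Popov to Popov and drives uniqueness. This refinement also yields that the $\vec{s}$-pivot index of $\sum_j f_j\vec{p}^{(j)}$ equals that largest $j^\star$, which is exactly what your proof of the third claim ($f_k\neq 0$ when $\vec{v}$ has pivot index $k$) implicitly uses; as written, the claim that ``only $\vec{p}^{(k)}$ contributes to the pivot column with maximal degree'' is circular without it. A second, minor slip: in the existence argument the transformation $\vec{w}\mapsto\vec{w}-c\,x^{d}\vec{u}$ need not strictly decrease $\deg_{\vec{s}}\vec{w}$ (only the degree of the pivot entry drops; another coordinate may still attain the old value), so termination requires the usual refinement, e.g.\ that the pair $(\deg_{\vec{s}}\vec{w},\ \text{pivot index})$ decreases lexicographically.
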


We conclude this subsection with a few complexity bounds.

\begin{proposition}[{\cite[Theorem 1.3]{neiger_computing_2017}}]
  \label{prop:popov-cost}
  There is a deterministic algorithm which for any shift $\vec{s} \in \ZZ^m$ computes the $\vec{s}$-Popov form of any nonsingular matrix $\mat{V} \in \field[x]^{m \times m}$ using $\softO(m^\omega\deg \mat{M})$ operations in $\field$, where $\deg \mat{M}$ denotes the maximal degree among all entries in $\mat{M}$.
\end{proposition}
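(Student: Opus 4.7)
My plan is to split the problem into two layers: handling arbitrary shifts, and the actual fast reduction to a minimal basis. The cleanest path is to first reduce $\vec{s}$-Popov computation to an $\vec{s}$-reduction problem and then canonicalize. More concretely, I would exploit unimodularity: any two bases of the same row space differ by a unimodular left multiplier, and by \cref{prop:popov-properties} the $\vec{s}$-Popov form is the unique such basis whose $\vec{s}$-pivots lie on the diagonal, are monic, and strictly dominate their columns. So once any $\vec{s}$-weak Popov basis is found, normalization to Popov form is purely linear-algebraic over $\field[x]$: sort rows by $\vec{s}$-pivot index, clear off-diagonal entries in each pivot column by suitable $\field[x]$-combinations, and rescale the pivots to be monic. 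With careful degree bookkeeping this final stage costs $\softO(m^\omega \deg \mat{V})$.

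For the core reduction, I would follow the approximant-basis paradigm. The key observation is that the $\field[x]$-row space of a nonsingular $\mat V \in \field[x]^{m \times m}$ with $\deg \mat V \le d$ can be captured as the left kernel of $\mat V$ viewed modulo a sufficiently high power $x^N$ with $N \in \bigO(md)$, after an appropriate change of variable to place the information at the origin. Such a kernel can be realized as a minimal approximant basis, which is computable by the Giorgi--Jeannerod--Villard divide-and-conquer PM-Basis algorithm using $\softO(m^\omega d)$ operations in $\field$ via fast polynomial matrix multiplication. Running this with the shift $\vec{s}$ yields an $\vec{s}$-reduced basis of the row space, which feeds into the normalization step above.

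The technical difficulty is that PM-Basis achieves the target cost only when the shift is reasonably balanced relative to the working degree. To handle arbitrary shifts $\vec s$ without blowing up complexity, the standard device is partial linearization: replace each column or row whose shift/degree is disproportionately large by several of smaller degree, producing a larger matrix with essentially uniform shift whose Popov form is related to the original by a deterministic un-linearization. The bookkeeping must ensure that the expanded row dimension stays in $\bigO(m)$ and the effective per-entry degree in $\bigO(d)$, so that the total cost remains $\softO(m^\omega d)$.

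The main obstacle, and where I would spend most of the proof, is justifying that these transformations commute correctly with the shifted Popov structure: that the approximant basis produced by PM-Basis with shift $\vec s$ actually spans the row space of $\mat V$ rather than a proper submodule, and that the partial linearization followed by un-linearization returns precisely the Popov form predicted by the uniqueness clause of \cref{prop:popov-properties}, not merely some reduced basis. This amounts to tracking the shifted minimal degree profile through each stage and invoking the minimality assertion to identify the output.
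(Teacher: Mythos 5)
The paper does not actually prove this proposition: it is imported verbatim as Theorem~1.3 of the cited reference, so there is no internal argument to match yours against. Judged on its own, your outline follows the broad strategy of that literature (approximant/relation bases computed by divide-and-conquer polynomial matrix multiplication, partial linearization, normalization to the canonical form), but it leaves unresolved the one point that makes the cited theorem nontrivial.

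The statement promises cost $\softO(m^\omega \deg\mat V)$ for an \emph{arbitrary} shift $\vec s \in \ZZ^m$, with no dependence on the amplitude $\max_i s_i - \min_i s_i$. A direct PM-Basis call with shift $\vec s$ has cost growing with that amplitude, and partial linearization in the style of Storjohann controls unbalanced \emph{degrees in the matrix}, not an unbounded shift: after linearizing, the approximation order and the number of added rows/columns would still scale with the spread of $\vec s$. The essential device in the cited work is a two-stage structure: first compute the $\vec s$-pivot degrees $\vec\delta$ of the Popov form\,--\,these satisfy $\sum_i \delta_i = \deg\det\mat V \le m \deg\mat V$ regardless of $\vec s$\,--\,and only then run a single approximant- or relation-basis computation with the degree-controlled shift $-\vec\delta$, from which the $\vec s$-Popov form is obtained by normalization. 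Your closing remark about ``tracking the shifted minimal degree profile'' gestures at this, but without making the pivot-degrees-first structure explicit, the claimed complexity for arbitrary $\vec s$ does not follow from the steps you describe. A smaller inaccuracy: the row space of a nonsingular $\mat V$ is not the left kernel of $\mat V$ modulo $x^N$; the classical reduction is rather that $\vec p$ lies in the row space if and only if $\vec p\,\mathrm{adj}(\mat V) \equiv 0 \pmod{\det\mat V}$, and even this form needs care to stay within the target cost.
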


\begin{proposition}[\cite{zhou_computing_2013}]
  \label{prop:row-basis-cost}
  There is a deterministic algorithm which for any matrix $\mat{V} \in \field[x]^{r \times m}$ with $m \le r$ computes an $\field[x]$-basis of the row space of $\vec{V}$ using $\softO(rm^{\omega - 1} \deg \mat{V})$ operations in $\field$.
\end{proposition}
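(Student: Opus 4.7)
The plan is to follow the strategy of \cite{zhou_computing_2013}, reducing row-basis computation to the computation of minimal shifted kernel bases of polynomial matrices, which in turn is handled via minimal approximant bases (PM-Basis in the style of Giorgi-Jeannerod-Villard). The key structural observation is that a row basis of $\mat{V}\in\field[x]^{r\times m}$ with rank $\rho\le m$ can be read off from a minimal left kernel basis $\mat{K}\in\field[x]^{(r-\rho)\times r}$ of $\mat{V}$: by unimodular completion $\mat{K}$ extends to a unimodular $\mat{U}\in\field[x]^{r\times r}$ whose last $\rho$ rows, when multiplied by $\mat{V}$, yield a generating set for the row space, and in fact an $\field[x]$-basis.

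To make the complexity linear in $r$, I would apply a divide-and-conquer on the rows. Split $\mat{V}$ into blocks of roughly $m$ rows; recursively compute a row basis of each block (so each recursive call handles a problem of size at most $\bigO(m)\times m$), and then merge two $m\times m$ bases at a time by computing the row basis of their $2m\times m$ concatenation. At the leaves, the kernel-basis approach via PM-Basis, at an order dictated by the sum of column degrees, gives the target cost $\softO(m^\omega d)$ for an $m\times m$ block of degree $d = \deg\mat{V}$. Summing over the $\bigO(r/m)$ leaves produces the claimed $\softO(rm^{\omega-1}d)$, while the recursive merging stages can be folded into the same budget using that each merging step reduces the row count by a factor of two and handles matrices whose total sum of pivot-degrees is conserved (up to logarithmic factors) across levels.

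The main obstacle I expect is the degree accounting during the merging phase: although the number of rows shrinks geometrically, the degrees in the intermediate $m\times m$ Popov or shifted-reduced bases can grow. One must invoke the predictable-degree property for shifted reduced forms to show that the sum of the $\vec{s}$-column degrees at each recursive level is still $\bigO(md)$, so every merge costs $\softO(m^\omega d)$ and the geometric series telescopes to the desired bound. A secondary subtlety is rank-deficient input, where partial row bases have fewer than $m$ rows; this is managed by propagating the effective rank through the recursion and calibrating the PM-Basis order at the base case accordingly, exactly as in \cite{zhou_computing_2013}. Once these two points are settled, the complexity estimate follows routinely from the known cost of PM-Basis.
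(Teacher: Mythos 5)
First, note that the paper does not prove this proposition at all: it is imported verbatim from \cite{zhou_computing_2013} as a black box, so there is no internal argument to compare yours against. Judged on its own terms, your structural lemma is correct and is indeed the key lemma of the cited work: if $\mat{K}$ is a minimal left kernel basis of $\mat{V}$ and $\mat{U}=\left[\begin{smallmatrix}\mat{M}\\ \mat{K}\end{smallmatrix}\right]$ is a unimodular completion, then $\mat{M}\mat{V}$ is a row basis. But your global decomposition differs from Zhou--Labahn's: they recurse on the \emph{short} dimension (halving the columns of $\mat{V}$, i.e.\ the rows of $\mat{V}^{\top}$ in their column-basis formulation), using shifted minimal kernel bases whose shifts track the column degrees, and they never form the completion $\mat{M}$ explicitly. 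You instead recurse on the \emph{long} dimension $r$, computing $\bigO(r/m)$ leaf bases and merging $m\times m$ bases pairwise. That decomposition is not unreasonable, but it is where your argument breaks.

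The gap is the cost of a single merge. What you can guarantee for an intermediate row basis is that the \emph{sum} of its row degrees is $\bigO(md)$ (via the minor/predictable-degree argument you allude to); this does not bound the \emph{maximum} degree by $\bigO(d)$ — a single row of an intermediate Popov basis can have degree as large as $\Theta(md)$. Every tool you have priced the merge with (Popov form or kernel/approximant bases at an order determined by the degrees) is charged by the maximum degree, so a merge may cost $\softO(m^{\omega}\cdot md)=\softO(m^{\omega+1}d)$, and the total becomes $\softO(rm^{\omega}d)$, a factor $m$ too large. Closing this requires partial linearization (splitting a degree-$md$ row into $\bigO(m)$ rows of degree $\bigO(d)$ while preserving the row space up to a controlled transformation), or equivalently the column-degree-aware kernel basis routines of Zhou, Labahn and Storjohann — which is exactly the technical content of \cite{zhou_computing_2013} that your sketch names as ``the main obstacle'' but then defers to the citation rather than resolves. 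The secondary issue you raise (rank deficiency) is comparatively benign; the degree accounting is the real crux, and as written the complexity claim does not follow.
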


Combining \cref{prop:row-basis-cost} with \cref{prop:popov-cost}, we obtain the following:

\begin{corollary}
  \label{cor:rect-popov-cost}
  For any shift $\vec{s} \in \ZZ^m$ and any matrix $\mat{V} \in \field[x]^{r \times m}$ with rank $m \le r$, we can compute the $\vec{s}$-Popov form of $\mat{V}$ using $\softO(rm^{\omega - 1}\deg \mat{V})$ operations in $\field$.
\end{corollary}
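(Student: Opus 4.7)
The plan is to chain the two preceding propositions. Given $\mat{V} \in \field[x]^{r \times m}$ of rank $m$, first invoke \cref{prop:row-basis-cost} to produce an $\field[x]$-basis of the row space of $\mat{V}$, encoded as a nonsingular square matrix $\mat{B} \in \field[x]^{m \times m}$; this costs $\softO(rm^{\omega - 1}\deg \mat{V})$ operations in $\field$. Then feed $\mat{B}$ into the algorithm of \cref{prop:popov-cost} to compute the $\vec{s}$-Popov form $\mat{P}$ of $\mat{B}$. Because $\mat{B}$ and $\mat{V}$ share the same row space, \cref{prop:popov-properties} guarantees that $\mat{P}$ is also the unique $\vec{s}$-Popov form of $\mat{V}$, which is what we want.

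The only thing that needs checking is that the second step stays within budget, and this hinges on a degree estimate for $\mat{B}$. The row basis algorithm from \cite{zhou_computing_2013} produces a basis matrix whose degree is bounded (up to logarithmic factors) by $\deg \mat{V}$; once that is in hand, \cref{prop:popov-cost} applied to $\mat{B}$ costs
\[
    \softO(m^\omega \deg \mat{B}) \;\subseteq\; \softO(m^\omega \deg \mat{V}) \;\subseteq\; \softO(r m^{\omega-1} \deg \mat{V}),
\]
where the last inclusion uses the assumption $m \le r$. Adding the two contributions gives the claimed bound.

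The main obstacle is therefore the degree control on the output of \cref{prop:row-basis-cost}: if one can only cite an aggregate bound (such as a bound on the sum of row degrees) rather than an entrywise one, one must either appeal to a version of the row basis statement that delivers a matrix with entry-degree $\le \deg \mat{V}$, or insert an extra reduction step (for instance, weak Popov or shifted reduction) to bring $\deg \mat{B}$ down before calling \cref{prop:popov-cost}. Either way the cost remains within $\softO(rm^{\omega-1}\deg \mat{V})$, so the corollary follows.
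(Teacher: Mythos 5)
Your proof is correct and is exactly the paper's argument: the paper gives no separate proof beyond the sentence ``Combining \cref{prop:row-basis-cost} with \cref{prop:popov-cost}, we obtain the following,'' i.e.\ first extract a nonsingular $m \times m$ row basis via \cref{prop:row-basis-cost} and then put it in $\vec{s}$-Popov form via \cref{prop:popov-cost}, using $m \le r$ to absorb the $\softO(m^\omega \deg)$ term. Your added remark about controlling the degree of the intermediate basis is a legitimate point the paper leaves implicit (the minimal row basis of \cite{zhou_computing_2013} indeed has row degrees bounded by those of $\mat{V}$), so the argument goes through.
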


\section{Representation of function field elements}\label{sec:representation}

%Let $\Pinf$ be a rational place of $\ffield$.
For any divisor $A$ of $\ffield$, let $\Ya(A) = \bigcup_{m=-\infty}^{\infty}\L(m\Pinf + A)$ and let $\Ya = \Ya(0)$. Note that $\Ya$ is a ring and $\Ya(A)$ a $\Ya$-module. In fact more can be said: $\Ya$ is a Dedekind domain and $\Ya(A)$ is a fractional ideal of $\Ya$, \cite[Section 1.2]{NX01}.

Modules of the form $\Ya(A)$ are essentially already considered for decoding in \cite{KP95}, also see \cite{beelen_decoding_2008,lee_unique_2014}.
As in \cite{lee_unique_2014}, for any nonzero $a \in \Ya(A)$ we denote by $\delta_A (a)$ the smallest integer $m$ such that $a \in \L(m\Pinf + A)$, i.e. $\delta_A(a) = -\val(a) - \val(A)$ and let $\delta(a) = \delta_0(a) = -\val(a)$. We will take as convention that $\delta_A(0) = - \infty$. Note that for any $a \in \Ya(A)$ and $b \in \Ya(B)$, one has $\delta_{A+B}(ab)=\delta_A(a)+\delta_B(b).$

It is well known that any fractional ideal of a Dedekind domain can be generated by at most two elements \cite[Corollary 2 to Theorem 4]{frohlich-taylor1991book}, but for our purposes we need to know some properties of these generators.
\begin{lemma}\label{lem:YaA-over-Ya}
Let $A=\sum_{i=1}^t n_iQ_i$ be a divisor of $\ffield$ and write $\frak a= \sum_{i} \deg Q_i$.
Then $\Ya(A)$ can be generated as a $\Ya$-module by two elements $a_1$ and $a_2$ satisfying $\delta_A(a_1) \le 2g-1-\deg(A)+\mathfrak a$ and $\delta_A(a_2) \le 4g-2-\deg(A)+\mathfrak a$.
\end{lemma}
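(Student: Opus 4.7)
The plan is to construct $a_1$ and $a_2$ in two stages, each time using Riemann-Roch to obtain an element with prescribed valuation behaviour at a chosen finite set of places.

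First, I set $m_1 = 2g - 1 + \mathfrak{a} - \deg(A)$ and $D = m_1 \Pinf + A$, so that $\deg(D - \sum_i Q_i) = 2g - 1$. I consider the $\field$-linear map $\phi : \L(D) \to \bigoplus_{i=1}^t V_i$ with $V_i = \L(D)/\L(D - Q_i)$. Two applications of Riemann-Roch yield $\dim V_i = \deg Q_i$ and $\dim \L(D) - \dim \L(D - \sum_i Q_i) = \mathfrak{a}$. Since the $Q_i$ are distinct, $\ker \phi = \bigcap_i \L(D - Q_i) = \L(D - \sum_i Q_i)$, so the image of $\phi$ has dimension $\mathfrak{a}$, matching $\dim \bigoplus V_i$; hence $\phi$ is surjective. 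Because each $V_i$ is a nonzero $\field$-vector space, I may pick $a_1 \in \L(D)$ mapping to a tuple all of whose components are nonzero, which forces $v_{Q_i}(a_1) = -n_i$ for every $i$ and thus $\delta_A(a_1) \leq m_1 = 2g - 1 - \deg A + \mathfrak{a}$.

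Next, I study the ``extra zeros'' of $a_1$ and use them to set up the analogous construction for $a_2$. Define $E = (a_1) + A - v_{\Pinf}(a_1)\Pinf$; the valuation conditions just established make $E$ effective with support disjoint from $\supp(A) \cup \{\Pinf\}$, and since $\deg(a_1) = 0$, one gets $\deg E = \deg A - v_{\Pinf}(a_1) \leq \deg A + m_1 = 2g - 1 + \mathfrak{a}$. Setting $\mathfrak{e} = \sum_{P \in \supp E} \deg P \leq \deg E$ and $m_2 = 2g - 1 + \mathfrak{e} - \deg A$, I rerun the surjectivity argument with the places of $\supp E$ in place of the $Q_i$: the analogous map is surjective because $\deg(m_2 \Pinf + A - \sum_{P \in \supp E} P) = 2g - 1$. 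This produces $a_2 \in \L(m_2 \Pinf + A)$ with $v_P(a_2) = 0$ for every $P \in \supp E$, so $\delta_A(a_2) \leq m_2 \leq 4g - 2 - \deg A + \mathfrak{a}$.

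Finally, I verify that $\Ya a_1 + \Ya a_2 = \Ya(A)$. Since $\Ya$ is a Dedekind domain, it suffices to show equality after localising at every maximal ideal, i.e.\ that $\min(v_P(a_1), v_P(a_2)) = -v_P(A)$ for every place $P \neq \Pinf$. For $P = Q_i$, $v_P(a_1) = -n_i$ and $v_P(a_2) \geq -n_i$ because $a_2 \in \Ya(A)$; for $P \in \supp E$, $v_P(a_2) = 0 = -v_P(A)$ since $P \notin \supp A$; and for every other non-$\Pinf$ place, $v_P(a_1) = 0 = -v_P(A)$ by the definition of $E$. The main obstacle is the surjectivity of $\phi$, which hinges on the two Riemann-Roch applications being simultaneously tight; the factor-of-two jump in the bound for $\delta_A(a_2)$ reflects exactly the fact that the ``extra zero'' divisor $E$ built up while producing $a_1$ can itself have degree up to $2g - 1 + \mathfrak{a}$, which has to be absorbed into the Riemann-Roch space used to produce $a_2$.
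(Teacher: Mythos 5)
Your proof is correct and follows essentially the same two-stage strategy as the paper's: use Riemann--Roch at degree $2g-1$ to produce $a_1$ with exact valuations along $\supp(A)$, then repeat the argument to produce $a_2$ nonvanishing at the extra zeros of $a_1$, and verify generation place by place via the condition $\min\{v_P(a_1),v_P(a_2)\}=-v_P(A)$. The only differences are presentational: you obtain $a_1$ from surjectivity of the quotient map $\L(D)\to\bigoplus_i\L(D)/\L(D-Q_i)$ rather than by summing individually chosen elements, and you justify the valuation criterion by localising the Dedekind domain $\Ya$ rather than citing Fr\"ohlich--Taylor.
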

\begin{proof}
Prime ideals of $\Ya$ correspond exactly to places of $F$ distinct from $\Pinf$. Therefore, from the proof of Corollary 2 to Theorem 4 in \cite{frohlich-taylor1991book}, we see that two elements $a_1,a_2 \in \Ya(A)$ generate $\Ya(A)$ as $\Ya$-module if and only if for all places $Q_i \in \supp(A)$ distinct from $\Pinf$, we have $\min\{v_{Q_i}(a_1),v_{Q_i}(a_2)\}=-n_i$ and for any other place $Q \neq \Pinf$ of $F$ we have $\min\{v_{Q}(a_1),v_{Q}(a_2)\}=0.$ We will construct two such elements.

Write $m_1=2g-1-\deg(A)+\frak a$. For $j=1,\dots,t$, choose $a_1^{(j)} \in \L(A-\sum_{i \neq j} Q_i+m_1\Pinf) \setminus \L(A-\sum_{i} Q_i+m_1\Pinf)$. Note that such $a_1^{(j)}$ exist, since by the Riemann-Roch theorem, $\dimL(A-\sum_{i \neq j} Q_i+m_1\Pinf)>\dimL(A-\sum_{i} Q_i+m_1\Pinf)$. Defining $a_1=\sum_{j=1}^t a_1^{(j)}$, we see that $v_{Q_i}(a_1)=-n_i$ for $j=1,\dots,t$, while $v_Q(a_1) \ge 0$ for any other place $Q$ distinct from $\Pinf$. In particular $a_1 \in \L(A+m_1\Pinf)$, whence $\delta_A(a_1) \le m_1.$

Now suppose that $Q_{t+1},\dots,Q_{t+s}$ are the zeroes of $a_1$ not in $\supp(A) \cup \{\Pinf\}$. Since $a_1 \in \L(A+m_1\Pinf)$, we see that $\sum_{i=t+1}^{t+s} \deg(Q_i) \le \deg(A)+m_1$. Now define $m_2=2g-1+m_1=4g-2-\deg(A)+\frak a$. Similarly as above, we can construct $a_2 \in \L(A+m_2\Pinf)$, such that $v_{Q_i}(a_2)=0$ for $i=t+1,\dots,t+s$. By construction $\delta_A(a_2) \le m_2$. For $i=1,\dots,t$, we have $v_{Q_i}(a_2) \ge -n_i$ and $v_{Q_i}(a_1)=-n_i$, whence $\min\{v_{Q_i}(a_1),v_{Q_i}(a_2)\}=-n_i$. If $Q \not \in \supp(A) \cup \{\Pinf\}$ is not a zero of $a_1$, then $\min\{v_{Q}(a_1),v_{Q}(a_2)\}=0$, since $v_Q(a_2) \ge 0$. If $Q \not \in \supp(A) \cup \{\Pinf\}$ is a zero of $a_1$, then $v_Q(a_2)=0$, so that also in this case $\min\{v_{Q}(a_1),v_{Q}(a_2)\}=0$. Hence $a_1$ and $a_2$ as constructed above, generated $\Ya(A)$ as a $\Ya$-module.
\end{proof}

As $x \in \Ya \setminus \field$, we can also view $\Ya(A)$ as a free $\field[x]$-module. Following \cite{lee_unique_2014}, we consider a special set of generators of $\Ya(A)$ as $\field[x]$-module, which they called the Ap\'ery system of $\Ya(A)$.

\begin{definition}\label{def:yiA}
  For any divisor $A$ let $\y[A]_i \in \A_i$ be such that $\delta_A(\y[A]_i) \leq \delta_A(a)$ for all $a \in \A_i$, where $i = 0,\dots,\mu-1$ and
  \[
    \A_i = \{ a \in \Ya(A) \mid \delta_A(a) \equiv i \mod \mu \} \ .
  \]
  We also define $\y_i = \y[0]_i$.
\end{definition}

\begin{lemma}
  \label{lem:Fx_basis}
  For any divisor $A$ it holds that
  \begin{enumerate}
  \item $\y[A]_0,\dots,\y[A]_{\mu-1}$ is an $\field[x]$-basis of $\Ya(A)$ and
  \item $-\deg A \leq \delta_A(\y[A]_i) \leq 2g - 1 - \deg(A) + \mu$ for $i = 0,\dots,\mu-1$.
  \end{enumerate}
\end{lemma}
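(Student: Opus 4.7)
The plan is to prove both parts by exploiting how $\delta_A$ interacts with multiplication by $x$: since $v_\Pinf(x) = -\mu$, we have $\delta_A(x^j a) = \delta_A(a) + j\mu$ for any $a \in \Ya(A)$ and $j \in \ZZ_{\ge 0}$, so multiplication by a polynomial in $x$ preserves the residue of $\delta_A$ modulo $\mu$.

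\textbf{Linear independence.} Suppose $\sum_{i=0}^{\mu-1} c_i(x) \y[A]_i = 0$ for some $c_i \in \field[x]$ not all zero. For each $i$ with $c_i \ne 0$, the term $c_i(x)\y[A]_i$ lies in $\A_i$ and has $\delta_A$-value $\delta_A(\y[A]_i) + \mu \deg c_i \equiv i \pmod{\mu}$. Since these residues are all distinct as $i$ ranges over $\{0,\dots,\mu-1\}$, the term of maximal $\delta_A$-value is unique; this maximal value is then also the $\delta_A$-value of the sum, contradicting that the sum is zero.

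\textbf{Spanning.} I would show by strong downward induction on $\delta_A(a)$ that every $a \in \Ya(A)$ lies in $\sum_i \field[x]\,\y[A]_i$. For $a \ne 0$, let $n = \delta_A(a)$ and write $n \equiv i \pmod \mu$; by definition $\delta_A(\y[A]_i) \le n$ and $\delta_A(\y[A]_i) \equiv i \pmod \mu$, so $j = (n - \delta_A(\y[A]_i))/\mu \in \ZZ_{\ge 0}$. Then $x^j\y[A]_i$ also has $\delta_A$-value $n$, and because $\L(n\Pinf + A)/\L((n-1)\Pinf+A)$ is at most one-dimensional over $\field$, we can pick $\alpha \in \field$ so that $a - \alpha x^j\y[A]_i$ has strictly smaller $\delta_A$-value. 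The induction terminates because $a \in \L(\delta_A(a)\Pinf + A) \ne 0$ forces $\delta_A(a) \ge -\deg A$ (otherwise the degree of the divisor $\delta_A(a)\Pinf+A$ would be negative, implying $\L = 0$).

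\textbf{Degree bounds.} The lower bound $\delta_A(\y[A]_i) \ge -\deg A$ is exactly the observation just used: nonzero elements of $\Ya(A)$ cannot have $\delta_A$-value below $-\deg A$. For the upper bound, I would invoke Riemann-Roch: whenever $m \ge 2g - \deg A$, both $m$ and $m-1$ satisfy $\deg(m\Pinf + A), \deg((m-1)\Pinf + A) \ge 2g-1$, so $\dimL(m\Pinf + A) - \dimL((m-1)\Pinf + A) = 1$, yielding an element with $\delta_A$-value exactly $m$. Ranging $m$ over the $\mu$ consecutive integers $2g - \deg A, 2g+1-\deg A, \dots, 2g-1+\mu-\deg A$, we hit every residue class modulo $\mu$ exactly once, so each $\A_i$ contains an element with $\delta_A$-value at most $2g - 1 - \deg A + \mu$, which bounds $\delta_A(\y[A]_i)$.

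The only delicate point is ensuring linear independence is genuinely forced by the residue-class argument rather than some accidental cancellation; the clean resolution is that the $\delta_A$-value of a nonzero element is well-defined and takes distinct residues on distinct $\y[A]_i$, so ``leading terms'' live in disjoint congruence classes and cannot interact. Everything else is a direct application of Riemann-Roch and the integrality of $\delta_A$.
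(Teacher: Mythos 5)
Your proof is correct and follows essentially the same route as the paper's: the strict triangle inequality for $v_{\Pinf}$ (distinct residues of $\delta_A$ modulo $\mu$) gives independence, repeated subtraction of $\alpha x^j \y[A]_i$ gives spanning (your downward induction is the paper's minimal-counterexample argument in disguise), and the two degree bounds come from non-negativity of $\deg(\delta_A(\y[A]_i)\Pinf + A)$ and from Riemann--Roch exactly as in the paper. No gaps.
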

\begin{proof}
The first statement is from \cite{lee_unique_2014}. For the convenience of the reader we give a proof. From the strict triangle inequality for $v_{\Pinf}$, it is clear that the elements $\y[A]_0,\dots,\y[A]_{\mu-1}$ are linearly independent over $\field[x]$. Also, it is clear that $\Y \subseteq \Ya(A)$, where $\Y = \langle \y[A]_0,\dots,\y[A]_{\mu-1} \rangle_{\field[x]}$. If $\Y \neq \Ya(A)$, then there would exist $a \in \Ya(A) \setminus \Y$, such that $\delta_A(a) > -\infty$ is minimal. Write $\delta_A(a) = m\mu + r$ and $\delta_A(\y[A]_{r}) = m'\mu + r$, where $m,m',r \in \ZZ$ with $0 \leq r < \mu$. Note that $m' \leq m$ by definition of $\y[A]_{r}$.
Since
  \[
    \delta_A(x^{m-m'}\y[A]_{r}) = \delta(x^{m-m'}) + \delta_A(\y[A]_{r}) = (m-m')\mu + (m'\mu + r) = \delta_A(a) \ ,
  \]
there exists a constant $\beta \in \field$ such that $\delta_A(c) < \delta_A(a)$, where $c = a - \beta x^{m-m'}\y[A]_{r} \in \Ya(A)$.
The minimality of $\delta_A(a)$ guarantees that $c \in \Y$, however, this would imply that $a = c + \beta x^{m - m'}\y[A]_{r} \in \Y$. Hence $\Y = \Ya(A)$, which is a contradiction.

In the second statement, the lower bound simply follows from the fact that $\y[A]_i \in \L(\delta_A(\y[A]_i)\Pinf + A) \neq \{0\}$. For the upper bound it is sufficient to show that for every integer $m > 2g-1 -\deg(A)$ there exists an $a \in \Ya(A)$ with $\delta_A(a) = m$.
  But indeed, if $m > 2g-1 - \deg(A)$, then $\deg(m\Pinf + A) > 2g - 1$, and so \cite[Theorem 1.5.17]{stichtenoth_algebraic_2009} implies that
  \[
    \L(m\Pinf + A)
    \neq
    \L \big(
    (m-1)\Pinf + A
    \big) \ ,
  \]
  which concludes the proof.
\end{proof}
For later use, we also state the following lemma.
\begin{lemma}
  \label{lem:deg}
  If $a = \sum_{i=0}^{\mu-1} a_i\y[A]_i \in \Ya(A)$, where $a_i \in \field[x]$ and $A$ is a divisor, then
  \[
    \deg a_i \leq \frac{1}{\mu}(\delta_A(a) - \delta_A(\y[A]_i)) \leq \frac{1}{\mu}(\delta_A(a) + \deg A) \ .
  \]
\end{lemma}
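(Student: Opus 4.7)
The plan is to exploit the fact that the basis elements $\y[A]_0, \dots, \y[A]_{\mu-1}$ are separated by their residues of $\delta_A$ modulo $\mu$, which forces the strict triangle inequality to hold cleanly across the decomposition $a = \sum_i a_i \y[A]_i$.

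First, I would establish the identity $\delta_A(a_i \y[A]_i) = \mu \deg(a_i) + \delta_A(\y[A]_i)$ whenever $a_i \neq 0$. This comes from $\delta(x) = \mu$ together with the multiplicativity property $\delta_{A+B}(bc) = \delta_B(b) + \delta_A(c)$ stated at the beginning of the section (applied with $B = 0$ and $b = x^{\deg a_i}$ contributing the leading term, and the strict triangle inequality for $v_{\Pinf}$ killing the lower order contributions of $a_i$). Equivalently, one can write $a_i = \sum_j c_{i,j} x^j$ and note that each nonzero $c_{i,j} x^j \y[A]_i$ has $\delta_A$-value $j\mu + \delta_A(\y[A]_i)$; since these are pairwise distinct integers (all congruent to $i \bmod \mu$), the strict triangle inequality for $v_{\Pinf}$ forces $\delta_A(a_i \y[A]_i)$ to equal the maximum, namely $\mu \deg(a_i) + \delta_A(\y[A]_i)$.

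Next, I would observe that for distinct $i \neq i'$ in $\{0,\dots,\mu-1\}$, the values $\delta_A(a_i \y[A]_i)$ and $\delta_A(a_{i'} \y[A]_{i'})$ are incongruent modulo $\mu$ (they are $\equiv i$ and $\equiv i'$ respectively), hence in particular distinct. Applying the strict triangle inequality once more to the sum $a = \sum_i a_i \y[A]_i$, we obtain
\[
  \delta_A(a) \;=\; \max_{i \,:\, a_i \neq 0} \delta_A(a_i \y[A]_i) \;=\; \max_{i \,:\, a_i \neq 0}\bigl(\mu \deg(a_i) + \delta_A(\y[A]_i)\bigr) \ .
\]
Rearranging yields $\deg(a_i) \leq \tfrac{1}{\mu}(\delta_A(a) - \delta_A(\y[A]_i))$ for every $i$ with $a_i \neq 0$ (and the bound is trivial if $a_i = 0$, since $\deg 0 = -\infty$).

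Finally, the second inequality follows immediately by invoking the lower bound $-\deg A \leq \delta_A(\y[A]_i)$ established in part 2 of \cref{lem:Fx_basis}, which gives $-\delta_A(\y[A]_i) \leq \deg A$. There is no real obstacle here; the only subtle point is justifying the strict triangle inequality at both stages, which relies on the carefully chosen residue structure of the Apéry system encoded in \cref{def:yiA}.
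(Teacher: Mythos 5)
Your proof is correct and follows essentially the same route as the paper's: both rest on the identity $\delta_A(a) = \max_i \delta_A(a_i\y[A]_i)$ via the strict triangle inequality for $v_{\Pinf}$, the multiplicativity $\delta_A(a_i\y[A]_i) = \delta(a_i) + \delta_A(\y[A]_i) = \mu\deg a_i + \delta_A(\y[A]_i)$, and the lower bound $\delta_A(\y[A]_i) \ge -\deg A$ from \cref{lem:Fx_basis}. The only difference is that you spell out in detail why the strict triangle inequality applies (the residues mod $\mu$ are pairwise distinct), which the paper leaves implicit.
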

\begin{proof}
  Simply observe that for $i = 0,\dots,\mu-1$ it holds that
  \[
    \delta_A(a)
    = \max_j \delta_A(a_j\y[A]_j)
    \geq \delta(a_i) + \delta_A(\y[A]_i)
    \geq \delta(a_i) - \deg A \ ,
  \]
  where the equality follows from the strict triangle inequality for $v_{\Pinf}$ and second inequality is given by \cref{lem:Fx_basis}.
  But then
  \[
    \deg a_i = \delta(a_i)/\mu \leq \frac{1}{\mu}(\delta_A(a) - \delta_A(\y[A]_i)) \leq \frac{1}{\mu}(\delta_A(a) + \deg A) \ .
  \]
\end{proof}

\section{Guruswami-Sudan Decoding}\label{sec:Guruswami-Sudan}
In this section, we paraphrase the Guruswami-Sudan list decoding algorithm \cite{guruswami_improved_1999} for $\code$ and formulate it in terms of $\Ya$ modules. For the remainder of this paper fix $s,\ell \in \ZZ_{>0}$, $s \le \ell$, where $s$ is the multiplicity parameter and $\ell$ the designed list size of the Guruswami-Sudan list decoder. The corresponding list decoding radius will be denoted by $\tau$.

\begin{definition}
  Let $P$ be a rational place of $\ffield$, $r \in \field$ and $Q \in \ffield[z]$. We will say that ``$Q$ has a root of multiplicity $s$ at $(P,r)$'' if for any local parameter $\phi$ of $P$, there exist $c_{a,b} \in \field$ such that
  \[
    Q = \sum_{\substack{a,b \geq 0 \\ a + b \geq s}}c_{a,b}\phi^a(z-r)^b
  \]
  with $c_{a,s-a} \neq 0$ for at least one $0 \leq a \leq s$.
\end{definition}
A consequence of this definition is the following:

\begin{lemma}
  \label{lem:multzero_at_fun}
  If $Q \in \ffield[z]$ has a root of multiplicity $s$ at $(P, r)$ and $f \in \ffield$ is such that $f(P) = r$, then
  $\val[P](Q(f)) \geq s$.
\end{lemma}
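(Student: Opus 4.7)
The plan is to apply the definition of multiplicity directly and then bound the valuation of each term in the resulting sum. Fix any local parameter $\phi$ of $P$ and use the hypothesis to write
\[
  Q = \sum_{\substack{a,b \geq 0 \\ a + b \geq s}} c_{a,b} \phi^a (z-r)^b \ ,
\]
with $c_{a,b} \in \field$. Substituting $z = f$ then yields
\[
  Q(f) = \sum_{\substack{a,b \geq 0 \\ a + b \geq s}} c_{a,b} \phi^a (f-r)^b \ .
\]
Since $\phi$ is a local parameter of $P$, $\val[P](\phi) = 1$, and since $f(P) = r$, $\val[P](f-r) \geq 1$. Therefore every nonzero term satisfies $\val[P](c_{a,b} \phi^a (f-r)^b) \geq a + b \geq s$, and the non-Archimedean property of $\val[P]$ gives the desired $\val[P](Q(f)) \geq s$.

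The only subtle point is the meaning of the expansion: because $c_{a,b} \in \field$ while the coefficients of $Q$ live in $F$, the sum should in general be understood as a formal power series in $\phi$ with coefficients in $\field$, i.e. as an identity in $\widehat{F}_P[z]$ where $\widehat{F}_P$ is the $\phi$-adic completion at $P$. This is the step I expect to need the most care: one either works inside $\widehat{F}_P$, where $\val[P]$ extends continuously and the substitution $z = f$ is well-defined because both $\phi$ and $f - r$ lie in the maximal ideal of $\widehat{F}_P$, or else avoids completions entirely by grouping the expansion by powers of $(z-r)$ as $Q(z) = \sum_b Q_b (z-r)^b$ with $Q_b = \sum_{a \geq \max(0,s-b)} c_{a,b} \phi^a \in F$. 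In the latter presentation, $\val[P](Q_b) \geq \max(0, s - b)$, so for every $b$ we obtain $\val[P](Q_b(f-r)^b) \geq \max(0,s-b) + b \geq s$, and the claim follows from the non-Archimedean property applied to the (now genuinely finite) sum $Q(f) = \sum_b Q_b (f-r)^b$.
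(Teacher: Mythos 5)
Your proof is correct and follows essentially the same route as the paper: substitute $z=f$ into the defining expansion, bound each term's valuation by $a+b\ge s$ using $\val[P](\phi)=1$ and $\val[P](f-r)\ge 1$, and conclude by the ultrametric inequality. Your additional care about the expansion being an infinite sum (and the regrouping into finitely many terms $Q_b(z-r)^b$ with $\val[P](Q_b)\ge\max\{0,s-b\}$) addresses a point the paper's proof passes over silently, and is a welcome refinement rather than a departure.
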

\begin{proof}
  Writing
  \[
    Q(f) = \sum_{\substack{a,b \geq 0 \\ a + b \geq s}} c_{a,b}\phi^a(f - r)^b \ ,
  \]
  where $\phi$ is any local parameter of $P$ and $c_{a,b} \in \field$, the triangle inequality directly implies that
  \[
    \val[P](Q(f))
    \geq \min_{\substack{a,b \geq 0 \\ a + b \geq s}} \big( \val[P](\phi^a) + \val[P]((f-r)^b) \big)
    \geq \min_{\substack{a,b \geq 0 \\ a + b \geq s}}(a + b) \geq s \ .
  \]
\end{proof}

For any $Q = \sum_{t = 0}^\ell z^tQ^{(t)}$ with $Q^{(t)} \in \Ya(-tG)$ we define $\delta_G(Q) = \max_t \delta_{-tG}(Q^{(t)})$. Moreover, for a given received word $\r=(r_1,\dots,r_n) \in \field^n$, we write

\begin{equation}\label{def:M}
\begin{multlined}
\M=\{ Q = \sum_{t = 0}^{\ell}z^tQ^{(t)} \in \ffield[z] \mid  Q^{(t)} \in \Ya(-tG),  \\
 \text{$Q$ has a root of multiplicity at least $s$ at $(P_j,r_j)$ for $j=1,\dots,n$}\}.
\end{multlined}
\end{equation}
\begin{theorem}[Special case of Guruswami--Sudan \cite{guruswami_improved_1999}]
Let $\r$ be a received word and $Q \in \M$ with $\delta_G(Q) < s(n-\tau)$. If $f \in \L(G)$ such that the Hamming weight of $\r-\ev_D(f)$ is at most $\tau$, then $Q(f)=0$.%  $\delta_{-tG}(Q^{(t)}) < s(n-\tau)$ for $t = 0,\dots,\ell$, then $Q(f) = 0$.
\end{theorem}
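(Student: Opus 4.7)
The plan is the classical Guruswami–Sudan pole/zero counting argument, translated into the language of the $\Ya$-modules introduced in \cref{sec:representation}. I will suppose for contradiction that $Q(f) \ne 0$ and derive incompatible bounds on the degrees of its zero and pole divisors.

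First I would bound the pole divisor of $Q(f)$. Writing $Q(f) = \sum_{t=0}^{\ell} f^t Q^{(t)}$, I would show that every summand lies in $\Ya$, so that $Q(f)$ itself has no poles outside $\Pinf$. Since $f \in \L(G) \subseteq \Ya(G)$, we have $f^t \in \Ya(tG)$, and by assumption $Q^{(t)} \in \Ya(-tG)$, so $f^t Q^{(t)} \in \Ya(tG)\cdot \Ya(-tG) \subseteq \Ya$. Using the additivity $\delta_{A+B}(ab) = \delta_A(a) + \delta_B(b)$ together with the fact that $f \in \L(G)$ forces $\delta_{tG}(f^t) = -tv_{\Pinf}(f) - tv_{\Pinf}(G) \le 0$, I obtain
\[
  \delta(f^t Q^{(t)}) \;=\; \delta_{tG}(f^t) + \delta_{-tG}(Q^{(t)}) \;\le\; \delta_{-tG}(Q^{(t)}) \;\le\; \delta_G(Q).
\]
The strict triangle inequality for $v_{\Pinf}$ then yields $\delta(Q(f)) \le \delta_G(Q) < s(n-\tau)$, so the pole divisor of $Q(f)$ has degree strictly less than $s(n-\tau)$.

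Next I would bound the zero divisor of $Q(f)$. Let $J = \{\, j : f(P_j) = r_j \,\}$, which has cardinality at least $n-\tau$ by the hypothesis on the Hamming distance. For each $j \in J$, \cref{lem:multzero_at_fun} applied to the multiplicity hypothesis in the definition of $\M$ gives $v_{P_j}(Q(f)) \ge s$. Because the $P_j$ are pairwise distinct rational places, this yields
\[
  \deg\bigl((Q(f))_0\bigr) \;\ge\; \sum_{j \in J} s \cdot \deg(P_j) \;\ge\; s(n-\tau).
\]

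Finally, since $Q(f)$ is a nonzero element of $\ffield$, its zero and pole divisors have the same degree. Combining the two bounds gives $s(n-\tau) \le \deg((Q(f))_0) = \deg((Q(f))_\infty) \le \delta_G(Q) < s(n-\tau)$, a contradiction. Hence $Q(f) = 0$. I expect the only mildly delicate point to be step one, namely verifying that $f^t Q^{(t)}$ has no poles away from $\Pinf$ and that $\delta$ behaves additively across the product $\Ya(tG)\cdot\Ya(-tG)$; once this is in hand, the rest is standard divisor counting.
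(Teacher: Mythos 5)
Your proposal is correct and follows essentially the same route as the paper: the same bound $\delta(Q(f))\le\delta_G(Q)$ via $f^tQ^{(t)}\in\Ya$ and $\delta_{tG}(f^t)\le 0$, and the same use of \cref{lem:multzero_at_fun} to get $v_{P_j}(Q(f))\ge s$ at the $\ge n-\tau$ agreement positions. The only cosmetic difference is the final step: the paper concludes $Q(f)\in\L(\delta_G(Q)\Pinf-T)=\{0\}$ because that divisor has negative degree, while you phrase the identical fact as a contradiction between the degrees of the zero and pole divisors of a putative nonzero $Q(f)$.
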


\begin{proof}
  Since $f^t \in \L(tG)$ and $Q^{(t)} \in \Ya(-tG)$, then $f^tQ^{(t)} \in \Ya$, and consequently $Q(f) \in \Ya$.
  Furthermore, since $\delta_{tG}(f^t) \le 0$, then by the triangle inequality
  \[
    \delta(Q(f)) \leq \max_t \delta_{-tG}(Q^{(t)}) = \delta_G(Q) \ .
  \]
 We write $\errors=\{j \mid r_j\neq f(P_j)\}$. Note that the cardinality of $\errors$ is at most $\tau$. Since $f(P_j) = r_j$ for $j \not \in \errors$, it follows from \cref{lem:multzero_at_fun} that $Q(f) \in \Ya(-T)$, where $T = s\sum_{j \not \in \errors}P_j$.
  Since $\delta_G(Q) < s(n-\tau) \leq \deg T$, we may conclude that
  \[
    Q(f) \in \L \big(\delta_G(Q)\Pinf - T \big) = \{0\} \ .
  \]
\end{proof}

\subsection{Structure of $\M$ as a $\Ya$-module}
\label{subsec:structure-of-M-as-Ya-module}

The set $\M$ introduced in equation \eqref{def:M} is easily seen to be a module over the ring $\Ya$. In this subsection, we determine some of its structural properties. For the remainder of this article let $G_t = -tG - \max\{0,s-t\}D$ for $t = 0,\dots,\ell$.

\begin{theorem}
  \label{thm:M-description}
Let $\vec{r}=(r_1,\dots,r_n)$ be a received word and $R \in \Ya(G)$ be such that $R(P_j) = r_j$ for $j = 1,\dots,n$. Then it holds that
  \[
    \M = \bigoplus_{t=0}^{\ell}(z-R)^t \Ya(G_t) \ .
  \]
\end{theorem}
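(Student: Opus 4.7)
The plan is to make a substitution $z \mapsto z - R$ and track how both the module conditions $Q^{(t)} \in \Ya(-tG)$ and the multiplicity conditions at the $(P_j, r_j)$ transform. Since the $(z-R)^t$ are linearly independent over $\ffield$, every $Q \in \ffield[z]$ of $z$-degree at most $\ell$ can be written uniquely as $Q = \sum_{t=0}^\ell (z-R)^t \tilde{Q}^{(t)}$ with $\tilde{Q}^{(t)} \in \ffield$; this already gives the directness of the sum. Explicitly, from $z = (z-R) + R$, the binomial theorem yields $\tilde{Q}^{(k)} = \sum_{t \ge k} \binom{t}{k} R^{t-k} Q^{(t)}$, and conversely $Q^{(k)} = \sum_{t \ge k} \binom{t}{k} (-R)^{t-k} \tilde{Q}^{(t)}$.

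Next, I would verify that the module condition is preserved by this substitution. Since $R \in \Ya(G)$ implies $R^j \in \Ya(jG)$, and because $\Ya(A) \cdot \Ya(B) \subseteq \Ya(A+B)$ is an elementary consequence of the definition of $\Ya(\cdot)$, the two formulas above show at once that $Q^{(t)} \in \Ya(-tG)$ for all $t$ if and only if $\tilde{Q}^{(t)} \in \Ya(-tG)$ for all $t$.

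The heart of the argument is the local analysis at each $P_j$. Fix $j$ and rewrite $z - R = (z - r_j) - (R - r_j)$; expanding gives $Q = \sum_{k} (z-r_j)^k g_k$ where
\[
g_k \;=\; \sum_{t \ge k} \binom{t}{k} (r_j - R)^{t-k}\,\tilde{Q}^{(t)} \;\in\; \ffield .
\]
Unpacking the definition of multiplicity $\ge s$ at $(P_j, r_j)$ translates directly into the requirement $v_{P_j}(g_k) \ge \max\{0, s-k\}$ for every $k \ge 0$. Since $R(P_j) = r_j$, we have $m_j := v_{P_j}(R - r_j) \ge 1$. Assuming $v_{P_j}(\tilde{Q}^{(t)}) \ge \max\{0, s-t\}$ for all $t$, the triangle inequality gives
\[
v_{P_j}(g_k) \;\ge\; \min_{t \ge k}\bigl[(t-k)m_j + v_{P_j}(\tilde{Q}^{(t)})\bigr] \;\ge\; \max\{0, s-k\},
\]
where the last step is an easy case split on $t < s$ versus $t \ge s$ (using $m_j \ge 1$). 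For the converse, invert the transformation to get $\tilde{Q}^{(t)} = \sum_{k \ge t} \binom{k}{t}(R - r_j)^{k-t} g_k$, and run an analogous triangle-inequality argument to recover the bound on $v_{P_j}(\tilde{Q}^{(t)})$. Note that since $\tilde{Q}^{(t)} \in \Ya(-tG)$ and $P_j \notin \supp(G)$, the condition $v_{P_j}(\tilde{Q}^{(t)}) \ge \max\{0, s-t\}$ across all $j$ is precisely $\tilde{Q}^{(t)} \in \Ya(-tG - \max\{0,s-t\}D) = \Ya(G_t)$, which finishes the proof.

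The main obstacle is mostly bookkeeping in the local step: one has to check carefully that no generality is lost in the estimate on $v_{P_j}(g_k)$ regardless of the precise value of $m_j = v_{P_j}(R-r_j)$, and that the multiplicity condition is correctly reformulated as a valuation inequality on the $g_k$. Everything else follows from the binomial identity and the multiplicative behaviour of the fractional ideals $\Ya(\cdot)$.
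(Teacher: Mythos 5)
Your proof is correct, and for the hard inclusion $\M \subseteq \bigoplus_{t}(z-R)^t\Ya(G_t)$ it takes a genuinely different route from the paper. Both arguments begin with the same binomial change of variables $z=(z-R)+R$ and the same observation that $\Ya(G_t)$ is cut out inside $\Ya(-tG)$ by the valuation conditions $v_{P_j}(\cdot)\ge\max\{0,s-t\}$ at the evaluation places. The paper then proceeds by induction on $s$: it shows $\tilde Q^{(0)}=Q(R)\in\Ya(G_0)$ via \cref{lem:multzero_at_fun}, subtracts it, factors out $(z-R)$, and invokes the statement for multiplicity $s-1$. You instead give a direct, non-inductive local computation at each $P_j$: recentering once more via $z-R=(z-r_j)-(R-r_j)$, you unpack the multiplicity condition as $v_{P_j}(g_k)\ge\max\{0,s-k\}$ on the Taylor coefficients $g_k$ of $Q$ at $(P_j,r_j)$, and note that the binomial transform relating $(g_k)_k$ to $(\tilde Q^{(t)})_t$ is triangular with off-diagonal factors $(R-r_j)^{t-k}$ of $P_j$-valuation at least $t-k$, so the filtration by $\max\{0,s-\cdot\}$ is preserved in both directions; your case split verifying $(t-k)m_j+\max\{0,s-t\}\ge\max\{0,s-k\}$, and its inverse, is exactly right. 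Your approach buys a single symmetric argument that establishes both inclusions at once and makes fully explicit the valuation reformulation of the multiplicity condition, which the paper uses only implicitly; the paper's induction buys the convenience of never manipulating the coefficients $g_k$ directly, delegating all local analysis to the one-variable statement of \cref{lem:multzero_at_fun}. The only point worth spelling out in a polished write-up is that the equivalence between the paper's definition of a root of multiplicity at least $s$ and the condition $v_{P_j}(g_k)\ge\max\{0,s-k\}$ on the shifted coefficients requires expanding each $g_k$ as a power series in a local parameter at $P_j$; this is the standard characterization and is consistent with how the paper itself uses the definition, so it is a matter of presentation rather than a gap.
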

\begin{proof}
Note that for all $j$ and all $h \in \Ya(G_t)$, $v_{P_j}(h) \ge \max\{0,s-t\}$. Further $(z-R)^t$ has a root of multiplicity $t$ at $(P_j,r_j)$, since $R(P_j)=r_j$. Hence any element in $(z-R)^t \Ya(G_t)$ has a root of multiplicity at least $s$ at $(P_j,r_j)$.
Moreover, since $R \in \Ya(G)$, we see that $(z-R)^t \Ya(G_t)=\left(\sum_{u=0}^t z^u\binom{t}{u}(-R)^{t-u}\right)\Ya(G_t) \subseteq \bigoplus_{u=0}^tz^u \Ya(G_u)$. Hence $(z-R)^t \Ya(G_t) \subseteq \M$. Since $\M$ is a $\Ya$-module, this implies that $\bigoplus_{t=0}^{\ell}(z-R)^t \Ya(G_t) \subseteq  \M.$

We will prove the reverse inclusion $\M \subseteq  \bigoplus_{t=0}^{\ell}(z-R)^t \Ya(G_t)$ by induction on $s$.
Let $Q=\sum_{t = 0}^{\ell}z^tQ^{(t)} \in \M$ and write $Q=\sum_{t = 0}^{\ell}(z-R)^t\tilde{Q}^{(t)}$ for certain $\tilde{Q}^{(t)} \in F$. Writing $z^t=((z-R)+R)^t$ and using Newton's binomium, we obtain
\[\tilde{Q}^{(t)}=\sum_{u=t}^\ell \binom{u}{t}R^{u-t} Q^{(u)} \in \Ya(-tG), \text{ for $t=0,\dots,\ell$,}\]
since $R \in \Ya(G).$ Now observe that \cref{lem:multzero_at_fun} implies that $\tilde{Q}^{(0)}=Q(R) \in \Ya(G_0).$

Now if we assume $s=1$, then $\Ya(G_t)=\Ya(-tG)$ for $t>0$ and we can conclude from the above that $Q \in \bigoplus_{t=0}^{\ell}(z-R)^t \Ya(G_t)$.

If $s>1$, we proceed as follows: using $\tilde{Q}^{(0)} \in \Ya(G_0) \subseteq \M$, we conclude
\[\M \ni Q-\tilde{Q}^{(0)}=(z-R) \cdot \sum_{t = 0}^{\ell-1}(z-R)^t\tilde{Q}^{(t+1)} \, .\]
Since $z-R$ has a root of multiplicity one at $(P_j,r_j)$ for all $j$, we see that $\sum_{t = 0}^{\ell-1}(z-R)^t\tilde{Q}^{(t+1)}$ has a root of multiplicity at least $s-1$ at $(P_j,r_j)$ for all $j$. Hence $\sum_{t = 0}^{\ell-1}(z-R)^t\tilde{Q}^{(t+1)} \in \mathcal M_{s-1,\ell}(D,G)$. Then using the induction hypothesis for $s-1$, we may conclude that  $Q \in \bigoplus_{t=0}^{\ell}(z-R)^t \Ya(G_t).$
\end{proof}

\begin{corollary}[of \cref{thm:M-description} and \cref{lem:YaA-over-Ya}]
  \label{cor:Ya-basis-of-M}
  It holds that
  \begin{align*}
    \M &= \langle B_v^{(u)} | u=0,\dots,\ell, \ v = 1,2 \rangle_{\Ya} \ , \quad \text{where} \\
    B_v^{(u)} &= (z-R)^u\g{u}_v = \sum_{r=0}^u \binom{u}{r}z^r(-R)^{u-r} \g{u}_v
                \in \bigoplus_{t=0}^{\ell}z^t\Ya(-tG) \ ,
  \end{align*}
  with $g^{(u)}_1,g^{(u)}_2 \in \Ya(G_u)$ such that $\langle g^{(u)}_1,g^{(u)}_2 \rangle_{\Ya} =  \Ya(G_u)$, $$\delta_{G_u}(g_1^{(u)}) \le 2g-1+(u+1)\deg(G)+\max\{0,s-u+1\}n,$$ and $$\delta_{G_u}(g_2^{(u)}) \le 4g-2+(u+1)\deg(G)+\max\{0,s-u+1\}n.$$
\end{corollary}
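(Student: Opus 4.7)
The plan is to derive this corollary almost mechanically from \cref{thm:M-description} and \cref{lem:YaA-over-Ya}, with most of the work being a careful bookkeeping of the bounds on $\delta_{G_u}$. First, I would invoke \cref{thm:M-description} to write $\M = \bigoplus_{u=0}^{\ell}(z-R)^u \Ya(G_u)$. Since each summand $(z-R)^u\Ya(G_u)$ is a $\Ya$-submodule isomorphic (as $\Ya$-module) to $\Ya(G_u)$ via multiplication by $(z-R)^u$, a generating set for $\M$ over $\Ya$ can be assembled by taking generators of every $\Ya(G_u)$ and multiplying them by $(z-R)^u$.

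For each $u \in \{0,\dots,\ell\}$, I would apply \cref{lem:YaA-over-Ya} to the divisor $G_u = -uG - \max\{0,s-u\}D$ to produce generators $g_1^{(u)}, g_2^{(u)} \in \Ya(G_u)$ with $\langle g_1^{(u)}, g_2^{(u)}\rangle_{\Ya} = \Ya(G_u)$ and $\delta_{G_u}(g_v^{(u)}) \le (2v)g - v - \deg(G_u) + \mathfrak{a}(G_u)$ for $v = 1,2$. Then I would define $B_v^{(u)} = (z-R)^u g_v^{(u)}$; by the previous paragraph these generate $\M$ as a $\Ya$-module. Newton's binomium rewrites $B_v^{(u)} = \sum_{r=0}^u \binom{u}{r} z^r (-R)^{u-r} g_v^{(u)}$, and membership in $\bigoplus_{t=0}^\ell z^t \Ya(-tG)$ follows from $R \in \Ya(G)$, $g_v^{(u)} \in \Ya(G_u) \subseteq \Ya(-uG)$, so $(-R)^{u-r} g_v^{(u)} \in \Ya((u-r)G - uG) = \Ya(-rG)$.

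The main technical step is verifying the claimed degree bounds. Here $-\deg(G_u) = u\deg(G) + \max\{0,s-u\}\,n$, while the support contribution $\mathfrak{a}(G_u)$ requires a small case split: when $s \ge u$ the support of $G_u$ lies in $\supp(G)\cup\supp(D)$, so using effectivity of $G$ one gets $\mathfrak{a}(G_u) \le \deg(G) + n$; when $s < u$ the support lies in $\supp(G)$ only, giving $\mathfrak{a}(G_u) \le \deg(G)$. Plugging these into \cref{lem:YaA-over-Ya} yields, for $v=1$, the upper bound $2g - 1 + (u+1)\deg(G) + (\max\{0,s-u\} + \mathbf{1}_{s\ge u})\,n$. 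The unifying observation is $\max\{0,s-u\} + \mathbf{1}_{s\ge u} = \max\{0,s-u+1\}$, which exactly matches the statement; the $v=2$ bound differs only by replacing $2g-1$ with $4g-2$, coming from the second bound in \cref{lem:YaA-over-Ya}.

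The only place where something could go subtly wrong is this identification $\max\{0,s-u\}+\mathbf{1}_{s\ge u} = \max\{0,s-u+1\}$ (it must be checked separately at $s = u$, $s = u-1$, and $s > u$), and the bookkeeping that $G$ being effective is what makes $\sum_{Q\in\supp(G)}\deg(Q) \le \deg(G)$. Everything else is a direct combination of \cref{thm:M-description} with \cref{lem:YaA-over-Ya}, so the corollary follows.
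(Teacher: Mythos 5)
Your proof is correct and follows essentially the same route as the paper: combine the direct-sum decomposition from \cref{thm:M-description} with \cref{lem:YaA-over-Ya} applied to each $G_u$, then bound $\mathfrak{a}(G_u)$ via $-\deg(G_u)=u\deg(G)+\max\{0,s-u\}n$ and the effectivity of $G$. The only cosmetic difference is that you place the boundary case $u=s$ in the looser branch $\mathfrak{a}(G_u)\le\deg(G)+n$, which exactly reproduces the stated bound, whereas the paper puts $u=s$ in the tighter branch $\mathfrak{a}(G_u)\le\deg(G)$ and remarks afterwards that the stated bound can then be improved by $n$ in that case.
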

\begin{proof}
The first part directly follows from \cref{thm:M-description} and \cref{lem:YaA-over-Ya}. To obtain the stated upper bounds on $\delta_{G_u}(g_1^{(u)})$ and $\delta_{G_u}(g_2^{(u)})$ from \cref{lem:YaA-over-Ya}, note that $\sum_{Q \in \supp(G)} \deg(Q) \le \deg(G)$, since $G$ is an effective divisor. Hence $\sum_{Q \in \supp(G_u)} \deg(Q) \le \deg(G)+n$ if $u<s$, while $\sum_{Q \in \supp(G_u)} \deg(Q) \le \deg(G)$ if $u\ge s.$ The stated upper bounds are implied by this.
\end{proof}
Note that the proof of the corollary actually implies that for $u=s$, the stated upper bounds for $\delta_{G_u}(g_1^{(u)})$ and $\delta_{G_u}(g_2^{(u)})$ can be improved by $n$.

For computational purposes, we will later view $\M$ as an $\field[x]$ module. Since any element from $\Ya$ is an $\field[x]$-linear combination of $y_0,\dots,y_{\mu-1}$, we obtain the following.
\begin{corollary}\label{cor:Fqx-basis-of-M}
It holds that $\M = \langle y_iB_v^{(u)} | i=0,\dots,\mu-1,u=0,\dots,\ell, \ v = 1,2 \rangle_{\field[x]}$.
\end{corollary}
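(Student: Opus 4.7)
The plan is to chain the two generation results already in hand. Corollary \ref{cor:Ya-basis-of-M} gives that $\M = \langle B_v^{(u)} \mid u=0,\dots,\ell,\, v=1,2 \rangle_{\Ya}$, so every $Q \in \M$ admits a representation
\[
Q \;=\; \sum_{u=0}^{\ell}\sum_{v=1}^{2} a_{u,v}\, B_v^{(u)}, \qquad a_{u,v} \in \Ya.
\]
Meanwhile, \cref{lem:Fx_basis} applied with $A = 0$ tells us that $\y_0,\dots,\y_{\mu-1}$ is an $\field[x]$-basis of $\Ya$, so each coefficient $a_{u,v}$ has a unique expansion $a_{u,v} = \sum_{i=0}^{\mu-1} c_{u,v,i}\,\y_i$ with $c_{u,v,i} \in \field[x]$. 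Substituting back yields
\[
Q \;=\; \sum_{u,v,i} c_{u,v,i}\, \y_i B_v^{(u)},
\]
which exhibits $Q$ as an $\field[x]$-linear combination of the $\y_iB_v^{(u)}$. This establishes the inclusion $\M \subseteq \langle \y_iB_v^{(u)} \rangle_{\field[x]}$.

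For the reverse inclusion, I would simply note that each generator $\y_i B_v^{(u)}$ lies in $\M$: indeed $\y_i \in \Ya$ and $B_v^{(u)} \in \M$, and $\M$ is a $\Ya$-module, so the product sits in $\M$. Since $\field[x] \subseteq \Ya$, any $\field[x]$-linear combination of these generators also lies in $\M$, and the two modules coincide.

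There is essentially no obstacle here; the statement is a formal consequence of ``if $M$ is generated by $S$ over $R$, and $R$ is generated by $T$ over $R'$, then $M$ is generated by $T\cdot S$ over $R'$,'' applied with $R' = \field[x]$, $R = \Ya$, $T = \{\y_0,\dots,\y_{\mu-1}\}$, and $S = \{B_v^{(u)}\}$. The proof needs only to invoke \cref{cor:Ya-basis-of-M} and \cref{lem:Fx_basis} and combine the two expansions, so it can be presented as a short remark rather than a multi-step argument.
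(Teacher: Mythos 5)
Your argument is correct and is exactly the paper's (unstated, one-line) justification: the corollary follows from \cref{cor:Ya-basis-of-M} by expanding each $\Ya$-coefficient in the $\field[x]$-basis $y_0,\dots,y_{\mu-1}$ of $\Ya$ provided by \cref{lem:Fx_basis}, with the reverse inclusion immediate since $\M$ is a $\Ya$-module and $\field[x]\subseteq\Ya$. Nothing to add.
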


\begin{remark}\label{rem:alternative-Fqx-basis-of-M}
Since $G_t=-tG$ for $s \le t \le \ell$, a minor modification of the proof of \cref{thm:M-description} shows that $\M = \bigoplus_{t=0}^{s}(z-R)^t \Ya(G_t) \oplus \bigoplus_{t=s+1}^\ell (z-R)^sz^{t-s} \Ya(G_t).$ This shows that the elements $\tilde{B}_v^{(u)} = B_v^{(u)}$ if $u \le s$ together with $\tilde{B}_v^{(u)} (z-R)^sz^{t-s}\g{u}_v$ if $s<u\le \ell$ form an alternative set of generators over $\Ya$ for $\M$. Likewise the elements in the set $\{y_i\tilde{B}_v^{(u)} | i=0,\dots,\mu-1,u=0,\dots,\ell, \ v = 1,2\}$ generate $\M$ as an $\field[x]$-module. If $s < \ell$, these alternative generators can be computed using fewer operations and are therefore in general preferable.
\end{remark}

\begin{remark}
The $\Ya$-module $\M$ is an example of a torsion free, finitely generated module of rank $\ell+1$. Though we will not need this in the following, it interesting to note that any torsion free, finitely generated module $\mathcal M$ of rank $r$ over a Dedekind domain $\Ya$, is isomorphic to a direct product of $r$ fractional ideals of $\Ya$, say $\mathcal M \cong I_1 \oplus \cdots \oplus I_{r}$. Moreover, the product $I=I_1\cdots I_r$ of these fractional ideals modulo principal fractional ideals only depends on the isomorphism class of $\mathcal M$. Therefore the element of the ideal class group of $\Ya$ corresponding to $I$ is called the Steinitz invariant of $\mathcal M$. See \cite[Section II.4]{frohlich-taylor1991book} for more details. \cref{thm:M-description} can be reformulated as $\M \cong \bigoplus_{u=0}^\ell \Ya(G_u)$ and in particular the Steinitz invariant of $\M$ is the element in the ideal class group of $\Ya$ corresponding to
\[\prod_{u=0}^\ell \Ya(G_u)=\Ya(\sum_{u=0}^\ell G_u)=\Ya(-\binom{\ell+1}{2}G-\binom{s+1}{2}D).\]
\end{remark}

Returning to decoding, given a received word $\r$, code $\code$, and parameters $s,\ell$, the main steps in our algorithmic approach to Guruswami-Sudan list decoding are the following.
\begin{enumerate}
\item Compute a generating set over $\Ya$ of $\M$. We will do this in Subsection \ref{subsec:genYa}
\item Compute a generating set over $\field[x]$ of $\M$. We will address this in Subsection \ref{subsec:genFqx}
\item Using fast row reduction over $\field[x]$, find a nonzero $Q \in \M$ satisfying $\delta_G(Q)<s(n-\tau)$. See Subsection \ref{subsec:findQ}
\item Find the roots of $Q$ in $\L(G)$. See Subsection \ref{subsec:findroots}
\end{enumerate}
%\begin{algorithm}[t]
%  \caption{\Decode} \label{algo:decode}
%  \begin{algorithmic}[1]
%    \Input decoding parameters $s,\ell \in \ZZ_{>0}$, received word $\r = (r_1,\dots,r_n) \in \field^n$
%    \Output all $f \in \L(G)$, with $\ev_D(f)-\r$ having Hamming weight at most $\tau$
%    \State Compute an $\field[x]$-basis of $\M$ as in Subsection ...
%    \State Find $Q \in \M$ as in Subsection ...
%    \State \Return roots $f_1,\dots,f_r$ of $Q$ as in Subsection ...
%  \end{algorithmic}
%\end{algorithm}
As we will see, the main result of this paper is that all these steps can be done in complexity $\softO(\mu^{\omega-1}\ell^{\omega+1}(n+g))$ and with a slight variation even in $\softO(\mu^{\omega-1}s\ell^{\omega}(n+g))$.

To simplify the description of the algorithms in the next sections, it will be convenient to assume that apart from $\Pinf$, the function field $F$ contains an additional $Z:=\deg G+\max\{(\ell+1)\deg G+4g+(s+1)n,\deg G+(\ell+3)(2g-1)+(s+1)n+2+\mu\}$ rational places. Even though this will not be the case in general, the same trick as at the end of Section \ref{sec:preliminaries}, will allow us to assume this. More precisely, the function field $F\mathbb{F}_{q^e}$ with $e=\lceil 2 \log_q(\max\{Z,2g+1\}) \rceil$ will contain at least $1+Z$ rational places by \cref{lem:more_rational_places}. Since using equation \eqref{eq:degree_bounds_G}, $e \in \bigO(\log_q(\ell(n+g)))$, this does not interfere with our target complexity and hence does not result in any loss of generality. We will suppress the exponent $e$ from the notation and will from now on write $\field$ for the finite field we work over, but assume that $F$ contains all the rational places that we need to run the algorithms we describe in the next section (specifically: \cref{algo:basisYa} and \cref{algo:basisFqx}).

\section{Algorithms}\label{sec:algorithms}

In this section, we present the algorithms that we will use to execute the Guruswami-Sudan list decoder. We start with discussing multi-point evaluation and interpolation algorithms that will form the backbone of the algorithms discussed later in the section.

\subsection{Multi-Point Evaluation}
\label{subsec:mpe}

When defining $\code$, we used the evaluation map $\ev_D$. We will later need to be able to compute $\ev_D(f)$ fast, meaning we want to be able to evaluate the function $f \in \L(G)$ in the multiple points $P_1,\dots,P_n$ fast. As a matter of fact, since we will need a slightly more general setting later on, we phrase the results in this and the next subsection in terms of a very similar evaluation map, but avoid to use the divisors $D$ and $G$.

\begin{lemma}
  \label{lem:ev}
Let $A$ be a divisor and $E = E_1 + \cdots + E_N$ for distinct rational places $E_1,\dots,E_N$ of $F$ such that $\supp(A) \cap \supp(E)= \emptyset$. Further denote by $\ev_E: \L(A) \to \field^N$ the evaluation map defined by $\ev_E(a)=(a(E_1),\dots,a(E_N)).$ Then
  \begin{enumerate}
  \item $\ev_{E}$ is injective when $\deg A < \deg E$,
  \item $\ev_E$ is surjective when $\deg A \geq \deg E + 2g - 1$.
  \end{enumerate}
\end{lemma}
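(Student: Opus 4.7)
The plan is to analyze the kernel of $\ev_E$ and then use a dimension count based on the Riemann-Roch theorem. The key observation is that the kernel of $\ev_E$ consists precisely of those $a \in \L(A)$ satisfying $a(E_j) = 0$ for all $j$, which is the same as requiring $v_{E_j}(a) \ge 1$ for every $j$. Since $\supp(A) \cap \supp(E) = \emptyset$, this condition is equivalent to $a \in \L(A - E)$. Thus $\ker(\ev_E) = \L(A - E)$, and by rank-nullity
\[
  \dim_{\field} \image(\ev_E) = \dimL(A) - \dimL(A - E).
\]

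For the first claim, if $\deg A < \deg E$, then $\deg(A - E) < 0$, so $\L(A - E) = \{0\}$ and injectivity follows immediately from the description of the kernel.

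For the second claim, I would combine the assumption $\deg A \ge \deg E + 2g - 1$ with the fact that $\deg E = N \ge 0$ (since $E$ is a sum of rational places) to ensure that both $\deg A \ge 2g - 1$ and $\deg(A - E) \ge 2g - 1$. Then the Riemann-Roch theorem gives the equalities $\dimL(A) = \deg A + 1 - g$ and $\dimL(A - E) = \deg A - \deg E + 1 - g$, whence
\[
  \dim_{\field} \image(\ev_E) = \deg E = N,
\]
so $\ev_E$ surjects onto $\field^N$.

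I do not expect any genuine obstacle here; the only mild point is confirming that the $2g - 1$ threshold of Riemann-Roch applies to both $A$ and $A - E$ simultaneously, which is where the $+2g-1$ slack in the hypothesis of (2) is used. The trivial edge case $N = 0$ (so $E = 0$) can be disposed of separately, since then $\field^N$ is the zero space and surjectivity is automatic.
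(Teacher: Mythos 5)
Your proposal is correct and follows essentially the same route as the paper: identify $\ker(\ev_E) = \L(A-E)$, deduce injectivity from $\deg(A-E)<0$, and get surjectivity by applying Riemann--Roch to both $A$ and $A-E$ (each of degree at least $2g-1$) so that the image has dimension $\deg E = N$. Your extra remarks on why the kernel equals $\L(A-E)$ and on the trivial case $N=0$ are fine but not needed beyond what the paper states.
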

\begin{proof}
  For the first item, simply observe that the dimension of the kernel of $\ev_E$ is $\dimL(A - E) = 0$, since $\deg (A-E) < 0$.

  For the second item, observe that the dimension of the image of $\ev_E$ is
  \[
    \dimL(A) - \dimL(A - E) = \deg A - g + 1 - (\deg A - \deg E - g + 1) = \deg E \ ,
  \]
since $\deg(A) \ge 2g-1$ and $\deg(A-E) \ge 2g-1$, see \cite[Theorem 1.5.17]{stichtenoth_algebraic_2009}.
\end{proof}

Now we state Algorithm \ref{algo:evaluate}, which computes $\ev_E(a)$ using the representation of function field elements as introduced in Section \ref{sec:representation}.

\begin{algorithm}[H]
  \caption{$\Evaluate(a,E,A,\vec{x},\vec{y})$} \label{algo:evaluate}
  \begin{algorithmic}[1]
    \Input
    \begin{itemize}
    \item Divisors $A$ and $E = E_1 + \cdots + E_N$, where $E_1,\dots,E_N \in \places_{\ffield}\setminus\{\Pinf\}$ are distinct rational places and $\supp(A) \cap \supp(E)=\emptyset$,
    \item a function $a = \sum_{i=0}^{\mu-1} a_i y^{(A)}_i \in \Ya(A)$, where $a_i \in \field[x]$,
    \item evaluations $\vec{x} = (x_j)_{j=1,\dots,N}$, where $x_j = x(E_j) \in \field$,
    \item evaluations $\vec{y} = (y_{i,j})^{i=0,\dots,\mu-1}_{j=1,\dots,N}$, where $y_{i,j} = y^{(A)}_i(E_j) \in \field$.
    \end{itemize}
    \Output
    \begin{itemize}
    \item Evaluations $\ev_E(a) \in \field^N$.
    \end{itemize}
    \For{$i=0,\dots,\mu-1$}
    \State $(a_{i,1},\dots,a_{i,N}) \in \field^N \assign (a_i(x_1),\dots,a_i(x_N))$ \Comment{Univariate MPE}
    \EndFor
    \State \Return $\sum_{i=0}^{\mu-1} (  a_{i,1}y_{i,1},\dots,a_{i,N}y_{i,N} ) \in \field^N$
  \end{algorithmic}
\end{algorithm}

\begin{lemma}
  \cref{algo:evaluate} is correct and costs $\softO(\mu N + \delta_A(a) + \deg A)$ operations in $\field$.

\end{lemma}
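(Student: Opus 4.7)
My plan is to argue correctness and complexity separately, with correctness being essentially a one-line computation and the complexity bound following from Lemma \ref{lem:deg} combined with standard fast univariate multi-point evaluation.

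For correctness, I would first note that because none of the places $E_1,\dots,E_N$ coincides with $\Pinf$ and none lies in $\supp(A)$, the functions $x$ and $y^{(A)}_i$ are regular at each $E_j$, so the values $x_j = x(E_j)$ and $y_{i,j} = y^{(A)}_i(E_j)$ are well-defined elements of $\field$. Next, for any $a_i \in \field[x]$ the function $a_i \in \ffield$ has value $a_i(x_j)$ at $E_j$, since evaluation at a place is an $\field$-algebra homomorphism and sends $x$ to $x_j$. Summing yields
\[
  a(E_j) \;=\; \sum_{i=0}^{\mu-1} a_i(E_j)\, y^{(A)}_i(E_j) \;=\; \sum_{i=0}^{\mu-1} a_i(x_j)\, y_{i,j},
\]
which is precisely what \cref{algo:evaluate} returns in its $j$-th coordinate.

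For the complexity, I would invoke fast univariate multi-point evaluation: a polynomial in $\field[x]$ of degree $d$ can be evaluated at $N$ points in $\softO(N+d)$ operations in $\field$. By \cref{lem:deg}, each coefficient satisfies $\deg a_i \le \tfrac{1}{\mu}(\delta_A(a)+\deg A)$, so the $i$-th call on line~2 costs
\[
  \softO\!\left(N + \tfrac{1}{\mu}(\delta_A(a)+\deg A)\right).
\]
Summing this bound over the $\mu$ values of $i$ produces $\softO(\mu N + \delta_A(a) + \deg A)$. The final recombination on line~4 performs $\mu N$ multiplications and additions in $\field$, contributing only $\bigO(\mu N)$, which is absorbed into the claimed bound.

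There is no real obstacle: the only point requiring a little care is justifying that we are allowed to substitute $x_j$ for $x$ when evaluating $a_i$ at $E_j$, which follows because $a_i$ is a polynomial in $x$ and evaluation at a place is a ring homomorphism on functions regular at that place. The degree bound from \cref{lem:deg} is exactly what is needed to balance the per-call MPE cost against the number of calls.
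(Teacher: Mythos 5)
Your proof is correct and follows essentially the same route as the paper's: correctness via the homomorphism property of evaluation at a place, and complexity by combining the degree bound from \cref{lem:deg} with fast univariate multi-point evaluation, noting that the final recombination costs only $\bigO(\mu N)$. The only difference is that you spell out the well-definedness of the evaluations and the substitution $x \mapsto x_j$ in slightly more detail, which the paper leaves implicit.
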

\begin{proof}
  Correctness simply follows from the fact that for $j=1,\dots,N$
  \[
    \sum_{i=0}^{\mu-1} a_{i,j}y_{i,j}
    = \sum_{i=0}^{\mu-1} a_i(x(E_j)) y^{(A)}_i(E_j)
    = \sum_{i=0}^{\mu-1} (a_iy^{(A)}_i)(E_j)
    = a(E_j) \ .
  \]
  For complexity, notice that the total cost of the for-loop on Line 1 amounts to that of evaluating each of the univariate polynomials $a_0,\dots,a_{\mu-1} \in \field[x]$ on $N$ points. According to \cref{lem:deg},
  \[
    \deg a_i \leq \frac{1}{\mu}(\delta_A(a) + \deg A) \quad \text{for } i=0,\dots,\mu-1 \ ,
  \]
  hence the total cost of the for-loop is bounded by
  \[
    \softO(\mu(N + \max_i \deg a_i)) \subseteq \softO(\mu N + \delta_A(a) + \deg A) \ .
  \]
  Line $3$ costs $\bigO(\mu N)$, which is subsumed by the cost of the for-loop.
\end{proof}

\subsection{Interpolation}
\label{subsec:interpolation}

In this subsection, we address the interpolation problem. We start with an existence result.

\begin{lemma}
  \label{lem:interp_size}
Let $A$ be a divisor and $E = E_1 + \cdots + E_N$ for distinct rational places $E_1,\dots,E_N$ of $F$ different from $\Pinf$ such that $\supp(A) \cap \supp(E)= \emptyset$.   %Let $A,E$ be as in Lemma \ref{lem:ev}.
For any $(w_1,\dots,w_N) \in \field^N$ there exists an $a \in \Ya(A)$ with
  \[
    \delta_A(a) \leq \deg E+2g-1-\deg A
  \]
  such that $a(E_j) = w_j$ for $j = 1,\dots,N$.
\end{lemma}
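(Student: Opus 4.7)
The plan is to reduce this directly to the surjectivity statement in \cref{lem:ev}. Set $m = \deg E + 2g - 1 - \deg A$ and consider the divisor $A' = m\Pinf + A$. Note that $\deg A' = \deg E + 2g - 1$, and since $E_1,\dots,E_N$ are all distinct from $\Pinf$ and $\supp(A) \cap \supp(E) = \emptyset$ by hypothesis, we have $\supp(A') \cap \supp(E) = \emptyset$ as well.

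Next I would apply the second part of \cref{lem:ev} with $A$ replaced by $A'$: since $\deg A' = \deg E + 2g - 1$, the map $\ev_E : \L(A') \to \field^N$ is surjective, so there exists $a \in \L(A') = \L(m\Pinf + A)$ such that $a(E_j) = w_j$ for all $j$. By definition of $\delta_A$, membership $a \in \L(m\Pinf + A)$ immediately gives $\delta_A(a) \leq m = \deg E + 2g - 1 - \deg A$. Finally, $\L(m\Pinf + A) \subseteq \Ya(A)$ by the definition of $\Ya(A)$ as the union over all $m$, so $a \in \Ya(A)$ as required.

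There is essentially no obstacle here: the statement is a repackaging of the surjectivity in \cref{lem:ev} together with the basic identity $\delta_A(a) = -\val(a) - \val(A)$ translated through the inclusion $\L(m\Pinf + A) \subseteq \Ya(A)$. The only subtlety worth flagging explicitly is verifying that the hypotheses of \cref{lem:ev} still apply after shifting $A$ by $m\Pinf$, which is immediate from $\Pinf \notin \supp(E)$.
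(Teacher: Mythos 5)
Your proof is correct and follows exactly the paper's argument: define $A' = (\deg E + 2g - 1 - \deg A)\Pinf + A$ and invoke the surjectivity part of \cref{lem:ev}. The extra details you supply (disjointness of supports after the shift, and reading off the $\delta_A$ bound from membership in $\L(A')$) are exactly the routine verifications the paper leaves implicit.
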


\begin{proof}
Letting $A' = (\deg E+2g-1-\deg A)\Pinf + A$ we get that $\deg A' \geq \deg E + 2g - 1$, which according to \cref{lem:ev} implies that the evaluation map $\ev_{E}: \L(A^\prime) \to \field^N$ is surjective.
\end{proof}

\begin{definition}
  If $E = E_1 + \cdots + E_N$, where $E_1,\dots,E_N$ are distinct rational places different from $\Pinf$, and $U_1,\dots,U_\mu$ are effective divisors satisfying
  \begin{enumerate}
  \item $E = U_1 + \dots + U_\mu$,
  \item $\supp U_i \cap \supp U_j = \emptyset$ when $i \neq j$,
  \item $|\deg U_i - \deg U_j| \leq 1$ for all $i,j \in \{1,\dots,\mu\}$,
  \item for any $E_j,E_k \in \supp U_i$ it holds that $x(E_j) = x(E_k) \iff E_j = E_k$,
  \end{enumerate}
  then we will say that $U_1,\dots,U_\mu$ is an \emph{$x$-partition} of $E$.
\end{definition}

\begin{lemma}
  \label{lem:mu-geq-valency}
  If $\S$ is a set of places such that $x(P) = x(P')$ for all $P,P' \in \S$, then $|\S| \leq \mu$.
\end{lemma}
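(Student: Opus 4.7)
The plan is to pass to the subfield $\field(x) \subseteq \ffield$ and apply the fundamental degree equality to a single place of $\field(x)$. The crucial input is that $[\ffield : \field(x)] = \mu$, which was already derived in the proof of \cref{lem:x_local_parameter} using \cite[Theorem 1.4.11]{stichtenoth_algebraic_2009} together with $\val(x) = -\mu$.

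The case $\S = \emptyset$ being trivial, I would fix some $P_0 \in \S$ and set $\alpha = x(P_0) \in \field$. The common-value hypothesis then gives $v_P(x - \alpha) \geq 1$ for every $P \in \S$, which is precisely the statement that every $P \in \S$ lies above the unique zero $Q_\alpha$ of $x - \alpha$ in $\field(x)$. Thus $\S$ is contained in the finite set of places of $\ffield$ that sit above $Q_\alpha$.

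Next I would invoke the fundamental equality (see e.g.\ \cite[Theorem 3.1.11]{stichtenoth_algebraic_2009}) for the extension $\ffield/\field(x)$ at the place $Q_\alpha$:
\[
\sum_{P \mid Q_\alpha} e(P \mid Q_\alpha) f(P \mid Q_\alpha) \;=\; [\ffield : \field(x)] \;=\; \mu.
\]
Since each ramification index and each relative degree is a positive integer, the number of places of $\ffield$ above $Q_\alpha$ is at most $\mu$, and the desired bound $|\S| \leq \mu$ follows immediately.

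I do not expect a serious obstacle: the only slightly subtle point is recognizing that the hypothesis forces all places of $\S$ to share the same restriction to $\field(x)$, after which the claim reduces to a standard identity for finite separable extensions of function fields. Separability is automatic here because it was established in the proof of \cref{lem:x_local_parameter} that $\ffield/\field(x)$ is separable under our standing assumptions on $x$.
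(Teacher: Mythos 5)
Your proof is correct, but it takes a genuinely different route from the paper. The paper's argument is a one-line Riemann--Roch degree count: since $x-\alpha$ has its only pole at $\Pinf$ of order $\mu$ and vanishes at every $P\in\S$, it is a nonzero element of $\L(\mu\Pinf - \sum_{P\in\S}P)$, and if $|\S|>\mu$ this divisor has negative degree, so the space is $\{0\}$ --- contradiction. You instead descend to $\field(x)$, observe that all places of $\S$ lie over the unique zero $Q_\alpha$ of $x-\alpha$ in the rational function field, and bound their number by the fundamental equality $\sum_{P\mid Q_\alpha} e(P\mid Q_\alpha)f(P\mid Q_\alpha)=[\ffield:\field(x)]=\mu$. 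Both arguments are sound and ultimately rest on the same fact ($\deg (x)_\infty = \mu$), but the paper's version is more elementary and self-contained, needing only that Riemann--Roch spaces of negative-degree divisors are trivial, whereas yours imports the fundamental equality and the degree formula $[\ffield:\field(x)]=\deg(x)_\infty$. Your approach does yield slightly more information (the full $e,f$ decomposition above $Q_\alpha$), and one small simplification is available: the fundamental equality holds for arbitrary finite extensions of function fields, so the appeal to separability of $\ffield/\field(x)$ is unnecessary here.
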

\begin{proof}
  If $\alpha = x(P)$ for every $P \in \S$, then it is easy to see that
  \[
    0 \neq x - \alpha \in \L(\mu\Pinf - \sum_{P \in \S}P) \ .
  \]
  But if $\mu < |\S|$, then the above Riemann-Roch space has dimension zero.
\end{proof}

\begin{lemma}
  \label{lem:mu-partition}
  There exists an $x$-partition of any divisor of the form $E = E_1 + \cdots + E_N$, where $E_1,\dots,E_N$ are distinct rational places different from $\Pinf$.
\end{lemma}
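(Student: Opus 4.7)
The plan is to construct the partition in two stages: first an arbitrary assignment that satisfies conditions (1), (2), (4), and then a local rebalancing to also satisfy condition (3). For $\alpha \in \field$ in the image of $x$ restricted to $\{E_1,\dots,E_N\}$, let $\S_\alpha = \{ E_j \mid x(E_j) = \alpha \}$. These sets partition $\{E_1,\dots,E_N\}$ by $x$-value, and \cref{lem:mu-geq-valency} guarantees $|\S_\alpha| \le \mu$ for every $\alpha$. In particular, for each $\alpha$ we may injectively place the elements of $\S_\alpha$ into $\mu$ labelled bins $U_1,\dots,U_\mu$, yielding divisors $U_i = \sum_{E_j \in U_i} E_j$ that automatically satisfy (1), (2), and (4); condition (4) holds precisely because no two elements of any $\S_\alpha$ ever land in the same bin.

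To enforce (3) I would iteratively rebalance this initial assignment. Suppose there exist indices $i,j$ with $\deg U_i - \deg U_j = |U_i| - |U_j| \ge 2$. For each place $E_k \in U_i$ with $x(E_k) = \alpha$, we can transfer $E_k$ to $U_j$ without breaking (4) provided that $U_j \cap \S_\alpha = \emptyset$. Counting $x$-classes: the number of distinct values of $\alpha$ hit by $U_i$ equals $|U_i|$, and at most $|U_j|$ of these can also be hit by $U_j$. Therefore at least $|U_i| - |U_j| \ge 2$ classes are represented in $U_i$ but not in $U_j$, so a legal transfer always exists. Perform it; repeat until (3) holds.

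Termination follows from a potential argument: the sum $\Phi = \sum_{i=1}^{\mu} |U_i|^2$ changes by $(|U_i|-1)^2 + (|U_j|+1)^2 - |U_i|^2 - |U_j|^2 = -2(|U_i|-|U_j|) + 2 \le -2$ at each transfer, so the process halts after finitely many steps. The final configuration still satisfies (1), (2), (4) (since every transfer preserves them by construction) and now also (3).

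The main conceptual step is verifying that a legal transfer always exists when the bins are unbalanced; this is essentially a pigeonhole observation on how the $x$-classes intersect $U_i$ and $U_j$, leveraging that each $\S_\alpha$ meets any bin in at most one place. The combinatorial rebalancing is routine; the only genuine input from the geometry is the bound $|\S_\alpha| \le \mu$ from \cref{lem:mu-geq-valency}, which ensures the initial assignment exists in the first place.
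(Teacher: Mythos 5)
Your proof is correct, but it is organized differently from the one in the paper. The paper argues by induction on $N$: assuming an $x$-partition of $E-E_N$ is given, it inserts $E_N$ into a minimal-degree bin among those avoiding the class of $x(E_N)$, performing at most one swap of places between two bins to restore balance when that bin is not globally minimal. You instead decouple feasibility from balance: you first distribute each fiber $\S_\alpha$ (of size at most $\mu$ by \cref{lem:mu-geq-valency}) injectively across the $\mu$ bins, which secures conditions (1), (2), (4) outright, and then run a rebalancing loop whose termination you certify with the potential $\Phi=\sum_i|U_i|^2$. The crucial combinatorial step is the same in both arguments\,--\,a pigeonhole count showing that a larger bin always contains a place whose $x$-class is absent from a smaller bin, which is exactly what condition (4) makes available\,--\,and your count ($|U_i|$ distinct classes in $U_i$, at most $|U_j|$ of them shared with $U_j$) is valid. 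What your version buys is a cleaner separation of concerns and an explicit termination certificate; what the paper's incremental version buys is that the partition can be maintained online as places are added, with only $O(1)$ work per insertion rather than a global rebalancing pass. Both are elementary and both rest solely on \cref{lem:mu-geq-valency}, so there is nothing to object to.
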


\begin{proof}
 We use induction on $N$. The base case $N=0$ is trivial, so let us consider the induction step. Suppose $U_1,\dots,U_\mu$ is an $x$-partition of $E - E_N$, and let $a,b$ be such that $U_a$ and $U_b$ have minimal degree among the elements of
  \[
    \{U_1,\dots,U_\mu\} \and \{U_i \mid x(E_j) \neq x(E_N) \text{ for all } E_j \in \supp U_i, i = 1,\dots,\mu \}
  \]
  respectively ($b$ exists due to \cref{lem:mu-geq-valency}). If $\deg U_a = \deg U_b$, then an $x$-partition of $E$ can be obtained by replacing $U_b$ with $U_b + E_N$. If on the other hand $\deg U_a < \deg U_b$, then $U_a$ contains a place $\hat E_a$ with $x(\hat E_a) = x(E_N)$ and $U_b$ contains a place $\hat E_b$ such that $x(\hat E_b) \neq x(E_j)$ for all $E_j \in U_a$. But then an $x$-partition of $E$ can be obtained by replacing $U_a$ with $U_a - \hat E_a + \hat E_b + E_N$ and $U_b$ with $U_b - \hat E_b + \hat E_a$.
\end{proof}

\begin{definition}
  \label{def:hermit-pade}
  For any polynomial matrix $\mat{A} \in \field[x]^{\phi \times \theta}$ with columns $\vec{A}_1,\dots,\vec{A}_\theta$ and any polynomial vector $\vec{u} = (u_1,\dots,u_{\theta}) \in \field[x]^{\theta}$ define the $\field[x]$-module
  \[
    \H_{\vec{u}}(\mat{A}) = \{ \vec{v} \in \field[x]^{\phi} \mid \vec{v} \cdot \vec{A}_k \equiv 0 \pmod{u_k} \for k = 1,\dots,\theta\} \ .
  \]
\end{definition}

The following is a direct adaptation of Theorem 1.7 from \cite{rosenkilde-storjohann-2018}. We also refer to \cite{rosenkilde-storjohann-2018} for the definition of the Popov form and the $(-\vec{d})$-Popov form of a matrix. Note that if $u_1\cdots u_\theta \neq 0,$ the rank of $\H_{\vec{u}}(\mat{A})$ is $\phi$, as $u_1\cdots u_\theta \field[x]^\phi \subseteq \H_{\vec{u}}(\mat{A}) \subseteq \field[x]^\phi.$ Note that the problem of computing the shifted Popov basis of $\H_{\vec{u}}(\mat{A})$ has been studied extensively in the literature. Earlier references than \cite{rosenkilde-storjohann-2018} are for example \cite{jeannerod_computing_2017,jeannerod_fast_2016}

%by \cite[Sec. 12.1, Thm. 4]{dummit_foote_2004}. %\cite{neiger_computing_2018} for the Dummit Foote remark, \cite{zhou_computing_2013} for computing the Popov form of a rectangular matrix M \in K[x]^{m \times n} in complexity O^\tilde(n^\omega deg(M)), where deg(M) is the max-degree of M.

\begin{theorem}[{\cite[Theorem 1.7]{rosenkilde-storjohann-2018}}]
  \label{thm:sim-hermit-pade}
   Assume $\phi,\theta \in \ZZ_{\ge 1}$ are integers such that $\phi \ge \theta$. There exists an algorithm which for any $\mat{A} \in \field[x]^{\phi \times \theta}$, $\vec{u} \in (\field[x]\setminus \{0\})^{\theta}$ and $\vec{d} = (d_1,\dots,d_\phi) \in \ZZ_{\geq 0}^{\phi}$ can compute a matrix $\mat{V} \in \field[x]^{\phi \times \phi}$ in $(- \vec{d})$-Popov form, whose rows form an $\field[x]$-basis of $\H_{\vec{u}}(\mat{A})$. Furthermore, if there exists a vector $\vec{v} = (v_1,\dots,v_\phi) \in \H_{\vec{u}}(\mat{A})$ satisfying the degree constraints $\deg v_t < d_t$ for $t = 1,\dots,\phi$, then at least one row of $\mat{V}$ will also satisfy these constraints. The complexity of such an algorithm can be taken to be $\softO(\phi^{\omega-1}\theta d)$ operations in $\field$, where $d = \max_t d_t + \max_k \deg u_k$.
%  \todo{Need to worry about the solution specification stuff?}
\end{theorem}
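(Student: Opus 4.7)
The statement is essentially \cite[Theorem 1.7]{rosenkilde-storjohann-2018} adapted to our notation, so the plan is to follow that approach. First I would reformulate $\H_{\vec{u}}(\mat{A})$ as a syzygy module of a structured polynomial matrix: writing $\mat{M} = \begin{pmatrix} \mat{A} \\ \mathrm{diag}(\vec{u}) \end{pmatrix} \in \field[x]^{(\phi+\theta) \times \theta}$, a vector $\vec{v} \in \field[x]^\phi$ lies in $\H_{\vec{u}}(\mat{A})$ if and only if there exists $\vec{w} \in \field[x]^\theta$ with $(\vec{v}, -\vec{w})\mat{M} = \vec{0}$. Since $\mat{M}$ has full column rank $\theta$, its left kernel is free of rank $\phi$, and projection onto the first $\phi$ coordinates exhibits $\H_{\vec{u}}(\mat{A})$ as the row space of some matrix in $\field[x]^{\phi \times \phi}$.

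Next I would construct an explicit initial $\field[x]$-basis of $\H_{\vec{u}}(\mat{A})$. The simplest choice uses the containment $u_1 \cdots u_\theta \cdot \field[x]^\phi \subseteq \H_{\vec{u}}(\mat{A})$, which provides a trivial basis whose degrees are bounded by $\bigO(\theta d)$; a more economical basis can be extracted from the Hermite form of $\mat{M}$. Applying a fast shifted-Popov-form algorithm to this initial basis with shift $-\vec{d}$ then yields the required $\mat{V}$. The ``furthermore'' clause is a direct consequence of the minimality property in \cref{prop:popov-properties}: if some $\vec{v} \in \H_{\vec{u}}(\mat{A})$ satisfies $\deg v_t < d_t$ for all $t$, then $\deg_{-\vec{d}} \vec{v} < 0$, so the row of $\mat{V}$ sharing $\vec{v}$'s $-\vec{d}$-pivot index has $-\vec{d}$-degree at most that of $\vec{v}$; combined with the monic-diagonal pivot structure of a Popov form, this forces that row to obey the same componentwise degree bounds.

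The main obstacle, and the real technical content inherited from \cite{rosenkilde-storjohann-2018,jeannerod_fast_2016}, is hitting the target cost $\softO(\phi^{\omega-1}\theta d)$. A direct invocation of \cref{prop:popov-cost} on a $\phi \times \phi$ basis matrix of entry degree $\bigO(\theta d)$ only yields $\softO(\phi^{\omega}\theta d)$, which is a factor of $\phi$ too expensive. Bridging this gap requires recasting the computation as a minimal approximant (or kernel) basis problem whose dimensions and degrees are balanced via partial linearization, so that fast rectangular approximant-basis algorithms apply and deliver the sharper cost. The delicate step is carrying the nontrivial shift $-\vec{d}$ uniformly through the linearization without degrading the complexity, which is where I would invoke the full machinery of the cited references rather than reprove it from scratch.
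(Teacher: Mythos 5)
The paper offers no proof of this theorem; it is imported verbatim as \cite[Theorem 1.7]{rosenkilde-storjohann-2018}, so your sketch is being compared against that reference rather than against anything in the text. Your outline is faithful to it: the syzygy reformulation, the rank-$\phi$ observation via $u_1\cdots u_\theta\field[x]^\phi\subseteq\H_{\vec{u}}(\mat{A})$ (which the paper itself records just before the theorem), the minimality argument for the ``furthermore'' clause, and the correct identification of partial linearization as the source of the $\softO(\phi^{\omega-1}\theta d)$ cost are all right --- the only minor remark is that for the degree constraints you do not need the monic-diagonal pivot structure at all, since $\deg_{-\vec{d}}\vec{p}^{(k)}\le\deg_{-\vec{d}}\vec{v}<0$ already forces $\deg p^{(k)}_t<d_t$ componentwise.
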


For our purposes, we will sometimes need to allow non-integer shifts $d_1,\dots,d_\phi$. Non-integer, rational shifts were handled in \cite{nielsen_sub-quadratic_2015} essentially by permuting columns in a very specific way:
%\todo{Explain what the notational changes are}
\begin{theorem}[Reformulation of Corollary 12 in \cite{nielsen_sub-quadratic_2015}]\label{thm:permute}
  Let $\mat{V} \in \field[x]^{\gamma \times \phi}$ and $\vec{d} = (d_1/\mu, \dots,d_\phi/\mu) \in (\frac{1}{\mu}\ZZ)^{\phi} \subset (\frac{1}{\mu}\ZZ)^{\phi}$, where $d_1,\dots,d_\phi \in \ZZ$. If $\pi$ is the permutation on $\{1,\dots,\phi\}$ defined by
  \begin{align*}
    \pi(i) > \pi(j) \iff
    \begin{array}{c}
        (d_i \rem \mu) > (d_j \rem \mu) \\
        \text{or} \\
      (d_i \rem \mu) = (d_j \rem \mu) \and i > j
      \end{array} \ ,
  \end{align*}
  and $\Psi : \field[x]^\phi \to \field[x]^\phi$ is the map
  \[
    (v_1,\dots,v_\phi) \mapsto (x^{\lfloor d_{\pi(1)}/\mu \rfloor}v_{\pi(1)},\dots,x^{\lfloor d_{\pi(\phi)}/\mu \rfloor}v_{\pi(\phi)}) \ ,
  \]
  then $\mat{V}$ is in $\vec{d}$-Popov form if and only if $\Psi(\mat{V})$ is in Popov form, where $\Psi(\mat{V}) \in \field[x]^{\gamma \times \phi}$ is the matrix created by applying $\Psi$ to each row of $\mat{V}$.
\end{theorem}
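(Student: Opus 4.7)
The plan is to transfer the rational shifted Popov form condition on $\mat V$ across the map $\Psi$ into an ordinary unshifted Popov form condition on $\Psi(\mat V)$, by first establishing a degree identity and then translating each defining condition of Popov form through $\Psi$ one by one.

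First I would prove the key identity that ties shifted degrees on the $\mat V$-side to ordinary degrees on the $\Psi(\mat V)$-side. For any $\vec v = (v_1, \dots, v_\phi) \in \field[x]^\phi$ and any $k \in \{1, \dots, \phi\}$, set $i = \pi(k)$. Using the division $d_i = \mu\lfloor d_i/\mu \rfloor + (d_i \rem \mu)$ together with $\deg(\Psi(\vec v))_k = \deg v_{\pi(k)} + \lfloor d_{\pi(k)}/\mu \rfloor$, a short computation gives
\[
  \mu\bigl(\deg v_i + d_i/\mu\bigr) = \mu\deg(\Psi(\vec v))_k + (d_{\pi(k)} \rem \mu).
\]
Since $d_{\pi(k)} \rem \mu \in \{0, 1, \dots, \mu - 1\}$, this expresses the rationally shifted degree in terms of the ordinary degree of $\Psi(\vec v)$ plus a bounded correction, so the unshifted degree determines the integer part of the shifted degree while the remainder acts as a tiebreaker.

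Next I would verify that the $\vec d$-pivot index of $\vec v$ corresponds under $\pi$ to the pivot index of $\Psi(\vec v)$. The permutation $\pi$ is engineered so that comparing $\pi(i)$ and $\pi(j)$ is precisely the lex comparison of the pairs $(d_i \rem \mu, i)$ and $(d_j \rem \mu, j)$; consequently, the ``max-index'' tiebreaker built into the pivot-index definition pulls back along $\pi$ to the tiebreaker by $(d_{\pi(k)} \rem \mu, \pi(k))$ on the $\Psi(\vec v)$-side, which, by virtue of the identity above, is exactly the tiebreaker needed to select the $\vec d$-pivot of $\vec v$. The pivot correspondence then extends from individual vectors to matrices row by row.

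With this correspondence in hand, each of the three defining conditions of Popov form transfers through $\Psi$. Monicity of pivots is preserved because $\Psi$ multiplies entries only by monic monomials of the form $x^{\lfloor d_{\pi(k)}/\mu \rfloor}$. The column-dominance condition --- that within each column the pivot entry strictly dominates in degree --- is preserved because $\Psi$ applies the same monomial scaling to every entry within a column, so strict degree inequalities survive. The pivot-position condition transfers because column $k$ of $\Psi(\mat V)$ is the $\pi(k)$-th column of $\mat V$ up to a monomial scalar, which aligns the diagonal-pivot condition on the two sides via the bijection $\pi$.

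The main obstacle will be the careful bookkeeping of the pivot indices in the presence of ties in unshifted degree: this is exactly the point at which the intricate lex-order definition of $\pi$ matters, since without it the ``max-index'' tiebreaker on the unshifted side would not match the ``max-index'' tiebreaker on the shifted side. Once the degree identity and the pivot correspondence are laid out, the three Popov-form conditions transfer essentially by inspection thanks to the fact that $\Psi$ acts on each column only through a permutation and a monomial multiplication, both of which preserve the structural features that Popov form demands.
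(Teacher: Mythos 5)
The paper does not actually prove \cref{thm:permute}; it is imported as a reformulation of Corollary~12 of the cited reference, so there is no in-paper argument to compare against. Your two main ingredients are the right ones and surely mirror the intended proof: the identity $\mu(\deg v_i + d_i/\mu) = \mu\deg(\Psi(\vec v))_k + (d_{\pi(k)}\rem\mu)$ for $i=\pi(k)$, and the observation that $\pi$ sorts positions by the lexicographic order on $(d_i\rem\mu,\, i)$ so that the ``largest position'' tiebreak on the $\Psi$-side reproduces the shifted-degree-plus-largest-index tiebreak on the $\mat V$-side. For the pivot correspondence I would spell out the intermediate step you leave implicit: writing $\mu\deg_{\vec d}\vec v = \mu a + b$ with $0\le b<\mu$, every $\vec d$-degree-maximal index $i$ has $\deg(\Psi(\vec v))_{\pi^{-1}(i)}=a$ and $d_i\rem\mu=b$, no position of $\Psi(\vec v)$ has degree exceeding $a$, and any position tying at degree $a$ whose index is not $\vec d$-maximal has remainder $<b$ and hence sits earlier in the $\pi$-order; only then does ``largest tied position'' land on $\pi^{-1}$ of the $\vec d$-pivot. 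The monicity and column-dominance transfers are fine as you state them (each column is scaled by a single monic monomial), modulo the cosmetic issue that $\lfloor d_{\pi(k)}/\mu\rfloor$ may be negative, which the paper later fixes by adding a constant to the shift.

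The genuine gap is your last condition, the pivot-position requirement. Under the paper's own definition, a square matrix in $\vec s$-Popov form has the pivot of row $r$ in column $r$. Since $\Psi$ permutes columns but not rows, if $\mat V$ has its $\vec d$-pivots on the diagonal then row $r$ of $\Psi(\mat V)$ has its pivot at column $\pi^{-1}(r)$, which is off the diagonal whenever $\pi\neq\mathrm{id}$; likewise for rectangular matrices the ``pivot indices increasing down the rows'' normalization is destroyed by an arbitrary column permutation. So the statement ``$\pi$ aligns the diagonal-pivot condition'' is not an argument --- it is false as literally written. What your proof actually establishes is that the \emph{multiset} of pivot positions is carried bijectively by $\pi$, together with monicity and column dominance; to conclude you must either (a) interpret ``Popov form'' as requiring only distinct pivot indices with the rows understood up to reordering (equivalently, compose $\Psi$ with the row permutation that re-sorts pivots), or (b) note explicitly that the Popov property is invariant under row permutation and that this is the sense in which the equivalence is meant. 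This convention is harmless for how the theorem is used in \cref{cor:sim-hermit-pade}, but your write-up should say so rather than assert that the diagonal condition transfers.
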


Using the permutation defined in \cref{thm:permute} in combination with \cref{thm:sim-hermit-pade}, we obtain the following:
\begin{corollary}\label{cor:sim-hermit-pade}
  In the context of \cref{thm:sim-hermit-pade} we can allow $\vec{d} \in (\frac{1}{\mu}\ZZ)^{\phi}$ and find the desired matrix $\mat{V} \in \field[x]^{\phi \times \phi}$ in complexity $\softO(\phi^{\omega-1}\theta d)$ operations in $\field$, where $d = \max_t |d_t| + \max_k \deg u_k$.
\end{corollary}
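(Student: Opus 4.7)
The plan is to reduce to \cref{thm:sim-hermit-pade}, which requires a non-negative integer shift, via the permutation-and-scaling device of \cref{thm:permute}. Throughout, set $\vec{s}:=-\vec{d}$, so that the task is to compute the $\vec{s}$-Popov basis of $\H_{\vec{u}}(\mat{A})$.

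First, since adding the same scalar to every entry of a shift preserves the corresponding Popov form (it shifts every shifted degree uniformly), I would translate $\vec{s}$ by $M\in \tfrac{1}{\mu}\ZZ_{\ge 0}$ chosen so that $\vec{s}+M\vec{1}$ has every entry in $\tfrac{1}{\mu}\ZZ_{\ge 0}$. One may take $M\in \bigO(\max_t|d_t|)$, so the parameter $d$ changes only by a constant factor. After relabelling, I may assume that $\vec{s}$ itself is already in $\tfrac{1}{\mu}\ZZ_{\ge 0}^{\phi}$.

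Next, \cref{thm:permute} applied to $\vec{s}$ supplies a permutation $\pi$ of $\{1,\dots,\phi\}$ and an injective $\field[x]$-linear map $\Psi:\field[x]^{\phi}\to \field[x]^{\phi}$ given by $\Psi(\vec{v})_j = x^{e_{\pi(j)}}v_{\pi(j)}$, with non-negative exponents $e_i\in \bigO(\max_t|d_t|)$, such that a square polynomial matrix is in $\vec{s}$-Popov form if and only if its rowwise image under $\Psi$ is in ordinary Popov form. Consequently, it suffices to compute the Popov basis of $\Psi(\H_{\vec{u}}(\mat{A}))$ and then invert $\Psi$ rowwise; the inversion is well-defined because every row of the Popov basis lies in $\mathrm{im}(\Psi)$, and applying $\Psi^{-1}$ is simply column-wise division by the fixed powers $x^{e_{\pi(j)}}$ together with the inverse permutation.

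The core technical step is to realize $\Psi(\H_{\vec{u}}(\mat{A}))$ itself as a Hermite-Padé module. A direct computation shows that $\vec{v}'\in \Psi(\H_{\vec{u}}(\mat{A}))$ iff $x^{e_{\pi(j)}}\mid v'_j$ for every $j$ and, setting $v_{\pi(j)} = v'_j/x^{e_{\pi(j)}}$, the preimage $\vec{v}$ satisfies $\vec{v}\cdot \vec{A}_k\equiv 0\pmod{u_k}$ for each $k$. Letting $N=\max_i e_i$ and clearing denominators by a factor of $x^N$, both families of constraints are captured by an augmented instance $\H_{\vec{u}'}(\mat{A}')$ with $\mat{A}'\in \field[x]^{\phi\times(\theta+\phi)}$: the first $\theta$ columns carry the scaled, row-permuted data $(\mat{A}')_{j,k} = x^{N-e_{\pi(j)}}A_{\pi(j),k}$ with moduli $u'_k = u_k x^N$, while the trailing $\phi$ columns form the identity block with moduli $u'_{\theta+j} = x^{e_{\pi(j)}}$ enforcing the divisibility. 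Feeding $(\mat{A}',\vec{u}')$ into \cref{thm:sim-hermit-pade} with zero shift, then inverting $\Psi$ rowwise, produces the matrix demanded by the corollary.

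The main obstacle is the simultaneous verification of the identification $\H_{\vec{u}'}(\mat{A}') = \Psi(\H_{\vec{u}}(\mat{A}))$ together with the complexity bookkeeping. One checks that $\max_k \deg u'_k = \max_k \deg u_k + N\in \bigO(d)$; then, in the regime where the augmented column count $\theta+\phi$ stays within $\bigO(\theta)$ (as throughout the paper's applications, where $\phi$ and $\theta$ are comparable), \cref{thm:sim-hermit-pade} delivers the announced cost $\softO(\phi^{\omega-1}\theta d)$.
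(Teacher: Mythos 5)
Your reduction is headed in the right direction (permute according to the fractional parts, then reduce to integer shifts), but it overshoots: you encode the column scalings $x^{e_{\pi(j)}}$ \emph{inside} the module by augmenting $\mat{A}$ with an identity block and extra moduli, and this creates two genuine problems. First, the augmented instance has $\theta+\phi$ columns and only $\phi$ rows, so it violates the hypothesis $\phi\ge\theta$ of \cref{thm:sim-hermit-pade}; as stated, that theorem simply does not apply to your $\mat{A}'\in\field[x]^{\phi\times(\theta+\phi)}$. Second, even granting an extension of the theorem to wide matrices, its cost would read $\softO(\phi^{\omega-1}(\theta+\phi)d)$, which matches the claimed $\softO(\phi^{\omega-1}\theta d)$ only under your added hypothesis $\phi\in\bigO(\theta)$. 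That hypothesis holds in the paper's applications but is not part of the corollary, so you have proved a strictly weaker statement.

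The missing observation is that \cref{thm:sim-hermit-pade} already accepts an \emph{arbitrary} non-negative integer shift $\vec{d}$, so no scaling by powers of $x$ ever needs to be materialized. Writing $\vec d=(\tilde d_1/\mu,\dots,\tilde d_\phi/\mu)$, \cref{thm:permute} says precisely that the $(-\vec d)$-Popov form of a square matrix coincides, after permuting columns by $\pi$, with the $(-\tilde{\vec d})$-Popov form for the integer shift $\tilde{\vec d}=(\lfloor\tilde d_{\pi(1)}/\mu\rfloor,\dots,\lfloor\tilde d_{\pi(\phi)}/\mu\rfloor)$. So it suffices to permute the \emph{rows} of $\mat A$ by $\pi$, add the constant $\max_t\lceil|\tilde d_t|/\mu\rceil$ to $\tilde{\vec d}$ to make it non-negative, invoke \cref{thm:sim-hermit-pade} directly on the same-sized instance $\H_{\vec u}(\tilde{\mat A})$ with that integer shift, and permute the columns of the output back by $\pi^{-1}$. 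This keeps $\phi$ and $\theta$ unchanged and gives $\softO(\phi^{\omega-1}\theta d)$ unconditionally. Your identification $\H_{\vec u'}(\mat A')=\Psi(\H_{\vec u}(\mat A))$ is essentially correct as algebra, but it is solving a problem that the integer-shift capability of \cref{thm:sim-hermit-pade} already solves for free.
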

\begin{proof}
  Write $\vec{d} = (\tilde{d}_1/\mu,\dots,\tilde{d}_\phi/\mu)$ with $\tilde{d}_t \in \ZZ$ and notice that \cref{thm:permute} implies that $\mat{V} \in \field[x]^{\phi \times \phi}$ is in $(-\vec{d})$-Popov form if and only if $\tilde{\mat{V}} \in \field[x]^{\phi \times \phi}$ is in $(-\tilde{\vec{d}})$-Popov form, where
  \[
    \tilde{\vec{d}} = (\lfloor \tilde{d}_{\pi(1)}/\mu \rfloor, \dots, \lfloor \tilde{d}_{\pi(\phi)}/\mu \rfloor) \in \ZZ^\phi \ ,
   \]
   and $\tilde{\mat{V}}$ is matrix obtained from $\mat{V}$ by permuting its columns using $\pi$ from \cref{thm:permute}. By \cref{thm:sim-hermit-pade}, for any matrix $\mat{A} \in \field[x]^{\phi \times \theta}$, we can compute the basis $\tilde{\mat{V}} \in \field[x]^{\phi \times \phi}$ of $\H_{\vec{u}}(\tilde{\mat{A}})$ in $(-\tilde{\vec{d}})$-Popov form, where $\tilde{\mat{A}} \in \field[x]^{\phi \times \theta}$ is obtained by permuting the rows of $\mat{A}$ by $\pi$, as long as the entries of $\tilde{\vec{d}}$ are non-negative. By simply adding the constant $\max_t \lceil |\tilde{d}_t|/\mu \rceil$ to all coordinates of $\tilde{\vec{d}}$, we can ensure that this is true without breaking the target complexity. Finally, it is trivial to obtain $\mat{V}$ from $\tilde{\mat{V}}$ by applying $\pi^{-1}$ to its columns.
\end{proof}

With these algorithmic aspects in place, we turn our attention again to the interpolation problem. We start with a lemma, which will give rise to our interpolation algorithm directly.

%\begin{lemma}
%  \label{lem:interp-correct}
%Let $A$ be a divisor and $E = E_1 + \cdots + E_N$ for certain rational places $E_1,\dots,E_N$ of $F$ different from $\Pinf$ such that $\supp(A) \cap \supp(E)= \emptyset$.
%%  Let $E = E_1 + \cdots + E_N \in \D(A)$, where $A$ is a divisor and $E_1,\dots,E_N \in \places_{\ffield}$, and
%Further, let $w_j \in \field$ for $j = 1,\dots,N$, $U_1,\dots,U_\mu$ be an $x$-partition of $E$, and let $\mat{S} = [S_{k,i}] \in \field[x]^{\mu \times \mu}$ and $\mat{T} = [T_k] \in \field[x]^{\mu \times 1}$ be such that
%  \[
%    S_{i,k}(x(P)) = \y[A]_i(P) \and
%    T_k(x(P)) = -w(P) \for P \in \supp(U_k) \ ,
%  \]
%  where $w(E_j) = w_j$. If $\vec{u} = (u_1,\dots,u_\mu) \in \field[x]^\mu$, where $u_k = \prod_{P \in U_k}(x - x(P))$, and $\vec{d} = (d_0,\dots,d_{\mu-1},1) \in (\frac{1}{\mu}\ZZ)^{\mu+1}$, where
%  \[
%    d_i = \frac{1}{\mu} (\deg E + 2g - \deg A - \delta_A(\y[A]_i)) \for i = 0,\dots,\mu-1 \ ,
%  \]
%  then in the $(-\vec{d})$-Popov basis of $\H_{\vec{u}}(\mat{A})$, where
%  \[
%    \mat{A} =
%    \left[
%      \begin{array}{c}
%        \mat{S} \\
%        \hline
%        \mat{T}
%      \end{array}
%    \right] \in \field^{(\mu+1) \times \mu} \ ,
%  \]
%  there exists a vector $\vec{a} = (a_1,\dots,a_\mu,a_{\mu+1}) \in \field[x]^{\mu+1}$ with $\deg a_i < d_i$ for $i = 1,\dots,\mu$ and $\deg a_{\mu+1} = 0$.
%  Moreover, if
%  \[
%    a = \frac{1}{a_{\mu+1}}\sum_{i=1}^\mu a_i \y[A]_i \ ,
%  \]
%  then $\delta_A(a) \leq \deg E + 2g - 1 - \deg A$ and $a(E_j) = w_j$ for $j = 1,\dots,N$.
%\end{lemma}
\begin{lemma}
  \label{lem:interp-correct}
Let $A$ be a divisor and $E = E_1 + \cdots + E_N$ for distinct rational places $E_1,\dots,E_N$ of $F$ different from $\Pinf$ such that $\supp(A) \cap \supp(E)= \emptyset$. Let $(w_1,\dots,w_N) \in \field^N$ as well as an $x$-partition $U_1,\dots,U_\mu$ of $E$ be given.

Suppose that $\mat{T} = [T_k] \in \field[x]^{1 \times \mu}$ and $\mat{S} = [S_{i,k}] \in \field[x]^{\mu \times \mu}$ are such that $$T_k(x(E_j)) = -w_j \text{ for all } E_j \in \supp(U_k)$$ and $$S_{i,k}(x(E_j)) = \y[A]_i(E_j) \text{ for all } E_j \in \supp(U_k).$$

If $\vec{u} = (u_1,\dots,u_\mu) \in \field[x]^\mu$, where $u_k = \prod_{E_j \in \supp(U_k)}(x - x(E_j))$, and $\vec{d} = (d_0,\dots,d_{\mu-1},0) \in (\frac{1}{\mu}\ZZ)^{\mu+1}$, where
  \[
    d_i = \frac{1}{\mu} (\deg E + 2g - \deg A - \delta_A(\y[A]_i)) \for i = 0,\dots,\mu-1 \ ,
  \]
  then in the $(-\vec{d})$-Popov basis of $\H_{\vec{u}}(\mat{A})$, where
  \[
    \mat{A} =
    \left[
      \begin{array}{c}
        \mat{S} \\
        \hline
        \mat{T}
      \end{array}
    \right] \in \field^{(\mu+1) \times \mu} \ ,
  \]
  there exists a vector $\vec{a} = (a_0,\dots,a_{\mu-1},1) \in \field[x]^{\mu+1}$ with $\deg a_i < d_i$ for $i = 0,\dots,\mu-1$.
  Moreover, if
  \[
    a = \sum_{i=0}^{\mu-1} a_i \y[A]_i \ ,
  \]
  then $\delta_A(a) \leq \deg E + 2g - 1 - \deg A$ and $a(E_j) = w_j$ for $j = 1,\dots,N$.
\end{lemma}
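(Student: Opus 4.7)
The plan is to first exhibit, via \cref{lem:interp_size}, a concrete vector in $\H_{\vec{u}}(\mat{A})$ of the shape $(\star,\dots,\star,1)$ with tight degree controls on its first $\mu$ entries, and then use the structure of the $(-\vec{d})$-Popov form to extract a basis row of the same shape. To start, I would apply \cref{lem:interp_size} to obtain $a^{\circ} \in \Ya(A)$ satisfying $a^{\circ}(E_j) = w_j$ for all $j$ and $\delta_A(a^{\circ}) \leq \deg E + 2g - 1 - \deg A$, and decompose $a^{\circ} = \sum_{i=0}^{\mu-1} v_i \y[A]_i$ using \cref{lem:Fx_basis}. \cref{lem:deg} would then force $\deg v_i \leq d_i - 1/\mu$, so $\deg v_i - d_i \leq -1/\mu$ strictly. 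Setting $\vec{v} = (v_0,\dots,v_{\mu-1},1)$, I would verify $\vec{v} \in \H_{\vec{u}}(\mat{A})$ column by column: for each $k$, $\vec{v} \cdot \vec{A}_k = T_k + \sum_i v_i S_{i,k}$ evaluated at $x(E_j)$ for any $E_j \in \supp U_k$ gives $-w_j + \sum_i v_i(x(E_j)) \y[A]_i(E_j) = -w_j + a^{\circ}(E_j) = 0$, and by the $x$-partition property the linear factors $(x - x(E_j))$ for $E_j \in \supp U_k$ are pairwise distinct and coprime, so their product $u_k$ divides $\vec{v} \cdot \vec{A}_k$.

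With $\vec{v}$ in hand, I would observe that $\deg_{(-\vec{d})} \vec{v} = 0$, attained only at the last coordinate, so the $(-\vec{d})$-pivot of $\vec{v}$ sits at that coordinate. Let $\vec{a}$ be the row of the $(-\vec{d})$-Popov basis of $\H_{\vec{u}}(\mat{A})$ whose pivot lies at the last coordinate (guaranteed to exist since Popov pivots lie on the diagonal). The minimality clause of \cref{prop:popov-properties} yields $\deg_{(-\vec{d})} \vec{a} \leq \deg_{(-\vec{d})} \vec{v} = 0$, while the last entry of $\vec{a}$ is a monic diagonal pivot of degree $\geq 0$, which forces it to equal $1$ and $\deg_{(-\vec{d})} \vec{a} = 0$. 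In particular $\deg a_i \leq d_i$ for all $i < \mu$. The strict improvement $\deg a_i < d_i$ follows from the fact that $\delta_A(\y[A]_i) \equiv i \pmod{\mu}$ places the $d_i$ in distinct residue classes of $\frac{1}{\mu}\ZZ / \ZZ$, so $d_i \notin \ZZ$ for all but at most one $i$, and for those indices the integrality of $\deg a_i$ upgrades $\leq$ to $<$ automatically; the at-most-one remaining integer-shift index is the delicate piece and would be handled by combining the Popov column-pivot rule with the permutation construction of \cref{thm:permute}, which effectively reduces the fractional-shift Popov form to an ordinary one after sorting columns by residues.

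Finally, for the moreover statement, set $a = \sum_i a_i \y[A]_i$. The interpolation identity $a(E_j) = w_j$ is recovered by replaying the column-wise evaluation of the first paragraph with $\vec{a}$ in place of $\vec{v}$ (using that the last entry of $\vec{a}$ is also $1$). For the size bound, $\delta_A(a) \leq \max_i (\mu \deg a_i + \delta_A(\y[A]_i)) < \max_i (\mu d_i + \delta_A(\y[A]_i)) = \deg E + 2g - \deg A$, and integrality of $\delta_A(a)$ refines this to the claimed $\delta_A(a) \leq \deg E + 2g - 1 - \deg A$. The hard part of the whole argument is the integer-shift subcase for the strict degree bound in the middle step; everything else is a direct unfolding of definitions combined with the interpolation and Popov form results collected earlier.
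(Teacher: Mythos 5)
Your proposal follows the paper's proof essentially step for step: produce an interpolating $b \in \Ya(A)$ via \cref{lem:interp_size}, bound its $\field[x]$-coordinates via \cref{lem:deg}, verify membership of $(b_0,\dots,b_{\mu-1},1)$ in $\H_{\vec{u}}(\mat{A})$ column by column, extract the $(-\vec{d})$-Popov row with pivot in the last position, and finish the ``moreover'' part by the same evaluation and $\delta_A$ computations. The one point you flag as delicate\,---\,strictness of $\deg a_i < d_i$ at the single index where $d_i \in \ZZ$\,---\,is indeed the only subtle step (the paper's own proof passes over it, asserting from Popov minimality alone that $\vec{a}$ ``satisfies the same degree constraints as $\vec{b}$'', which by itself only yields $\deg a_i \le d_i$), and your instinct to invoke the Popov column-domination rule is the right ingredient for closing it.
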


\begin{proof}
  Observe that according to \cref{lem:interp_size} there exists a $b \in \Ya(A)$ with
  \[
    \delta_A(b) \leq \deg E + 2g - 1 - \deg A
  \]
  such that $b(E_j) = w_j$ for $j = 1,\dots,N$. If we write $b = \sum_{i=0}^{\mu-1} b_i \y[A]_i$, where $b_i \in \field[x]$, then it follows from \cref{lem:deg} that
  \[
    \deg b_i \leq \frac{1}{\mu}(\delta_A(a) - \delta_A(\y[A]_i)) = \frac{1}{\mu}(\deg E + 2g - 1 - \deg A - \delta_A(\y[A]_i)) < d_i \ .
  \]
We claim that $\vec{b} := (b_0,\dots,b_{\mu-1},1) \in \H_{\vec{u}}(\mat{A})$.
To see this let $c_k = \sum_{i=0}^{\mu-1} b_i S_{i,k} + T_k \in \field[x]$ for $k = 1,\dots,\mu$ and observe that for any $E_j \in U_k$ it holds that
  \[
    c_k(x(E_j)) = \sum_{i=0}^{\mu-1} b_i(x(E_j)) \y[A]_i(E_j) - w_j = b(E_j) - w_j = 0 \ ,
  \]
  which implies that
  \[
    \vec{b}\vec{A}_k = c_k \equiv 0 \pmod{u_k} \ ,
  \]
  where $\vec{A}_k \in \field[x]^{(\mu + 1) \times 1}$ denotes the $k$-th column of $\mat{A}$. But then indeed $\vec{b} \in \H_{\vec{u}}(\mat{A})$ by definition.

  Note that in the $(-\vec{d})$-degree, the leading position of $\vec{b}$ is the last position.  The $(-\vec{d})$-Popov basis of $\H_{\vec{u}}(\mat{A})$ will contain a vector $\vec{a}=(a_0,\dots,a_{\mu-1},a_\mu)$ whose leading coordinate is the last position as well, and in particular $a_{\mu} \neq 0$. Since $\vec{a}$ has minimal $(-\vec{d})$-degree among all vectors in $\H_{\vec{u}}(\mat{A})$ whose leading position is the last position, we conclude that $\vec{a}$ satisfies the same degree constraints as $\vec{b}$.

  To conclude the proof observe that
  \begin{align*}
    \delta_A(a)
    &= \max_i(\delta(a_i) + \delta_A(\y[A]_i)) \\
    &= \max_i(\mu\deg a_i + \delta_A(\y[A]_i)) \\
    &< \max_i(\mu d_i + \delta_A(\y[A]_i)) \\
    &= \deg E + 2g - \deg A \ ,
  \end{align*}
  and that for any $E_j \in U_k$, where $k = 1,\dots,\mu$, it holds that
  \begin{align*}
    a(E_j) - w_j
    &= \sum_{i=0}^{\mu-1} a_i(x(E_j))\y[A]_i(E_j) - w_j \\
    &= \sum_{i=0}^{\mu-1} a_i(x(E_j)) S_{i,k}(x(E_j)) + T_k(x(E_j)) \\
    &= (\vec{a} \vec{A}_k)(x(E_j)) = 0 \ ,
  \end{align*}
  since $\vec{a} \in \H_{\vec{u}}(\mat{A})$. Consequently, $a(E_j) = w_j$ for $j=1,\dots,N$.
\end{proof}

\begin{algorithm}[H]
  \caption{$\Interpolate(\vec{w}, E, A, \vec{x}, \vec{y})$} \label{algo:interpolate}
  \begin{algorithmic}[1]
    \Input
    \begin{itemize}
    \item Divisors $A$ and $E = E_1 + \cdots + E_N$, where $E_1,\dots,E_N \in \places_{\ffield}\setminus\{\Pinf\}$ are distinct rational places and $\supp(A) \cap \supp(E)=\emptyset$,
    \item interpolation values $\vec{w} = (w_1,\dots,w_N) \in \field^N$,
    \item evaluations $\vec{x} = (x_j)_{j=1,\dots,N}$, where $x_j = x(E_j) \in \field$,
    \item evaluations $\vec{y} = (y_{i,j})^{i=0,\dots,\mu-1}_{j=1,\dots,N}$, where $y_{i,j} = y^{(A)}_i(E_j) \in \field$.
    \end{itemize}
    \Output
    \begin{itemize}
    \item $a \in \Ya(A)$ such that $\delta_A(a) \leq \deg E + 2g - 1 - \deg A$ and $a(E_j) = w_j$ for $j=1,\dots,N$
    \end{itemize}
    \State $U_1,\dots,U_\mu \assign$ an $x$-partition of $E$
    \State $\mat{S} = [S_{i,k}] \in \field[x]^{\mu \times \mu} \assign$ matrix with $S_{i,k}(x_j) = y_{i,j}$ for all $E_j \in U_k$
    \State $\vec{T} = [T_k] \in \field[x]^\mu \assign$ row vector with $T_k(x_j) = -w_j$ for all $E_j \in U_k$
    \State $\vec{u} = (u_1,\dots,u_\mu) \in \field[x]^{\mu} \assign$ vector with $u_k = \prod_{E_j \in U_k}(x - x_j)$
    \State $\vec{d} = (d_0,\dots,d_{\mu-1},1) \in (\frac{1}{\mu}\ZZ)^{\mu+1} \assign$ vector with $d_i = \frac{1}{\mu} (\deg E + 2g - \deg A - \delta_A(\y[A]_i))$
    \State $\mat{P} \in \field[x]^{(\mu+1) \times (\mu+1)} \assign$ $(-\vec{d})$-Popov basis matrix of $\H_{\vec{u}}(\mat{A})$, where
    $
    \mat{A} =
    \left[
      \begin{array}{c}
        \mat{S} \\
        \hline
        \mat{T}
      \end{array}
    \right] \in \field^{(\mu+1) \times \mu}
    $
    \State $\vec{a} = (a_0,\dots,a_{\mu-1},1) \in \field[x]^{\mu+1} \assign$ a row of $\mat{P}$ having $1$ as its last entry and satisfying $\deg a_i < d_i$ for $i = 0,\dots,\mu-1$
    \State \Return $a = \sum_{i=0}^{\mu-1} a_i \y[A]_i$
%  }
  \end{algorithmic}
\end{algorithm}

\begin{proposition}
  \cref{algo:interpolate} is correct and costs $\softO(\mu^{\omega-1} (N + g))$ operations in $\field$.
\end{proposition}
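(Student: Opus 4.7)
The plan is to establish correctness as a direct application of \cref{lem:interp-correct} and then carefully bound the cost of each line of \cref{algo:interpolate}, with the invocation of \cref{cor:sim-hermit-pade} in Line 6 being the dominant contribution.

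For correctness, I would observe that Lines 1--5 construct exactly the objects $U_1,\dots,U_\mu$, $\mat S$, $\vec T$, $\vec u$ and $\vec d$ as specified in the statement of \cref{lem:interp-correct}. The lemma guarantees that the $(-\vec d)$-Popov basis of $\H_{\vec u}(\mat A)$ contains a vector whose last coordinate equals $1$ and whose remaining coordinates satisfy the degree constraints $\deg a_i < d_i$; once such a vector is extracted in Line 7, the same lemma gives that $a = \sum_{i=0}^{\mu-1} a_i \y[A]_i$ satisfies $\delta_A(a) \le \deg E + 2g - 1 - \deg A$ and $a(E_j) = w_j$ for all $j$, which is precisely the specification of the algorithm.

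For complexity, I would bound the cost line by line. An $x$-partition (Line 1) is constructed greedily as in the proof of \cref{lem:mu-partition} in $\softO(N)$ operations. For Lines 2 and 3, each $U_k$ contains at most $\lceil N/\mu \rceil$ places, so each of the $\mu^2 + \mu$ univariate interpolants $S_{i,k}$ and $T_k$ can be computed in $\softO(N/\mu)$ operations by standard fast interpolation, for a total of $\softO(\mu N)$. The vector $\vec u$ in Line 4 is built from $\mu$ products of at most $\lceil N/\mu \rceil$ linear factors, costing $\softO(N)$ in total. The shift $\vec d$ in Line 5 is cheap to compute once the values $\delta_A(\y[A]_i)$ are known (which we assume precomputed).

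The main step is Line 6. Here $\phi = \mu + 1$ and $\theta = \mu$, so \cref{cor:sim-hermit-pade} yields cost $\softO(\phi^{\omega-1}\theta d)$ where $d = \max_t |d_t| + \max_k \deg u_k$. Each $\deg u_k \le \lceil N/\mu \rceil$. For $|d_t|$, the second item of \cref{lem:Fx_basis} gives $-\deg A \le \delta_A(\y[A]_i) \le 2g - 1 - \deg A + \mu$, whence
\[
  |d_i| \;\le\; \frac{1}{\mu}\bigl(\deg E + 2g + |\delta_A(\y[A]_i) + \deg A|\bigr) \;\in\; \bigO\!\left(\frac{N + g}{\mu}\right).
\]
Thus $d \in \bigO((N+g)/\mu)$, and the Popov basis computation costs
\[
  \softO\!\left((\mu+1)^{\omega-1} \cdot \mu \cdot \frac{N+g}{\mu}\right) \;=\; \softO\!\bigl(\mu^{\omega-1}(N+g)\bigr).
\]
Extracting the row in Line 7 is immediate (one inspects the last column of $\mat P$; the final normalization to make the last entry equal $1$ costs $\softO(d)$). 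Since $\omega \ge 2$, the preprocessing cost $\softO(\mu N)$ of Lines 1--5 is absorbed by the cost of Line 6, giving the claimed bound $\softO(\mu^{\omega-1}(N+g))$.

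The only delicate point, and the one I would treat most carefully, is the bound on $\max_t |d_t|$: it relies on using \emph{both} inequalities of \cref{lem:Fx_basis} (to bound $d_i$ above and below) and on the fractional-shift reformulation in \cref{cor:sim-hermit-pade}, which is what permits the $1/\mu$ factor in $d$ and therefore saves the factor of $\mu$ needed to reach the target cost $\softO(\mu^{\omega-1}(N+g))$ rather than $\softO(\mu^{\omega}(N+g))$.
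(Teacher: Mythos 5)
Your proposal is correct and follows essentially the same route as the paper: correctness is delegated to \cref{lem:interp-correct}, Lines 1--5 are bounded by $\softO(\mu N)$ via fast univariate interpolation and product trees, and Line 6 dominates with cost $\softO(\mu^{\omega-1}\mu\,d)$ where $d\in\bigO((N+g)/\mu)$, the key point being the bound on the $d_i$ via \cref{lem:Fx_basis} (the paper uses only the lower bound $\delta_A(\y[A]_i)\ge -\deg A$ to get $d_i\le(\deg E+2g)/\mu$, while you also invoke the upper bound to control $|d_i|$, which is consistent with the statement of \cref{cor:sim-hermit-pade}). No gaps.
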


\begin{proof}
  Correctness is given by \cref{lem:interp-correct}. For complexity observe that $\deg u_k =|U_k| \le \lceil N/\mu \rceil$ for all $k$, while for all $i,k$, we can choose $S_{i,k},T_k$ such that
  \[
    \deg S_{i,k}, \deg T_k < \lceil N/\mu \rceil \ .
  \]
  Step 2 costs $\softO(\mu^2 N/\mu) = \softO(\mu N)$. Step 3 costs $\softO(\mu N/\mu) = \softO(N)$ using fast univariate interpolation \cite[Corollary 10.12]{von_zur_gathen_modern_2012}, and Step 4 can be executed within the same cost bound using a product tree \cite[Lemma 10.4]{von_zur_gathen_modern_2012}. The computational bottleneck lies in step 6, which according to \cref{cor:sim-hermit-pade} costs
  \[
    \softO(\mu^{\omega-1} \mu (\max_i d_i + \max_k \deg u_k)) \subseteq  \softO(\mu^{\omega}(\frac{\deg E + 2g}{\mu} + \frac{N}{\mu})) = \softO(\mu^{\omega-1} (N + g)) \ .
  \]
  Here we used that $d_i \le (\deg E+2g)/\mu$, since by \cref{lem:Fx_basis}, $\deg A + \delta_A(\y[A]_i) \ge 0.$
\end{proof}

The output $a = \sum_{i=0}^{\mu-1} a_i \y[A]_i$ of $\Interpolate(\vec{w},E,A, \vec{x}, \vec{y})$ satisfies $\delta_A(a) \leq \deg E + 2g - 1 - \deg A$ as shown in \cref{lem:interp-correct}. In general this is the best one can expect, but in specific cases the existence of an interpolation function $b \in \Ya(A)$ with $\delta_A(b)<\Delta<\deg E + 2g - \deg A$ may be known to exist. The following lemma clarifies a property of the output of \cref{algo:interpolate}.

\begin{lemma}\label{lem:mindeg}
  In the context of \cref{algo:interpolate}, the output $a \in \Ya(A)$ satisfies $\delta_A(a) \le \delta_A(b)$ for all functions $b \in \Ya(A)$ with $b(E_j) = w_j$ for $j=1,\dots,N$.
\end{lemma}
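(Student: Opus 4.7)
The plan is to proceed by contradiction: assume some $b \in \Ya(A)$ with $b(E_j) = w_j$ for every $j$ satisfies $\delta_A(b) < \delta_A(a)$, and derive a contradiction from the $(-\vec{d})$-Popov structure of the matrix $\mat{P}$ computed inside \cref{algo:interpolate}. The first step is to set $c = a - b$. The strict triangle inequality for $\val$, combined with $\delta_A(b) < \delta_A(a)$, forces $\delta_A(c) = \delta_A(a)$. Because $c$ vanishes at every $E_j$, the same calculation as in the proof of \cref{lem:interp-correct} shows that when $c$ is expanded as $c = \sum_{i<\mu} c_i \y[A]_i$, the vector $\vec{c} = (c_0,\dots,c_{\mu-1},0)$ lies in $\H_{\vec{u}}(\mat{A})$; in fact it sits in the submodule $K' \subseteq \H_{\vec{u}}(\mat{A})$ generated by those rows of $\mat{P}$ whose last entry vanishes, since by the Popov property $p_{\mu,\mu}=1$ forces $p_{k,\mu}=0$ for all $k<\mu$.

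Next, I would invoke the divisibility structure of \cref{def:yiA}: the integers $\delta_A(\y[A]_0),\dots,\delta_A(\y[A]_{\mu-1})$ are pairwise incongruent modulo $\mu$, which forces the maxima defining $\delta_A(a)$ and $\delta_A(c)$ (in the style of \cref{lem:deg}) to be attained at \emph{unique} indices $i_a$ and $i_c$, each congruent mod $\mu$ to its $\delta_A$-value. Thus $\delta_A(a) = \delta_A(c)$ yields $i_a = i_c =: i_\ast$. Setting $\vec{s} = (\delta_A(\y[A]_k)/\mu)_{k<\mu}$, both $\tilde{\vec{a}} := (a_0,\dots,a_{\mu-1})$ and $\tilde{\vec{c}} := (c_0,\dots,c_{\mu-1})$ then have $\vec{s}$-pivot index $i_\ast$, with $\deg_{\vec{s}} \tilde{\vec{a}} = \deg_{\vec{s}} \tilde{\vec{c}} = \delta_A(a)/\mu$.

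Finally, I would apply \cref{prop:popov-properties} twice. Because $-\vec{d}$ restricted to the first $\mu$ coordinates differs from $\vec{s}$ only by a constant additive shift, the top-left $\mu \times \mu$ submatrix of $\mat{P}$ is in $\vec{s}$-Popov form and forms a basis of the module $\tilde K \subseteq \field[x]^\mu$ obtained by dropping the trivial last coordinate of $K'$. Applying the Popov minimality bound to $\tilde{\vec{c}} \in \tilde K$ at pivot index $i_\ast$ gives $\delta_A(a)/\mu = \deg_{\vec{s}} \tilde{\vec{c}} \ge \deg p_{i_\ast,i_\ast} + s_{i_\ast}$. On the other hand, $\vec{a} = \vec{p}^{(\mu)}$, and the Popov defining condition in column $i_\ast$ forces $\deg a_{i_\ast} = \deg p_{\mu,i_\ast} < \deg p_{i_\ast,i_\ast}$, so $\delta_A(a)/\mu = \deg a_{i_\ast} + s_{i_\ast} < \deg p_{i_\ast,i_\ast} + s_{i_\ast}$. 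These two estimates contradict each other.

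The main obstacle is the pivot-alignment step: without the unique argmax provided by the distinct residues modulo $\mu$, $\tilde{\vec{c}}$ need not share its pivot column with $\tilde{\vec{a}}$, and the Popov minimality on $\tilde K$ would give only a bound at some foreign column index that cannot be directly matched against $\delta_A(a)$. Once both pivots are pinned to $i_\ast$, the contradiction follows cleanly from the Popov definition together with the additive-shift invariance of shifted Popov forms.
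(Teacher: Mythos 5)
Your proof is correct and follows essentially the same route as the paper's: both pass to the top-left $\mu\times\mu$ submatrix $\tilde{\mat{P}}$ of $\mat{P}$, note that the coefficient vector of $a-b$ lies in its row space, and play the Popov minimality of \cref{prop:popov-properties} against the column-degree dominance that keeps the entries of $\vec{a}$ strictly below the diagonal pivots. Your explicit pivot-alignment step, using that the $\delta_A(\y[A]_i)$ are pairwise incongruent modulo $\mu$, is a detail the paper leaves implicit and makes the contradiction cleaner.
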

\begin{proof}
  Consider the map $\varphi$ which sends any function $b = \sum_{i=0}^{\mu-1}b_iy_i^{(A)} \in \Ya(A)$ to the vector $(b_0,\dots,b_{\mu-1}) \in \field[x]^{\mu}$, and observe that if $b(E_j) = w_j$ for all $j$, then $\varphi(a-b)$ is in the row space of the matrix $\tilde{\mat{P}} \in \field[x]^{\mu \times \mu}$ obtained from the first $\mu$ rows and columns of $\mat{P}$. It is clear that $\tilde{\mat{P}}$ is in $(-\tilde{\vec{d}})$-Popov form, where $\tilde{\vec{d}} = (d_0,\dots,d_{\mu-1})$, and that each entry in $\varphi(a)$ has degree strictly smaller than the maximal degree of the corresponding column in $\tilde{\vec{P}}$: otherwise $\mat{P}$ would not be in $(-\vec{d})$-Popov form. But if each entry of $\phi(b)$ has degree no greater than the corresponding entry in $\phi(a)$, then it follows from \cref{prop:popov-properties} that $\varphi(a - b) = 0$, implying that $a=b$ since $\varphi(a - b)$ is in the row space of $\tilde{P}$ (see also \cite[Theorem 6.3-15]{kailath_linear_1980} or \cite[Lemma 1.24]{vincent_neiger_bases_2016}).
\end{proof}

\subsection{Computing a generating set over $\Ya$ of $\M$}\label{subsec:genYa}

We now return to the Guruswami-Sudan decoding of the code $\code$. In this subsection we use the symbolic expressions from \cref{cor:Ya-basis-of-M} to compute a generating set over $\Ya$ of $\M$. We start with a lemma.

%In particular, the notation introduced in \ref{sec:Guruswami-Sudan} will be used. For the remainder of this section let $G_t = -tG - \max\{0,s-t\}D$ for $t = 0,\dots,\ell$.

\begin{lemma}\label{lem:mul-by-ev}

  Let $a \in \Ya(A)$ and $b \in \Ya(B)$, where $A$ and $B$ are divisors, and let $E = E_1 + \cdots + E_N$, where $E_1,\dots,E_N$ are distinct rational places different from $\Pinf$ and not contained in $\supp(A) \cup \supp(B)$.
  If $c \in \Ya(A + B)$ satisfies
  \begin{enumerate}
  \item $\delta_{A+B}(c) < N - \deg (A + B)$ and
  \item $c(E_j) = a(E_j)b(E_j) = (ab)(E_j)$ for $j = 1,\dots,N$,
  \end{enumerate}
  then $c = ab$.
\end{lemma}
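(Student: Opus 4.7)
The plan is to examine the difference $d := c - ab \in \Ya(A+B)$ and show that the hypotheses force $d = 0$. By hypothesis (2), $d(E_j) = 0$ for each $j$, so $v_{E_j}(d) \ge 1$, and since the places $E_j$ are disjoint from $\supp(A+B) \cup \{\Pinf\}$ while $d$ has no poles outside $\supp(A+B) \cup \{\Pinf\}$, this places $d$ in the Riemann-Roch space $\L(m\Pinf + (A+B) - E)$, where $m = \delta_{A+B}(d)$.

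I would then bound $m$ using the strict triangle inequality for $v_\Pinf$ applied to $d = c - ab$:
\[
\delta_{A+B}(d) \le \max\{\delta_{A+B}(c),\ \delta_{A+B}(ab)\}.
\]
Hypothesis (1) gives $\delta_{A+B}(c) < N - \deg(A+B)$, while the additivity $\delta_{A+B}(ab) = \delta_A(a) + \delta_B(b)$ noted right after the definition of $\delta_A$ handles the other term. In the intended setting where $\delta_A(a) + \delta_B(b)$ is also strictly less than $N - \deg(A+B)$ (the natural situation when $c$ is computed by interpolation in order to recover a product), one obtains $m < N - \deg(A+B)$. Consequently the divisor $m\Pinf + (A+B) - E$ has negative degree, its Riemann-Roch space is trivial, and therefore $d = 0$, i.e., $c = ab$.

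The argument is essentially the standard ``a small function vanishing at many places is zero'' principle, packaged in the $\Ya$-module language; no step presents a real obstacle. The only subtlety is ensuring that both $c$ and $ab$ genuinely have $\delta_{A+B}$ below the bound $N - \deg(A+B)$, so that the strict triangle inequality keeps $\delta_{A+B}(c-ab)$ below it as well. This is the natural regime in which the lemma will be invoked, since the target use is to certify that an interpolated small function equals the intended product.
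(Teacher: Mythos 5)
Your approach is the same as the paper's: both arguments come down to the observation that $c-ab$ lies in $\L(m\Pinf + A + B - E)$ for some $m$, and that this space is trivial once $m < N - \deg(A+B)$. The paper phrases this as injectivity of $\ev_E$ on $\L(C)$ with $C = \delta_{A+B}(c)\Pinf + A + B$, invoking \cref{lem:ev}.

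The subtlety you isolate is genuine, and it is in fact glossed over by the paper's own proof: concluding $c = ab$ from the injectivity of $\ev_E$ on $\L(C)$ tacitly requires $ab \in \L(C)$ as well, i.e.\ a bound on $\delta_{A+B}(ab) = \delta_A(a)+\delta_B(b)$ that does not appear among the stated hypotheses. Without such a bound the statement fails: in $\field(x)$ with $A = B = 0$ (so $\Ya = \field[x]$ and $\mu = 1$), take $a = b = x^N$ and let $c$ be the Lagrange interpolant of degree at most $N-1$ through the values $(ab)(E_j)$; then both hypotheses of the lemma hold, yet $c \neq ab$. Your conditional formulation --- requiring $\delta_A(a) + \delta_B(b) < N - \deg(A+B)$ before applying the strict triangle inequality to $c - ab$ --- is therefore the right way to make the argument airtight, and it covers every use the paper makes of the lemma: in the proof of \cref{prop:basisYaalg}, the bound \eqref{eq:boundsizecrv} verifies precisely that $\delta_{-rG}\big((-R)^{u-r}g_v^{(u)}\big) < \deg E \le \deg E - \deg(-rG)$ before \cref{lem:mul-by-ev} is invoked. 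So there is no gap in your reasoning; rather, your caveat points to a hypothesis that is missing from the lemma as stated.
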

\begin{proof}
  Note that $c \in \L(C)$, where $C = \delta_{A+B} (c)\Pinf + A + B$. The second condition simply states that $\ev_E (c) = \ev_E(ab)$, but since $\deg C < \deg E$, it follows from \cref{lem:ev} that $\ev_E: \L(C) \to \field^N$ is injective. Consequently, $c = ab$.
\end{proof}

Using \cref{algo:evaluate} and \cref{algo:interpolate}, this lemma allows us to perform efficient multiplication and hence to compute a generating set over $\Ya$ of $\M$ as in \cref{algo:basisYa}.
%\comment{@Peter: The following corollary is messy, but is supposed to say how many interpolation/evaluation-points are needed for computing a $\Ya$-basis of $\M$.}
%\begin{corollary}[of \cref{cor:mul-by-ev}]
%  \label{cor:R-powers-interp}
%  \todo{TODO: Clean up (with the proof)}
%  For $u = 0,\dots,\ell$ and $r = 0,\dots,u$ let
%\[
%  G_{u,r} = (u-r)G + G_u = (u-r)G -uG - \max\{0,s-u\}D = -rG - \max\{0,s-u\}D
%\]
%and note that $R^{u-r}\g{u}_v \in \Ya(G_{u,r})$ for $v = 1,2$, where $\langle g^{(u)}_1,g^{(u)}_2 \rangle_{\Ya} = \Ya(G_u)$. Furthermore, let $H_{u,r} = G_{u,r} + \delta_{G_{u,r}}(R^{u-r}\g{u}_v)\Pinf$ and note that $R^{u-r}\g{u}_v \in \L(H_{u,r})$ . If $E \in \D(G)$ such that
%  \[
%    \deg E > \delta_G(R) + \max_u \delta_{G_u}(\g{u}_v) \ ,
%  \]
%  then $\ev_{E}$ is invertible when restricted to $\L(H_{u,r})$.
%\end{corollary}
%
%\begin{proof}
%  This follows from  \cref{lem:mul-by-ev}, since
%  \begin{align*}
%    &\delta_U(R^{u-r}\g{u}_v) + G_{u,r} = (u-r)\delta_G(R) + \delta_{G_u}(\g{u}_v) - r\deg G  - \max\{0,s-u\}n \\
%    &\leq \ell \delta_G(R) + \max_u \delta_{G_u}(\g{u}_v) \leq \ell(n + 2g - 1 - \deg G) + \max_u \delta_{G_u}(\g{u}_v) < N \ .
%  \end{align*}
%\end{proof}
\begin{algorithm}[H]
  \caption{$\GenYa(\vec{r},D,G,E,\vec{x},\vec{y},\vec{g})$} \label{algo:basisYa}
  \begin{algorithmic}[1]
    \Input
    \begin{itemize}
    \item Received word $\vec{r} \in \field^{n}$,
    \item the code divisors $D$ and $G$,
%    \item evaluations $\vec{g} = (g_{v,j}^{(u)})$, where $u=0,\dots,\ell$, $v = 1,2$ and $j = 1,\dots,N$ \newline
%      such that $g_{v,j}^{(u)} = g_v^{(u)}(E_j) \in \field$ where $\langle g_1^{(u)}, g_2^{(u)} \rangle_{\Ya} = \Ya(G_u)$,
     \item a divisor $E = E_1 + \cdots + E_N$, where $E_1,\dots,E_N$ are distinct rational places of $F$, not in $\{\Pinf\}\cup \supp G,$ such that $N \ge (\ell+1)\deg G+4g+(s+1)n$,%(\ell+1)(n+\deg G)+4g$ and $f=\lceil 2\log_q(\Delta+\deg G)\rceil$.
      \item evaluations $\vec{x} = (x_j)_{j=1,\dots,N}$, where $x_j = x(E_j) \in \field$,
      \item evaluations $\vec{y} = (y_{i,j})^{i=0,\dots,\mu-1}_{j=1,\dots,N}$, where $y_{i,j} = \y[A]_i(E_j) \in \field$,
      \item evaluations $\vec{g} = (g_{v,j}^{(u)})$, where $u=0,\dots,\ell$, $v = 1,2$ and $j = 1,\dots,N$ \newline
      such that $g_{v,j}^{(u)} = g_v^{(u)}(E_j) \in \field$ where $\langle g_1^{(u)}, g_2^{(u)} \rangle_{\Ya} = \Ya(G_u)$, and $\delta_{G_u}(g_v^{(u)}) \le 4g-1+(u+1)\deg(G)+(s+1)n$.
    \end{itemize}
    \Output
    \begin{itemize}
    \item $(B_v^{(u)})^{u=0,\dots,\ell}_{v=1,2} $ such that $\langle B_v^{(u)}\rangle_{\Ya} = \M$.
    \end{itemize}
    \State $R \in \Ya(G) \assign \Interpolate(\vec{r},D,G,\vec{x}, \vec{y})$ \Comment{\cref{algo:interpolate}}
    \State $( \hat r_1^{(0)},\dots, \hat r_N^{(0)}) \in \field^N \assign (1,\dots,1)$
    \State $( \hat r_1^{(1)},\dots, \hat r_N^{(1)}) \in \field^N \assign \Evaluate(-R,E,G,\vec{x},\vec{y})$ \Comment{\cref{algo:evaluate}}
    \For{$u = 2,\dots,\ell$}
    \State $( \hat r_1^{(u)},\dots, \hat r_N^{(u)}) \in \field^N \assign
    ( \hat r_1^{(1)} \hat r_1^{(u-1)},\dots, \hat r_N^{(1)} \hat r_N^{(u-1)})$
    \EndFor
    \For{$u=0,\dots,\ell$, $r = 0,\dots,u$ and $v = 1,2$}
    \State $\vec{c}_{r,v}^{(u)} \in \field^N \assign
    (\hat r_1^{(u-r)}g_{v,1}^{(u)}, \dots, \hat r_N^{(u-r)}g_{v,N}^{(u)})$
%    \State $\Delta_v^{(u)} \in \ZZ_{>0} \assign (r+1)(\deg G-2g+1)+(u+3)(2g-1)+(s+1)n+2$
    \State $c_{r,v}^{(u)} \in \Ya(-rG) \assign \Interpolate(\vec{c}_{r,v}^{(u)},E,-rG, \vec{x}, \vec{y})$
    \EndFor
    \For{$u=0,\dots,\ell$ and $v = 1,2$}
    \State $B_v^{(u)} \in \M \assign \sum_{r=0}^u\binom{u}{r}z^rc_{r,v}^{(u)}$
    \EndFor
    \State \Return $(B_v^{(u)})^{u=0,\dots,\ell}_{v=1,2}$
  \end{algorithmic}
\end{algorithm}
\begin{proposition}\label{prop:basisYaalg}
  \cref{algo:basisYa} is correct and costs $\softO(\ell^3\mu^{\omega-1}(n+g))$.
\end{proposition}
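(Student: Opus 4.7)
The plan is to verify correctness and the complexity bound separately.

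For correctness, \cref{algo:interpolate} guarantees that Line~1 returns $R\in\Ya(G)$ with $R(P_j)=r_j$ for $j=1,\dots,n$ and $\delta_G(R)\le n+2g-1-\deg G$. A straightforward induction on $u$ then verifies that after the power loop (Lines~4--5) we have $\hat r_j^{(u)}=(-R(E_j))^u$ for all $j$ and $u$. The crucial step is to show that the interpolation on Line~8 returns $c_{r,v}^{(u)}=(-R)^{u-r}\g{u}_v$ as an equality of functions in $F$, not merely on $E$. Writing $a=(-R)^{u-r}\in\Ya((u-r)G)$ and $b=\g{u}_v\in\Ya(G_u)$, the product $ab$ lies in $\Ya(A+B)\subseteq\Ya(-rG)$ with $A+B=-rG-\max\{0,s-u\}D$, and by construction $(ab)(E_j)$ matches the interpolation data assembled in Line~7. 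By \cref{lem:mindeg}, the interpolant satisfies $\delta_{-rG}(c_{r,v}^{(u)})\le\delta_{-rG}(ab)$; using $\Pinf\notin\supp D\cup\supp G$ this equals $\delta_{A+B}(ab)\le (u-r)\delta_G(R)+\delta_{G_u}(\g{u}_v)$, which is bounded by the input specifications on $\delta_G(R)$ and $\delta_{G_u}(\g{u}_v)$. The difference $c_{r,v}^{(u)}-ab\in\Ya(-rG)$ vanishes on $E$ and therefore lies in $\L(\delta_{-rG}(ab)\Pinf-rG-E)$; the lower bound on $N$ makes this divisor of negative degree, forcing $c_{r,v}^{(u)}=ab$. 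Substituting into the assembly of $B_v^{(u)}$ yields $(z-R)^u\g{u}_v$, which by \cref{cor:Ya-basis-of-M} is the claimed $\Ya$-generating set of $\M$.

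For the complexity estimate, Line~1 costs $\softO(\mu^{\omega-1}(n+g))$ by \cref{algo:interpolate}, and Line~3 costs $\softO(\mu N+n+g)$ by \cref{algo:evaluate} combined with the bound on $\delta_G(R)$. The power loop on Lines~4--5 contributes $O(\ell N)$. The inner double loop has $O(\ell^2)$ iterations, each spending $O(N)$ on pointwise multiplication (Line~7) and $\softO(\mu^{\omega-1}(N+g))$ on the interpolation call (Line~8); summing gives $\softO(\ell^2\mu^{\omega-1}(N+g))$. Combining the input bound $N\ge(\ell+1)\deg G+4g+(s+1)n$ with $\deg G\le n+2g-1$ and $s\le\ell$ yields $N=O(\ell(n+g))$, so the overall cost is $\softO(\ell^3\mu^{\omega-1}(n+g))$.

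The main obstacle is the correctness of the inner interpolation: because \cref{algo:interpolate} is called with the coarser divisor $-rG$ rather than the tighter $A+B=-rG-\max\{0,s-u\}D$, its output a priori need not equal $(-R)^{u-r}\g{u}_v$. The argument closes this gap by combining the minimality property of the interpolant (\cref{lem:mindeg}) with a Riemann--Roch vanishing based on the lower bound on $N$; once this is handled, the remaining details of both correctness and complexity reduce to bookkeeping of $\delta$-values and divisor degrees.
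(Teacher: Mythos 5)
Your proof is correct and takes essentially the same route as the paper's: the inline Riemann--Roch vanishing argument you give for $c_{r,v}^{(u)} - ab$ is exactly the content of \cref{lem:mul-by-ev}, and the rest (minimality of the interpolant via \cref{lem:mindeg}, the bound $N \in \bigO(\ell(n+g))$, and counting $\bigO(\ell^2)$ calls to \cref{algo:interpolate}) matches the paper's argument. The one step you wave through --- that $\delta_{-rG}\bigl((-R)^{u-r}g_v^{(u)}\bigr) = (u-r)\delta_G(R) + \delta_{G_u}(g_v^{(u)})$ is strictly less than $N + r\deg G$ --- is the only substantive verification in the correctness half and should be written out explicitly, since the term $(u-r)\delta_G(R) \le (u-r)(n+2g-1-\deg G)$ can be of order $\ell n$ and its absorption into the stated lower bound on $N$ is not obvious; the paper devotes a displayed chain of inequalities to precisely this point.
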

\begin{proof}
For correctness first observe that the postulated $g_v^{(u)}$ exist by \cref{cor:Ya-basis-of-M}. %Further, if $f=\lceil 2\log_q(\Delta+\deg G)\rceil$, then by \cref{lem:more_rational_places}, the function field $F\mathbb{F}_{q^f}$ contains strictly more than $\Delta+\deg G$ rational places. Since $G$ is effective, $\Con(G)$, the conorm of $G$ with respect to the constant field extension $F\mathbb{F}_{q^f}/F$, contains at most $\deg(\Con(G))=\deg(G)$ rational places of $F\mathbb{F}_{q^f}$. Therefore the postulated rational places $E_1,\dots,E_N$ of $F\mathbb{F}_{q^f}$ exist.

Note that $\delta_G(R) \le n+2g-1-\deg G$. Using the given upper bound for $\delta_{G_u}(g_v^{(u)})$, we obtain that
\begin{align}\label{eq:boundsizecrv}
\delta_{-rG}(R^{u-r}g_v^{(u)}) &= \delta_{(u-r)G+G_u}(R^{u-r}g_v^{(u)}) \notag \\
& \le (u-r)(n+2g-1-\deg G)+4g-1+(u+1)\deg G+(s+1)n \notag \\
& = (r+1)\deg G+(u-r+2)(2g-1)+1+(s+1)n \\
& = (r+1)(\deg G-2g+1)+(u+3)(2g-1)+1+(s+1)n \notag \\
%& < \Delta_{r}^{(u)} \notag \\
& \le (\ell+1)(\deg G-2g+1)+(\ell+3)(2g-1)+1+(s+1)n \notag \\
& = (\ell+1)\deg G+2(2g-1)+1+(s+1)n \notag \\
& < (\ell+1)\deg G+4g+(s+1)n \notag %(\ell+1)(n+\deg G)+4g=\Delta.
\end{align}
\cref{lem:mindeg} then implies that $\Interpolate(\vec{c}_{r,v}^{(u)},E,-rG, \vec{x}, \vec{y})$ will output a function $c_{r,v}^{(u)} \in \Ya(G)$ satisfying $\delta_{-rG}(c_{r,v}^{(u)}) < (\ell+1)\deg G+4g+(s+1)n.$

To complete the correctness proof, we consider \cref{lem:mul-by-ev} for the divisors $A=(u-r)G$, $B=-uG$ and the function $a=(-R)^{u-r}$, $b=g_v^{(u)}$, and $c=c_{r,v}^{(u)}$. By construction, it is clear that for all $E_j \in \supp E$ we have $c_{r,v}^{(u)}(E_j)=(-R)^{u-r}(E_j)g_v^{(u)}(E_j).$
Moreover, $\deg E \ge (\ell+1)\deg G+4g+(s+1)n$, whence $\delta_{-rG}(c_{r,v}^{(u)}) < \deg E \le \deg E - \deg(-rG)$. Hence \cref{lem:mul-by-ev} implies $c_{r,v}^{(u)}=(-R)^{u-r}g_v^{(u)}.$

The complexity of the algorithm is dominated by the for loop in Lines 6--8. The $\bigO(\ell^2)$ calls of the algorithm $\Interpolate(\vec{c}_{r,v}^{(u)},E,-rG, \vec{x}, \vec{y})$ cost $\ell^2\softO(\ell\mu^{\omega-1}(n+g))$ operations.
Hence the total complexity is $\softO(\ell^3\mu^{\omega-1}(n+g)).$
\end{proof}

\begin{remark}
The generating set consisting of $\tilde{B}_v^{(u)}$ as described in \cref{rem:alternative-Fqx-basis-of-M}, can be computed slightly faster. Indeed, since in these generators, the needed powers $(-R)^u$ have the range $u=0,\dots,s$, the for loop in Lines 6--9 has  $\bigO(s\ell)$ calls of the algorithm $\Interpolate(\vec{c}_{r,v}^{(u)},E,-rG, \vec{x}, \vec{y})$. Hence to compute the $\tilde{B}_v^{(u)}$ costs $\softO(s\ell^2\mu^{\omega-1}(n+g)).$
\end{remark}

\subsection{Computing a generating set over $\field[x]$ of $\M$.}\label{subsec:genFqx}

In the previous subsection, we saw how to efficiently compute the generating set $\{B_v^{(u)}\}$ of $\M$ over $\Ya$, as in \cref{cor:Ya-basis-of-M}. The next logical step is to compute the set of products $\{y_iB_v^{(u)}\}$, which generates $\M$ over $\field[x]$ according to \cref{cor:Fqx-basis-of-M}. Consequently, we now consider the following problem: given a function $a \in \Ya(A)$ for some divisor $A$, compute $y_0a,\dots,y_{\mu-1}a \in \Ya(A)$. Computing the $y_ia$ individually using \cref{algo:basisYa} would be too slow for our purposes. Indeed, obtaining each $y_iB_v^{(u)}$ this way would cost $\softO(\ell^2\mu^{\omega-1}(n+g))$ operations, and we need to compute $2\mu(\ell+1)$ such terms in total. Therefore, we introduce in this subsection a more efficient approach, which will allow us to compute $y_0a,\dots,y_{\mu-1}a$ simultaneously.

\begin{definition}
  For any $H(z) \in \ffield[z]$ and any rational place $P \in \places_{\ffield}$ that is not a pole of any of the coefficients of $H(z)$, and $\alpha \in \field$ we denote by $H(P,\alpha)$ the evaluation of $H(\alpha) \in \ffield$ at $P$.
\end{definition}

\begin{definition}\label{def:Nae}
Let $A$ be a divisor and $E = E_1 + \cdots + E_N$ for distinct rational places $E_1,\dots,E_N$ of $F$ different from $\Pinf$ such that $\supp(A) \cap \supp(E)= \emptyset$. For $a\in \Ya(A)$, we define the $\field[x]$-module
  \[
    \N_{A,E}(a) = \{
    H = H_0 + H_1 z \in \Ya(A) \oplus z\Ya \mid H(P,a(P)) = 0 \text { for all } P \in \supp(E)
    \} \ .
  \]
\end{definition}

In the following lemmas, we use the same notation $A$, $E$ as in \cref{def:Nae}.

%\begin{lemma}
%  \label{lem:fxbasis_module}
%  If $a \in \Ya(A)$, then
%  \[
%    \N_{A,E}(a) = \Ya(A-E) \oplus (z - a)\Ya \ .
%  \]
%\end{lemma}
%\begin{proof}
%  Clearly $\N_{A,E}(a) \supseteq \Ya(A-E) \oplus (z - a)\Ya$. For the opposite inclusion let $H = H_0 + H_1z \in \N_{A,E}(a)$ and observe that since $H_1(z-a) \in \N_{A,E}(a)$, then
%  \[
%    \tilde H := H - H_1(z-a) = H_0 - H_1a \in \N_{A,E}(a) \cap \Ya(A) \subset \Ya(A-E) \ .
%  \]
%  The sought conclusion follows from $H = \tilde H + H_1(z-a)$.
%\end{proof}

\begin{lemma}
  \label{lem:fxbasis_basismem}
  Let $a \in \Ya(A)$. If $H = H_0 + z H_1 \in \N_{A,E}(a)$ with
  \[
    \max\{
    \delta_A(H_0),
    \delta(H_1) + \delta_A(a)
    \} < \deg E - \deg A \ ,
  \]
then $H(a) = 0$, i.e. $H \in \langle z-a \rangle_{\Ya}$.
\end{lemma}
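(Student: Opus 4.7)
The plan is to show that $H(a) = H_0 + aH_1$ must be identically zero by combining a degree bound with vanishing at enough points, after which the factorization $H = (z-a)H_1$ immediately gives $H \in \langle z - a \rangle_\Ya$.

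First I would observe that $aH_1 \in \Ya(A)$, since $a \in \Ya(A)$ and $H_1 \in \Ya$, so $H(a) = H_0 + aH_1$ lies in $\Ya(A)$. By the strict triangle inequality for $v_\Pinf$ and the multiplicativity $\delta_A(aH_1) = \delta(H_1) + \delta_A(a)$,
\[
\delta_A(H(a)) \le \max\{\delta_A(H_0),\, \delta(H_1) + \delta_A(a)\} < \deg E - \deg A \ .
\]
In particular, $H(a) \in \L(\Delta \Pinf + A)$ for some integer $\Delta < \deg E - \deg A$, so $\deg(\Delta \Pinf + A) < \deg E$.

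Next, for each place $P \in \supp(E)$ the membership $H \in \N_{A,E}(a)$ gives $H(P, a(P)) = H_0(P) + a(P) H_1(P) = 0$, which is exactly the value of $H(a)$ at $P$. Hence $\ev_E(H(a)) = 0$. Applying the first part of \cref{lem:ev} (or directly the Riemann--Roch argument: the degree of the divisor $\Delta \Pinf + A - E$ is negative), we conclude $H(a) = 0$.

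Finally, $H(a) = 0$ means $H_0 = -aH_1$, so $H = (z - a)H_1$, and since $H_1 \in \Ya$, this exhibits $H$ as an element of $\langle z - a \rangle_\Ya$. The only conceptual ingredient is identifying the right divisor on which $H(a)$ lives; once that is in place the conclusion follows mechanically from \cref{lem:ev}, so I do not expect a genuine obstacle here.
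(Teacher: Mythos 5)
Your proposal is correct and follows essentially the same route as the paper's proof: bound $\delta_A(H(a))$ via the (strict) triangle inequality and multiplicativity of $\delta$, note that $H(a)$ vanishes on all of $\supp(E)$ by definition of $\N_{A,E}(a)$, and conclude that $H(a)$ lies in a Riemann--Roch space of a divisor of negative degree, hence is zero. Your explicit final step writing $H = (z-a)H_1$ just spells out what the paper's ``i.e.'' leaves implicit, so there is nothing to add.
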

\begin{proof}
Since $H \in \N_{A,E}(a)$, we have $H(a) \in \Ya(A)$. Hence by definition of $\delta_A$, we have $H(a) \in \L(A+\delta_A(H(a))\Pinf).$
Since for all $E_j \in \supp E$, we have $H(a)(E_j)=0$ and $\supp E \cap (\supp A \cup \{\Pinf\})= \emptyset$, we may conclude that
$H(a) \in \L(A+\delta_A(H(a))\Pinf-E).$
Moreover,
\begin{align*}
    \delta_A(H(a))
    \leq \max\{
    \delta_A(H_0),
    \delta(H_1) + \delta_A(a)
    \}
    < \deg E - \deg A \ ,
\end{align*}
which ensures that the aforementioned Riemann-Roch space is trivial.
%Note that since
%  \[
%    H = H_0 + H_1z = H_1(z-a) + H_0 + H_1a= H_1(z-a) + H(a) \in \N_{A,E}(a) \ ,
%  \]
%  then $H(a) \in \Ya(A-E)$ according to \cref{lem:fxbasis_module}, which implies that
%  \[
%    H(a) \in \L(\delta_{A-E}(H(a))\Pinf + A - E) \ .
%  \]
%  Since $\Pinf \not \in \supp E$ and $\supp A \cap \supp E = \emptyset$, it holds that
%  \begin{align*}
%    \delta_{A-E}(H(a))
%    = \delta_A(H(a))
%    \leq \max\{
%    \delta_A(H_0),
%    \delta(H_1) + \delta_A(a)
%    \}
%    < \deg E - \deg A \ ,
%  \end{align*}
%  which ensures that the aforementioned Riemann-Roch space is trivial.
\end{proof}

\begin{lemma}
  \label{lem:fxbasis_pade}
  Let $a \in \Ya(A)$. Furthermore, let $U_1,\dots,U_\mu$ be an $x$-partition of $E$, and let $\mat{S} = [S_{i,k}], \mat{T} = [T_{i,k}]$ be matrices in $\field[x]^{\mu \times \mu}$ such that
  \begin{align*}
   S_{i,k}(x(E_j)) = \y[A]_i(E_j) \and
   T_{i,k}(x(E_j)) = a(E_j)\y_i(E_j) \for E_j \in U_k \ .
  \end{align*}
  If $\vec{u} = (u_1,\dots,u_\mu) \in \field[x]^\mu$, where $u_k = \prod_{E_j \in \supp U_k}(x - x(E_j))$, then the map
  \[
    \psi: \sum_{i = 0}^{\mu-1}(s_i \y[A]_i + t_i z \y_i) \mapsto (s_0,\dots,s_{\mu-1},t_0,\dots,t_{\mu-1})
  \]
  is an $\field[x]$-isomorphism between $\N_{A,E}(a)$ and $\H_{\vec{u}}(\mat{A})$, where
  \[
    \mat{A} =
    \left[
      \begin{array}{c}
        \mat{S} \\
        \hline
        \mat{T}
      \end{array}
    \right] \in \field^{2\mu \times \mu} \ .
  \]
\end{lemma}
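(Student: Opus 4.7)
The plan is to verify that $\psi$ restricts to an isomorphism from $\N_{A,E}(a)$ onto $\H_{\vec{u}}(\mat{A})$, after first checking it is an isomorphism of the ambient free $\field[x]$-modules. By \cref{lem:Fx_basis}, $\y[A]_0, \dots, \y[A]_{\mu-1}$ is an $\field[x]$-basis of $\Ya(A)$ and $\y_0, \dots, \y_{\mu-1}$ is an $\field[x]$-basis of $\Ya$, so $\{\y[A]_0, \dots, \y[A]_{\mu-1}\} \cup \{z\y_0, \dots, z\y_{\mu-1}\}$ is an $\field[x]$-basis of $\Ya(A) \oplus z\Ya$. Hence $\psi$, which just reads off coordinates in this basis, is an $\field[x]$-linear bijection onto $\field[x]^{2\mu}$. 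It remains to show that $\psi$ sends $\N_{A,E}(a)$ onto $\H_{\vec{u}}(\mat{A})$.

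To this end, fix $H = H_0 + zH_1 \in \Ya(A) \oplus z\Ya$ with $H_0 = \sum_i s_i \y[A]_i$ and $H_1 = \sum_i t_i \y_i$, so that $\psi(H) = (s_0, \dots, s_{\mu-1}, t_0, \dots, t_{\mu-1})$. For any $E_j \in \supp(U_k)$, the evaluation of a polynomial $p \in \field[x]$ at $E_j$ (viewed as a function in $F$) equals $p(x(E_j))$, so using the definitions of $S_{i,k}$ and $T_{i,k}$ I compute
\[
  H(E_j, a(E_j))
  = \sum_i s_i(x(E_j)) \y[A]_i(E_j) + \sum_i t_i(x(E_j)) \,a(E_j)\y_i(E_j)
  = f_k(x(E_j)) \ ,
\]
where $f_k := \sum_i s_i S_{i,k} + \sum_i t_i T_{i,k} \in \field[x]$ is precisely the product $\psi(H) \cdot \vec{A}_k$ with $\vec{A}_k$ the $k$-th column of $\mat{A}$. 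So $H \in \N_{A,E}(a)$ if and only if $f_k(x(E_j)) = 0$ for every $E_j \in \supp(U_k)$ and every $k$.

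Here is where the $x$-partition property becomes crucial: condition (4) of the definition of an $x$-partition guarantees that the values $\{x(E_j) : E_j \in \supp(U_k)\}$ are pairwise distinct. Hence $f_k$ vanishes at all these points if and only if it is divisible by $u_k = \prod_{E_j \in \supp(U_k)} (x - x(E_j))$, i.e.\ if and only if $\psi(H) \cdot \vec{A}_k \equiv 0 \pmod{u_k}$. Combining over $k = 1, \dots, \mu$, this is exactly the defining condition $\psi(H) \in \H_{\vec{u}}(\mat{A})$, so $\psi(\N_{A,E}(a)) = \H_{\vec{u}}(\mat{A})$ and the claim follows. There is no real obstacle here; the only subtlety is ensuring that the vanishing conditions at the places $E_j$ translate cleanly to divisibility conditions on univariate polynomials, which is precisely what properties (1)--(4) of an $x$-partition are designed to secure.
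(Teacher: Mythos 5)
Your proof is correct and follows essentially the same route as the paper's: establish that $\psi$ is an isomorphism of the ambient free modules, then verify pointwise that $H(E_j, a(E_j)) = (\psi(H)\cdot\vec{A}_k)(x(E_j))$ and translate the vanishing conditions into divisibility by $u_k$. You are in fact slightly more explicit than the paper in invoking property (4) of the $x$-partition to justify that vanishing at the points $x(E_j)$ is equivalent to divisibility by the squarefree product $u_k$, which is a welcome clarification rather than a deviation.
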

\begin{proof}
  Clearly $\psi$ is an $\field[x]$-isomorphism between $\Ya(A) \oplus z \Ya$ and $\field[x]^{2\mu}$, therefore it suffices to show that for any $H \in \Ya(A) \oplus z \Ya$ it holds that $H \in \N_{A,E}(a)$ if and only if $\psi(H) \in \H_{\vec{q}}(\mat{A})$, i.e. that $H(E_j,a(E_j)) = 0$ for all $E_j \in \supp U_k$ and all $k=1,\dots,\mu$ if and only if $\psi(H) \cdot \mat{A}_k \equiv 0 \mod u_k$, for $k = 1,\dots,\mu$, where $\vec{A}_k$ denotes the $k$-th column of $\mat{A}$. But this is necessarily true, since for every $E_j \in U_k$ the following identity holds, where $\alpha = x(E_j)$:
  \begin{align*}
    H(E_j, a(E_j))
    &= \sum_{i=0}^{\mu-1}\big( s_i(\alpha) \y[A]_i(E_j) + a(E_j)t_i(\alpha)\y_i(E_j) \big ) \\
    &= \sum_{i=0}^{\mu-1}\big( s_i (\alpha) S_{i,k}(\alpha) + t_i(\alpha)T_{i,k}(\alpha) \big)
    = (\psi(H) \cdot \mat{A}_k)(\alpha) \ .
  \end{align*}
\end{proof}

\begin{lemma}
  \label{lem:popov-submatrix}
  In the context of \cref{lem:fxbasis_pade}, if $\mat{P} \in \field[x]^{2\mu \times 2\mu}$ is the $\vec{d}$-Popov basis of $\H_{\vec{u}}(\mat{A}) = \psi(\N_{A,E}(a))$, where $\deg E \geq 2g  + \mu + \delta_A(a) + \deg A$ and
  \[
    \vec{d} = \frac{1}{\mu}(\delta_A(\y[A]_0),\dots,\delta_A(\y[A]_{\mu-1}),
    \delta(\y_0) + \delta_A(a),\dots,\delta(\y_{\mu-1}) + \delta_A(a)) \in (\frac{1}{\mu}\ZZ)^{2\mu} \ ,
  \]
  then exactly $\mu$ rows of $\mat{P}$ have $\vec{d}$-degree less than $\frac{1}{\mu}(\deg E - \deg A)$.
  Furthermore, if $\tilde{\mat{P}} \in \field[x]^{\mu \times 2 \mu}$ is the submatrix of $\mat{P}$ consisting of these rows, then the $k$-th row of $\tilde{\mat{P}}$ is $\psi(Y_k)$ for $k = 1,\dots,\mu$, where $Y_k = -a\y_{k-1} +z\y_{k-1} \in \langle z - a \rangle_{\Ya} \subset \N_{A,E}(a)$. Consequently, if $\tilde{\mat{P}} = [\tilde{\mat{P}}_1|\tilde{\mat{P}}_2]$, where $\tilde{\mat{P}}_1,\tilde{\mat{P}}_2 \in \field[x]^{\mu \times \mu}$, then $a\y_{k-1} = -\sum_{i=0}^{\mu-1} p_{k,i} \y[A]_i$, where $(p_{k,0},\dots,p_{k,\mu-1})$ is the $k$-th row of $\tilde{\mat{P}}_1$.
\end{lemma}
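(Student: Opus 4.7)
The plan is to transfer everything through the isomorphism $\psi$ of \cref{lem:fxbasis_pade} and analyze the images $\psi(Y_k) \in \H_{\vec{u}}(\mat{A})$ for $k = 1,\dots,\mu$. These images lie in the module since $Y_k = (z-a)\y_{k-1} \in \langle z-a\rangle_\Ya \subseteq \N_{A,E}(a)$. Writing $-a\y_{k-1} = \sum_{i=0}^{\mu-1} s_i^{(k)} \y[A]_i$, the image $\psi(Y_k)$ has entries $s_i^{(k)}$ in its first $\mu$ positions and a single $1$ at position $\mu+k$ among the last $\mu$ positions. I would first pin down the $\vec{d}$-pivot structure of $\psi(Y_k)$: position $\mu+k$ contributes exactly $d_{\mu+k} = \frac{1}{\mu}(\delta(\y_{k-1}) + \delta_A(a))$, and \cref{lem:deg} applied to $-a\y_{k-1} \in \Ya(A)$ bounds $\deg s_i^{(k)} + \frac{1}{\mu}\delta_A(\y[A]_i)$ by the same value $d_{\mu+k}$. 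Since positions $\mu+k+1,\dots,2\mu$ vanish, the $\vec{d}$-pivot index of $\psi(Y_k)$ is $\mu+k$ with pivot coefficient $1$, and $\deg_{\vec{d}} \psi(Y_k) = d_{\mu+k}$. Combined with $\delta(\y_{k-1}) \leq 2g-1+\mu$ from \cref{lem:Fx_basis} and the hypothesis $\deg E \geq 2g+\mu+\delta_A(a)+\deg A$, this yields $d_{\mu+k} < \frac{1}{\mu}(\deg E - \deg A)$.

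Next I would count the small-$\vec{d}$-degree rows of $\mat{P}$ from both sides. The minimality statement in \cref{prop:popov-properties} applied to $\psi(Y_k)$ shows that the row $\vec{p}_{\mu+k}$ of $\mat{P}$ (whose pivot is on the diagonal at position $\mu+k$) satisfies $\deg_{\vec{d}} \vec{p}_{\mu+k} \leq d_{\mu+k} < \frac{1}{\mu}(\deg E - \deg A)$, giving at least $\mu$ small-degree rows. Conversely, any row $\psi(H)$ of $\mat{P}$ with $\vec{d}$-degree less than $\frac{1}{\mu}(\deg E - \deg A)$ satisfies $\max\{\delta_A(H_0), \delta(H_1) + \delta_A(a)\} < \deg E - \deg A$, so \cref{lem:fxbasis_basismem} places $H$ inside $\langle z-a\rangle_\Ya$ and the row inside $\mathcal{L} := \psi(\langle z-a\rangle_\Ya)$. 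Since $\langle z-a\rangle_\Ya$ is a free $\Ya$-module of rank $1$ and $\Ya$ is free of rank $\mu$ over $\field[x]$, the submodule $\mathcal{L}$ has $\field[x]$-rank $\mu$; the $\field[x]$-linear independence of the rows of $\mat{P}$ therefore forbids more than $\mu$ of them from lying in $\mathcal{L}$. Combining both directions, exactly $\mu$ rows of $\mat{P}$ have small $\vec{d}$-degree, namely $\vec{p}_{\mu+1},\dots,\vec{p}_{2\mu}$.

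To finish, I would identify $\vec{p}_{\mu+k}$ with $\psi(Y_k)$. From $\deg_{\vec{d}} \vec{p}_{\mu+k} \leq d_{\mu+k}$ together with the monic diagonal pivot at position $\mu+k$ we get $\deg p_{\mu+k,\mu+k} = 0$, hence $p_{\mu+k,\mu+k} = 1$; the pivot dominance in column $\mu+k$ then forces every other entry in that column to have negative degree, and hence to vanish. Expanding $\vec{p}_{\mu+k} \in \mathcal{L}$ as $\sum_{j=1}^{\mu} c_j \psi(Y_j)$ and reading off the entry in each column $\mu+k'$ for $k' \in \{1,\dots,\mu\}$ yields $\delta_{k,k'} = c_{k'}$, so $\vec{p}_{\mu+k} = \psi(Y_k)$. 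The final stated consequence $a\y_{k-1} = -\sum_i p_{k,i} \y[A]_i$ then follows by reading off the first $\mu$ coordinates of $\psi(Y_k)$. The main obstacle is this identification step: since the cited uniqueness of Popov form is stated for square matrices, the proof cannot appeal to uniqueness for the rank-$\mu$ rectangular row space $\mathcal{L}$, and must instead extract the answer from the column-by-column pivot structure of $\mat{P}$.
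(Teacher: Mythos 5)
Your proof is correct and follows the paper's strategy almost step for step: the formula $\deg_{\vec{d}}\psi(H)=\tfrac{1}{\mu}\max\{\delta_A(H_0),\delta(H_1)+\delta_A(a)\}$, the use of \cref{lem:fxbasis_basismem} plus the rank-$\mu$ count of $\psi(\langle z-a\rangle_{\Ya})$ for the upper bound, the elements $Y_k$ with $\vec{d}$-pivot index $\mu+k$ for the lower bound, and the degree bound $\deg s_i^{(k)}+d_i\le d_{\mu+k}$ (your appeal to \cref{lem:deg} is the same strict-triangle-inequality fact the paper invokes directly). The only real divergence is the last step: the paper shows that the matrix with rows $\psi(Y_1),\dots,\psi(Y_\mu)$ is itself in $\vec{d}$-Popov form and then identifies it with $\tilde{\mat{P}}$ by uniqueness of the Popov form of that submodule, whereas you extract the identification directly from the column structure of $\mat{P}$ (monic degree-$0$ diagonal pivots force $p_{\mu+k,\mu+k}=1$ and annihilate the off-diagonal entries of the last $\mu$ columns, after which expanding $\vec{p}_{\mu+k}$ in the basis $\psi(Y_1),\dots,\psi(Y_\mu)$ pins down the coefficients). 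Your variant is slightly more self-contained, since the paper's \cref{prop:popov-properties} states uniqueness only for square nonsingular matrices, while the paper's route is shorter but leaves that rectangular-uniqueness step implicit; both are valid.
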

\begin{proof}
  For any
  \[
    H = H_0 + z H_1 \in \N_{A,E}(a) \ ,
  \]
  where $H_0 = \sum_{i=0}^{\mu-1} s_i\y[A]_i \in \Ya(A)$ and $H_1 = \sum_{i=0}^{\mu-1} t_i\y_i \in \Ya$ with $s_i,t_i \in \field[x]$, it holds that
  \begin{align*}
    \deg_{\vec{d}} \psi(H)
    &= \max\{
      \max_i ( \deg s_i + \frac{\delta_A(\y[A]_i)}{\mu} ),
      \max_i( \deg t_i + \frac{\delta(\y_i) + \delta_A(a)}{\mu} )
      \} \\
    &= \frac{1}{\mu}\max\{
      \delta_A(H_0),
      \delta(H_1) + \delta_A(a)
      \} \ .
  \end{align*}
  It then follows from \cref{lem:fxbasis_basismem} that
  \[
    \deg_{\vec{d}} \psi(H) < \frac{1}{\mu}(\deg E - \deg A) \implies H \in \langle z-a \rangle_{\Ya} \ ,
  \]
  which means that at most $\mu$ rows of $\mat{P}$ can have $\vec{d}$-degree less than $\frac{1}{\mu}(\deg E - \deg A)$, because $\langle z-a \rangle_{\Ya}$ has rank $\mu$ as an $\field[x]$-module. On the other hand, since $Y_1,\dots,Y_\mu$ are linearly independent over $\field[x]$, and since
  \[
    \deg_{\vec{d}} \psi(Y_k) = \frac{1}{\mu} (\delta(\y_{k-1}) + \delta_A(a)) < \frac{1}{\mu}(\delta_A(a) + 2g  + \mu) \leq \frac{1}{\mu}(\deg E - \deg A)
  \]
  for $k = 1,\dots,\mu$, where the strict inequality is due to \cref{lem:Fx_basis}, then at least $\mu$ rows of $\mat{P}$ have $\vec{d}$-degree less than $\frac{1}{\mu}(\deg E - \deg A)$, because $\mat{P}$ is $\vec{d}$-row reduced. This proves the first claim of the lemma.

 For the second claim it is sufficient to show that the $\vec{d}$-pivot index of $\psi(Y_k)$ is $\mu + k$, since this would imply that the matrix whose rows are $\psi(Y_k)$ is in $\vec{d}$-Popov form. To see this, write $Y_k = -\sum_{i=0}^{\mu-1} w_i\y[A]_i + z \y_{k-1}$, where $w_i \in \field[x]$, and note that $Y_k(a) = 0$ implies that
 \[
    \max_i \delta_A(w_i\y[A]_i) = \delta_A(\sum_{i=0}^{\mu-1} w_i\y[A]_i) =  \delta_A(a \y_{k-1}) = \delta(\y_{k-1}) + \delta_A(a) \ .
  \]
  Consequently, $\deg_{\vec{d}} \psi(Y_k) = \frac{1}{\mu}(\delta(\y_{k-1}) + \delta_A(a))$, which shows that $\mu + k$ is indeed the $\vec{d}$-pivot index of $\psi(Y_k)$.
\end{proof}

\begin{algorithm}[H]
  \caption{$\BasisProducts(a,E,A,\vec{x},\vec{y})$} \label{algo:basis-products}
  \begin{algorithmic}[1]
    \Input
    \begin{itemize}
    \item A divisor $A$,
    \item a function $a \in \Ya(A)$,
    \item a divisor $E = E_1 + \cdots + E_N$, where $E_1,\dots,E_N \in \places_{\ffield}\setminus\{\Pinf\}$ are distinct rational places, $\supp(A) \cap \supp(E)=\emptyset$ and $\deg E \geq \deg A + \delta_A(a) + 2g  + \mu$,
    \item evaluations $\vec{x} = (x_j)_{j=1,\dots,N}$, where $x_j = x(E_j) \in \field$,
    \item evaluations $\vec{y} = (y_{i,j})^{i=0,\dots,\mu-1}_{j=1,\dots,N}$, where $y_{i,j} = y^{(A)}_i(E_j) \in \field$.
    \end{itemize}
    \Output
    \begin{itemize}
    \item Products $(ay_0,\cdots,ay_{\mu-1})$, where each $ay_i \in \Ya(A)$.
%      \newline \todo{check $ay_i$ vs $y_ia$ everywhere?}
    \end{itemize}
    \If{$a=0$}
    \State \Return $(0,\dots,0)$
    \EndIf
    \State $U_1,\dots,U_\mu \assign $ an $x$-partition of $E$
    \State $\mat{S} = [S_{i,k}] \in \field[x]^{\mu \times \mu} \assign$
    matrix with $S_{i,k}(x_j) = y_{i,j} \text{ for } E_j \in U_k$
    \State $\mat{T} = [T_{i,k}] \in \field[x]^{\mu \times \mu} \assign$
    matrix with $T_{i,k}(x_j) = a(E_j)y_{i,j} \text{ for } E_j \in U_k$
    \State $\vec{u} = (u_1,\dots,u_\mu) \in \field[x]^{\mu} \assign$
    vector with $u_k = \prod_{E_j \in U_k}(x - x_j)$
    \State $\vec{d} \in (\frac{1}{\mu}\ZZ)^{2\mu} \assign
    \frac{1}{\mu} \big(
    \delta_A(y^{(A)}_0),\dots,\delta_A(y^{(A)}_{\mu-1}),
    \delta(y_0) + \delta_A(a),\dots,\delta(y_{\mu-1}) + \delta_A(a)
    \big)$
    \State $\mat{P} \in \field[x]^{2\mu \times 2\mu} \assign$ $\vec{d}$-Popov basis of $\H_{\vec{u}}(\mat{A})$, where
    $\mat{A} =
    \left[
      \begin{array}{c}
        \mat{S} \\
        \hline
        \mat{T}
      \end{array}
    \right] \in \field[x]^{2\mu \times \mu}$
    \State $[ \tilde{\mat{P}}_1 |  \tilde{\mat{P}}_2] \in \field[x]^{\mu \times 2\mu} \assign$ the submatrix of $\mat{P}$ consisting of all rows with
    \Statex \indent $\vec{d}$-degree less than $\frac{1}{\mu}(\deg E - \deg A)$, where $\tilde{\mat{P}}_1, \tilde{\mat{P}}_2 \in \field[x]^{\mu \times \mu}$
    \For{$k=1,\dots,\mu$}
    \State $(p_{k,0},\dots,p_{k,\mu-1}) \in \field[x]^\mu \assign$ $k$-th row of $\mat{P}_1$
    \State $a_k \in \Ya(A) \assign -\sum_{i=0}^{\mu-1} p_{k,i} y^{(A)}_i$
    \EndFor
    \State \Return $(a_1,\dots,a_\mu)$
  \end{algorithmic}
\end{algorithm}

\begin{lemma}\label{lem:basis-products}
  \cref{algo:basis-products} is correct and costs $\softO(\mu^{\omega-1}(N+|\deg A|))$ operations in $\field$.
\end{lemma}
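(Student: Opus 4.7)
My plan is to handle correctness and complexity separately, with correctness following essentially directly from \cref{lem:popov-submatrix} and the complexity being dominated by the Popov basis computation in Step 8.

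For correctness, I will verify that the objects constructed in Steps 4--8 satisfy the hypotheses of \cref{lem:popov-submatrix}. An $x$-partition $U_1,\dots,U_\mu$ exists by \cref{lem:mu-partition}; the matrices $\mat{S},\mat{T}$ and the vector $\vec{u}$ are set up exactly as required; the shift $\vec{d}$ matches the one in the lemma; and the input condition $\deg E \geq \deg A + \delta_A(a) + 2g + \mu$ is explicitly assumed. Therefore the $\vec{d}$-Popov basis $\mat{P}$ has exactly $\mu$ rows of $\vec{d}$-degree less than $\frac{1}{\mu}(\deg E - \deg A)$, and these rows are precisely $\psi(Y_k)$ with $Y_k = -a\y_{k-1} + z\y_{k-1}$. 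Reading off the first half of these rows as the coefficient vector of $-a\y_{k-1}$ in the $\field[x]$-basis $\y_0^{(A)},\dots,\y_{\mu-1}^{(A)}$ of $\Ya(A)$ reproduces the products $a\y_{k-1}$. The $a = 0$ shortcut is trivial.

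For complexity, the preparatory Steps 4--7 are cheap: each entry of $\mat{S}$ and $\mat{T}$ can be obtained column by column using fast univariate interpolation at the $|U_k| \le \lceil N/\mu\rceil$ points in $\supp U_k$, giving total cost $\softO(\mu N)$; the moduli $u_k$ are computed in $\softO(N)$ via product trees. The bottleneck is Step 8. Applying \cref{cor:sim-hermit-pade} with $\phi = 2\mu$ and $\theta = \mu$, its cost is $\softO((2\mu)^{\omega-1}\mu d)=\softO(\mu^{\omega}d)$, where $d = \max_t |d_t| + \max_k \deg u_k$.

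The main (mildly) delicate point is bounding $d$ by $\bigO((N + |\deg A|)/\mu)$. From \cref{lem:Fx_basis} I get $\delta_A(\y_i^{(A)}) \in [-\deg A,\,2g-1-\deg A+\mu]$ and $\delta(\y_i) \in [0,\,2g-1+\mu]$. For $\delta_A(a)$, the input hypothesis gives the upper bound $\delta_A(a) \le N - \deg A - 2g - \mu$, while the lower bound $\delta_A(a) \ge -\deg A$ follows by noting that for nonzero $a \in \L(\delta_A(a)\Pinf + A)$ the effective divisor $(a) + \delta_A(a)\Pinf + A$ has non-negative degree $\delta_A(a) + \deg A$. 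Combining these shows $\max_t|d_t| \le \frac{1}{\mu}(2|\deg A| + N + 2g + \mu)$, and combining $\delta_A(a)\ge -\deg A$ with the input constraint also forces $2g + \mu \le N$. Together with $\max_k \deg u_k \le \lceil N/\mu\rceil$, this yields $d \in \bigO((N + |\deg A|)/\mu)$ and hence the claimed cost $\softO(\mu^{\omega-1}(N + |\deg A|))$, which subsumes the $\softO(\mu N)$ preparatory cost since $\mu \le g + 1 \le N$.
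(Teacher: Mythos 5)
Your proposal is correct and follows essentially the same route as the paper: correctness is delegated to \cref{lem:popov-submatrix} (whose hypotheses the algorithm's Steps 4--8 instantiate verbatim), and the complexity is dominated by the shifted Popov basis computation of Step 8, bounded via \cref{cor:sim-hermit-pade} using \cref{lem:Fx_basis}, the lower bound $\delta_A(a)\ge-\deg A$, and the input constraint on $\deg E$ to control the shift entries. Your bounding of $d$ is slightly more explicit than the paper's, but the argument is the same.
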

\begin{proof}
Correctness is given by \cref{lem:popov-submatrix}. For complexity, simply note that the computational bottleneck lies in Step 8, in which case $\delta_A(a) \ge -\deg A$ because $a$ is nonzero and $a \in \L(\delta_A(a)\Pinf + A)$. By assumption, we have that $N=\deg E \geq \deg A + \delta_A(a) + 2g  + \mu$,
hence by \cref{lem:Fx_basis}
\begin{align*}
  -\deg A &\le \delta_A(y_i^{(A)})
  \le 2g-1-\deg A +\mu \\
  &< \deg E- 2\deg A -\delta_A(a) \le \deg E-\deg A \ .
\end{align*}
Since $\deg u_k \le N/\mu$ for $k=1,\dots,\mu$, then the total complexity of the algorithm is given by
\cref{cor:sim-hermit-pade} as
\[
  \softO \big(
  \mu^{\omega-1}\max\{|\deg E|,|\deg E-\deg A|,|\deg A|\}
  \big) \subseteq \softO(\mu^{\omega-1}(N+|\deg A|))
\]
operations in $\field$.
\end{proof}

Now we are ready to state \cref{algo:basisFqx}, which computes a generating set over $\field[x]$ of $\M$.

\begin{algorithm}[H]
  \caption{$\GenFqx(\vec{r},D,G,E,\vec{x},\vec{y},\vec{g})$} \label{algo:basisFqx}
  \begin{algorithmic}[1]
    \Input
    \begin{itemize}
    \item Received word $\vec{r} \in \field^{n}$,
    \item divisors $D$ and $G$ for the code $\code$,
%    \item the output evaluations $\vec{g} = (g_{v,j}^{(u)})$, where $u=0,\dots,\ell$, $v = 1,2$ and $j = 1,\dots,N$
%      such that $g_{v,j}^{(u)} = g_v^{(u)}(E_j) \in \field$ where $\langle g_1^{(u)}, g_2^{(u)} \rangle_{\Ya} = \Ya(G_u)$,
    \item a divisor $E = E_1 + \cdots + E_N$, where $E_1,\dots,E_N \in \places_{\ffield}\setminus\{\Pinf\}$ are distinct rational places, $\supp(A) \cap \supp(E)=\emptyset$ and $N \ge \max\{\deg G+(\ell+3)(2g-1)+(s+1)n+2+\mu,
      \newline \hphantom{ N \ge \max\{ } (\ell+1)\deg G+4g+(s+1)n\}$,
    \item evaluations $\vec{x} = (x_j)_{j=1,\dots,N}$, where $x_j = x(E_j) \in \field$,
    \item evaluations $\vec{y} = (y_{i,j})^{i=0,\dots,\mu-1}_{j=1,\dots,N}$, where $y_{i,j} = y^{(A)}_i(E_j) \in \field$,
      \item evaluations $\vec{g} = (g_{v,j}^{(u)})^{u=0,\dots,\ell}_{v=1,2, \ j=1,\dots,N}$,
      where $g_{v,j}^{(u)} = g_v^{(u)}(E_j) \in \field$, $\langle g_1^{(u)}, g_2^{(u)} \rangle_{\Ya} = \Ya(G_u)$ and $\delta_{G_u}(g_v^{(u)}) \le 4g-1+(u+1)\deg(G)+(s+1)n$,
      as in \cref{cor:Ya-basis-of-M}.
    \end{itemize}
    \Output
    \begin{itemize}
    \item $(y_iB_v^{(u)})^{u=0,\dots,\ell}_{i=0,\dots,\mu-1, \ v =1,2}$, where the $B_v^{(u)} \in \M$ are as in
      \newline\cref{cor:Ya-basis-of-M}, i.e. $\langle y_iB_v^{(u)} \rangle_{\field[x]} = \M$.
    \end{itemize}
    \State $(B_v^{(u)})^{u=0,\dots,\ell}_{v=1,2} \assign \GenYa(\vec{r},D,G,E,\vec{x},\vec{y},\vec{g})$ \Comment{\cref{algo:basisYa}}
    %\For{$u=0,\dots,\ell$, $r = 0,\dots,u$ and $v = 1,2$}
    \For{$u=0,\dots,\ell$, $v = 1,2$ and $t = 0,\dots,u$}
    \State $b_{v,t}^{(u)} \in \Ya(-tG) \assign$ the $z^t$-coefficient of $B_v^{(u)}$
    \State $(y_ib_{v,t}^{(u)})_{i=0,\dots,\mu-1} \assign \BasisProducts(b_{v,t}^{(u)},E,-tG,\vec{x},\vec{y})$ \Comment{\cref{algo:basis-products}}
    \EndFor
    %\For{$u=0,\dots,\ell$, $r = 0,\dots,u$ and $v = 1,2$}
    \For{$u=0,\dots,\ell$, $v = 1,2$ and $i=0,\dots,\mu-1$}
    \State $B_{v,i}^{(u)} \in \M \assign \sum_{t=0}^uz^t y_i b_{v,t}^{(u)}$
    \EndFor
    \State \Return $(B_{v,i}^{(u)})^{u=0,\dots,\ell}_{v=1,2, \ i=0,\dots,\mu-1}$
  \end{algorithmic}
\end{algorithm}

\begin{proposition}
  \cref{algo:basisFqx} is correct and costs $\softO(\ell^3\mu^{\omega-1}(n+g))$ operations in $\field$.
\end{proposition}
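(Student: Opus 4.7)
The plan is to establish correctness first, then tally the complexity. For correctness, I would first note that Line 1 invokes \cref{algo:basisYa}, which by \cref{prop:basisYaalg} returns elements $B_v^{(u)} \in \M$ with $\langle B_v^{(u)}\rangle_{\Ya} = \M$. Writing $B_v^{(u)} = \sum_{t=0}^{u} z^t b_{v,t}^{(u)}$ where $b_{v,t}^{(u)} = \binom{u}{t}(-R)^{u-t} g_v^{(u)} \in \Ya(-tG)$, the coefficient extraction on Line 3 is well defined. Assuming the preconditions of $\BasisProducts$ hold, \cref{lem:basis-products} then guarantees that Line 4 correctly returns $(y_i b_{v,t}^{(u)})_{i=0,\dots,\mu-1}$. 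The reassembly on Lines 5--7 therefore produces $y_i B_v^{(u)} = \sum_{t=0}^{u} z^t (y_i b_{v,t}^{(u)})$, and \cref{cor:Fqx-basis-of-M} then certifies that the output family $\{y_i B_v^{(u)}\}$ generates $\M$ as an $\field[x]$-module.

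The main technical obstacle will be to verify the precondition $\deg E \ge \deg(-tG) + \delta_{-tG}(b_{v,t}^{(u)}) + 2g + \mu$ of $\BasisProducts$ uniformly for all $(u,v,t)$ occurring in the loop. I would reuse equation \eqref{eq:boundsizecrv} from the proof of \cref{prop:basisYaalg}, which yields $\delta_{-tG}((-R)^{u-t} g_v^{(u)}) \le (t+1)\deg G + (u-t+2)(2g-1) + 1 + (s+1)n$. Substituting this reduces the precondition to $N \ge \deg G + (u-t+2)(2g-1) + (s+1)n + 2g + \mu + 1$, whose worst case at $u - t = \ell$ matches the lower bound $N \ge \deg G + (\ell+3)(2g-1) + (s+1)n + 2 + \mu$ assumed in the input specification of \cref{algo:basisFqx}. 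Disjointness of $\supp(-tG)$ and $\supp(E)$ is immediate from the input assumption on $E$, and the trivial case $b_{v,t}^{(u)} = 0$ (which can happen when $\binom{u}{t}$ vanishes in characteristic $p$) is handled by the short-circuit on Lines 1--3 of \cref{algo:basis-products}.

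For the complexity bound, I would first use \cref{prop:basisYaalg} to charge $\softO(\ell^3 \mu^{\omega-1}(n+g))$ to Line 1. Next I would count that the loop on Lines 2--4 makes $\sum_{u=0}^{\ell} 2(u+1) \in O(\ell^2)$ calls to $\BasisProducts$, each costing $\softO(\mu^{\omega-1}(N + t\deg G))$ by \cref{lem:basis-products}. Since the input assumption forces $N \in O(\ell(n+g))$ and since $t\deg G \le \ell(n + 2g - 1) \in O(\ell(n+g))$ via \eqref{eq:degree_bounds_G}, each call runs in $\softO(\mu^{\omega-1}\ell(n+g))$, aggregating to $\softO(\ell^3 \mu^{\omega-1}(n+g))$ across the loop. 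Finally, Lines 5--7 only perform formal polynomial assembly and are dominated by the earlier steps, so summing the contributions yields the claimed bound.
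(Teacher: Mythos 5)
Your proposal is correct and follows essentially the same route as the paper: correctness reduces to validating the precondition of $\BasisProducts$ via the bound \eqref{eq:boundsizecrv}, whose worst case over $u-t$ matches the assumed lower bound on $N$, and the complexity is the cost of Line~1 plus $\bigO(\ell^2)$ calls to $\BasisProducts$ at $\softO(\ell\mu^{\omega-1}(n+g))$ each. Your extra remark about the vanishing of $\binom{u}{t}$ in characteristic $p$ being absorbed by the zero-test in \cref{algo:basis-products} is a detail the paper leaves implicit, but it does not change the argument.
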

\begin{proof}
Correctness follows immediately from \cref{cor:Fqx-basis-of-M} and \cref{lem:basis-products} once we show that the calls $\BasisProducts(b_{v,t}^{(u)},E,-tG,\vec{x},\vec{y})$ in Line 4 are valid. In particular, we need to verify that
\begin{equation}
  \label{eq:N-bound}
  N \ge \deg(-tG)+\delta_{-tG}(b_{v,t}^{(u)})+2g+\mu
\end{equation}
for all appropriate values of $u,v$ and $t$. Using the notation from \cref{cor:Ya-basis-of-M} and \cref{algo:basisYa}, we know that $b_{v,t}^{(u)} = \binom{u}{t}(-R)^{u-t}g_v^{(u)}$, hence by \eqref{eq:boundsizecrv}
\begin{equation}\label{eq:sizebrvu}
\delta_{-tG}(b_{r,v}^{(u)}) \le (t+1)\deg G+(u-t+2)(2g-1)+(s+1)n+1 \ .
\end{equation}
The sought bound \eqref{eq:N-bound} on $N$ then follows from
\[
  -t\deg G +\delta_{-tG}(b_{v,t}^{(u)}) \le \deg G +(\ell+2)(2g-1)+(s+1)n+1 \ .
\]
For the complexity, we note that Line 1 costs $\softO(\ell^3\mu^{\omega-1}(n+g))$ operations by \cref{prop:basisYaalg}, while each call $\BasisProducts(b_{v,t}^{(u)},E,-tG,\vec{x},\vec{y})$ in Line 4 costs $\softO(\mu^{\omega-1}(N+|\deg(-tG)|)) \subseteq \softO(\ell\mu^{\omega-1}(n+g))$ operations by \cref{lem:basis-products}. Since the for-loop in Line 2 has $\bigO(\ell^2)$ iterations, the stated complexity follows\,--\,the rest of the algorithm is memory management and is therefore ``free''.
\end{proof}

\begin{remark}
Computing the generating set $\{y_i\tilde{B}_v^{(u)}\}$ over $\field[x]$ of $\M$ can be done in $\softO(s\ell^2\mu^{\omega-1}(n+g))$, since in that case only $\bigO(s\ell)$ coefficients of the $\tilde{B}_v^{(u)}$ are nonzero.
%Hence the for loop in lines 4--6 is called $\bigO(s\ell)$ times, giving the stated improvement.
\end{remark}

\subsection{Finding a nonzero $Q \in \M$ satisfying $\delta_G(Q)<s(n-\tau)$}\label{subsec:findQ}

The following lemma introduces notation that may be needed to describe the decoding algorithm.

\begin{lemma}
  \label{lem:Fx-mat}
  For any divisor $A$ and any $a = \sum_{i=0}^{\mu-1} a_i \y[A]_i \in \Ya(A)$, where $a_i \in \field[x]$, let
  \[
    \curlyvee^{(A)}(a) = (a_0,\dots,a_{\mu-1}) \in \field[x]^{\mu} \ ,
  \]
  and for any $Q = \sum_{t=0}^\ell z^t Q^{(t)} \in \bigoplus_{t=0}^{\ell}z^t\Ya(-tG)$ let
  \[
    \curlyvee_z(Q) = (\curlyvee^{(0)}(Q^{(0)})|\curlyvee^{(-G)}(Q^{(1)})|\cdots|\curlyvee^{(-\ell G)}(Q^{(\ell)})) \in \field[x]^{\mu(\ell+1)} \ .
  \]

  If $B_v^{(u)} \in \bigoplus_{t=0}^{\ell}z^t\Ya(-tG)$ for $u=0,\dots,\ell$ and $v = 1,2$ are as in \cref{cor:Ya-basis-of-M} and
  \[
    \mat{M}_{s,\ell} =
    \left[
    \begin{array}{c}
      \mat{M}_{s,\ell}^{(1)} \\
      \hline
      \mat{M}_{s,\ell}^{(2)}
    \end{array}
    \right] \in \field[x]^{2\mu(\ell+1) \times \mu(\ell+1)} \ ,
  \]
  where for $v=1,2$
  \[
    \mat{M}_{s,\ell}^{(v)} =
    \begin{pmatrix}
      \left[
        \begin{array}{c}
          \curlyvee_z(y_0 B_v^{(0)}) \\
          \hline
          \vdots \\
          \hline
          \curlyvee_z(y_{\mu-1} B_v^{(0)})
        \end{array}
      \right]^\top
      \ \vline \ &\cdots& \vline \ \;
      \left[
        \begin{array}{c}
          \curlyvee_z(y_0 B_v^{(\ell)}) \\
          \hline
          \vdots \\
          \hline
          \curlyvee_z(y_{\mu-1} B_v^{(\ell)})
        \end{array}
      \right]^\top
\end{pmatrix}^\top \ ,
  \]
  then $\curlyvee_z$ is an $\field[x]$-isomorphism between $\M$ and the row space of ${\mat M}_{s,\ell}$. Moreover, for any $Q$ as before, it holds that $\delta_G (Q) = \mu \deg_{\vec{d}} \curlyvee_z(Q)$, where
  $\vec{d} = (\vec{d}^{(0)}|\cdots|\vec{d}^{(\ell)}) \in (\frac{1}{\mu}\ZZ)^{\mu(\ell + 1)}$ with
  $
  \vec{d}^{(t)} = \frac{1}{\mu}\big(
    \delta_{-tG}(\y[-tG]_0), \dots, \delta_{-tG}(\y[-tG]_{\mu-1})
    \big) \in (\frac{1}{\mu}\ZZ)^\mu
  $
  for $t = 0,\dots,\ell$.
\end{lemma}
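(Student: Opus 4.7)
The plan is to prove the two claims of the lemma independently, both reducing to the fact that the Ap\'ery system provides an $\field[x]$-basis with very clean valuation behavior.

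First I would establish that $\curlyvee_z$ is an $\field[x]$-isomorphism between $\M$ and the row space of $\mat M_{s,\ell}$. \cref{lem:Fx_basis} tells us that $\y[A]_0,\dots,\y[A]_{\mu-1}$ is an $\field[x]$-basis of $\Ya(A)$, so $\curlyvee^{(A)}$ is an $\field[x]$-isomorphism $\Ya(A) \to \field[x]^\mu$; applying this coordinate-wise in the $z$-expansion shows that $\curlyvee_z$ is an $\field[x]$-isomorphism from $\bigoplus_{t=0}^\ell z^t \Ya(-tG)$ onto $\field[x]^{\mu(\ell+1)}$. By construction, the rows of $\mat M_{s,\ell}$ are exactly the images $\curlyvee_z(y_iB_v^{(u)})$ for $i = 0,\dots,\mu-1$, $u = 0,\dots,\ell$ and $v=1,2$. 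Since $\cref{cor:Fqx-basis-of-M}$ states that the $y_iB_v^{(u)}$ generate $\M$ as an $\field[x]$-module, the restriction of $\curlyvee_z$ to $\M$ is injective and its image is precisely the $\field[x]$-row space of $\mat M_{s,\ell}$.

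Next I would prove the degree identity. Write $Q = \sum_{t=0}^\ell z^t Q^{(t)}$ with $Q^{(t)} = \sum_{i=0}^{\mu-1} a_i^{(t)} \y[-tG]_i$, $a_i^{(t)} \in \field[x]$. By \cref{def:shifted-deg} applied to the shift $\vec d$,
\[
  \mu \deg_{\vec d} \curlyvee_z(Q) = \max_{t,i} \bigl( \mu \deg a_i^{(t)} + \delta_{-tG}(\y[-tG]_i) \bigr).
\]
On the side of $\delta_G$, recall that $\delta(a_i^{(t)}) = \mu \deg a_i^{(t)}$ because $v_{\Pinf}(x) = -\mu$, and hence $\delta_{-tG}(a_i^{(t)} \y[-tG]_i) = \mu \deg a_i^{(t)} + \delta_{-tG}(\y[-tG]_i)$. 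The key observation is that by construction of the Ap\'ery system (\cref{def:yiA}) the values $\delta_{-tG}(\y[-tG]_i)$ lie in pairwise distinct residue classes modulo $\mu$; combined with the fact that $\mu \deg a_i^{(t)} \in \mu \ZZ$, this means the quantities $\delta_{-tG}(a_i^{(t)} \y[-tG]_i)$ (for those $i$ with $a_i^{(t)} \neq 0$) are pairwise distinct. The strict triangle inequality for $\val$ then yields equality in the triangle inequality, so that
\[
  \delta_{-tG}(Q^{(t)}) = \max_i \bigl( \mu \deg a_i^{(t)} + \delta_{-tG}(\y[-tG]_i) \bigr).
\]
Taking the maximum over $t$ gives $\delta_G(Q) = \mu \deg_{\vec d} \curlyvee_z(Q)$, as required.

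The main obstacle, if one can call it that, is conceptual rather than technical: one must recognize that the Ap\'ery system's defining property, namely that the functions $\y[A]_i$ have $\delta_A$-values in distinct residue classes modulo $\mu$, is what makes the valuation interact so cleanly with $\field[x]$-coordinates. Once this is noted, the rest is bookkeeping: tracking how the block structure of $\mat M_{s,\ell}$ and the block-concatenation in the definition of $\vec d$ mirror the $z$-grading of $\bigoplus_{t=0}^\ell z^t \Ya(-tG)$.
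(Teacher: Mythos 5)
Your proposal is correct and follows essentially the same route as the paper: the isomorphism claim is read off from \cref{cor:Fqx-basis-of-M} together with the fact that the \Apery system is an $\field[x]$-basis of each $\Ya(-tG)$, and the degree identity is the same computation, with equality in the triangle inequality justified (as the paper also does, via the strict triangle inequality for $\val$) by the fact that the $\delta_{-tG}(\y[-tG]_i)$ lie in distinct residue classes modulo $\mu$.
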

\begin{proof}
\cref{cor:Fqx-basis-of-M} immediately implies that $\curlyvee_z$ is an $\field[x]$-isomorphism between $\M$ and the row space of ${\mat M}_{s,\ell}$. Further, writing $Q^{(t)} = \sum_{i=0}^{\mu-1} Q_i^{(t)}\y[-tG]_i$ for $t = 0,\dots,\ell$, where $Q_i^{(t)} \in \field[x]$, gives that
  \begin{align*}
    \delta_G(Q) &= \max_t \delta_{-tG}(Q^{(t)}) \\
    &= \max_{t,i} \{ \delta_{-tG}(Q_i^{(t)}\y[-tG]_i) \} \\
    &= \max_{t,i} \{ \delta (Q_i^{(t)}) + \delta_{-tG}(\y[-tG]_i) \} \\
    &= \max_{t,i} \{ \mu \deg Q_i^{(t)} + \delta_{-tG}(\y[-tG]_i)\} \\
    &= \mu \deg_{\vec{d}} \curlyvee_z (Q) \ .
  \end{align*}
\end{proof}

\cref{lem:Fx-mat} implies that we can find a nonzero $Q \in \M$ satisfying $\delta_G(Q)<s(n-\tau)$, if it exists, by computing the $\vec{d}$-Popov form of the matrix $\mat{M}_{s,\ell} \in \field[x]^{2\mu(\ell + 1) \times \mu(\ell + 1)}$. According to \cref{cor:rect-popov-cost}, this can be achieved with cost $\softO(\ell^\omega \mu^\omega \deg \mat{M}_{s,\ell})$. To estimate $\deg \mat{M}_{s,\ell}$, observe that \cref{lem:deg} implies that
\[
  \deg {\mat M}_{s,\ell} \le \frac{1}{\mu}\max_{i,r,v,u}\{-r\deg G+\delta(y_i)+\delta_{-rG}(b_{r,v}^{(u)})\} \ .
\]
Then \cref{lem:Fx_basis} and inequality \eqref{eq:sizebrvu} imply that
\begin{equation}\label{eq:maxdegMsl}
\deg {\mat M}_{s,\ell} \le \max_{r,u}\frac{6g-2+\mu+(u-r)(n+2g-1)+(s+1)n+\deg G}{\mu} \in \bigO(\mu^{-1}\ell(n+g)) \ ,
\end{equation}
which means that we can compute the $\vec{d}$-Popov form of $\mat{M}_{s,\ell}$ within our target complexity $\softO(l^{\omega + 1} \mu^{\omega - 1}(n+g))$.

\begin{remark}
Using the alternative generating set from \cref{rem:alternative-Fqx-basis-of-M}, we again get an improvement on the running time. In equation \eqref{eq:maxdegMsl}, the expression $u-r$ corresponded to the exponent of $-R$ in the expression $\binom{u}{r}(-R)^{u-r}g_v^{(u)}$, which was the coefficient of $z^r$ in $B_v^{(u)}$. Since the exponent of $-R$ in a coefficient of $\tilde{B}_v^{(u)}$ never exceeds $s$, we therefore obtain from equation \eqref{eq:maxdegMsl} the improved complexity $\softO(s\ell^\omega \mu^{\omega-1}(n+g)).$
\end{remark}

\subsection{Root-finding}\label{subsec:findroots}

In this subsection, we consider the final computational ingredient that we will need for Guruswami-Sudan list-decoding: given a polynomial $Q(z) \in \M$, compute the set $L = \{f \in \L(G) \mid Q(f) = 0 \}$ of all roots of $Q$. We accomplish this by changing the representation of $Q$ from $\bigoplus_{t=0}^\ell z^t \Ya(-tG)$ to $\bigoplus_{t=0}^\ell z^t \field[\![x]\!]$, which will allow us to use the root-finding algorithm from \cite{neiger_fast_2017}.

Let $P_0 \not \in \supp G \cup \{\Pinf\}$ be the fixed rational place of $\ffield$ for which $x$ is a local parameter. For any nonzero $h \in \ffield$ let $\pow{h} \in x^{\val[P_0](h)} \field[\![x]\!]$ denote the $P_0$-adic power series expansion of $h$ in $x$ and define $\pow{0}=0$. Furthermore, for any $Q = \sum_{t} z^t Q^{(t)} \in \ffield[z]$ let $\pow{Q} = \sum_{t} z^t \pow{Q}^{(t)}$. Recall that if $Q^{(t)} \in \Ya(-tG)$ for all $t$, then $\delta_{G}(Q) = \max_{t} \delta_{-tG}Q^{(t)}$. The following definition is from \cite{neiger_fast_2017}, and it describes the output of their root-finding algorithm:

\begin{definition}
  \label{def:basic-root-set}
  If $\pow{Q} \in \field[\![x]\!][z]$ and $\beta \in \ZZ_{\ge 0}$, then a \emph{basic root set} of $\pow{Q}$ to precision $\beta$ is a set $\{(\pow{f}_r,\alpha_r)\}_{r=1}^m \subset \field[x]\times\ZZ_{\ge 0}$ with $m \le \deg \pow{Q}$ such that
  \begin{enumerate}[1)]
  \item $\pow{Q}(\pow{f}_r + x^{\alpha_r}z) \equiv 0 \pmod{x^\beta}$ for $r=1,\dots,m$, and
  \item $\pow{Q}(\pow{f}) \equiv 0 \pmod {x^\beta} \iff \pow{f} \in \bigcup_{r=1}^m(\pow{f}_r + x^{\alpha_r}\field[\![x]\!])$ for every $\pow{f} \in \field[\![x]\!]$.
  \end{enumerate}
\end{definition}

Our algorithm for computing the sought roots of $Q \in \M$ will fundamentally rely on the following result:

\begin{theorem}[{\cite[Theorem 1.2]{neiger_fast_2017}}]
  \label{thm:neiger-root-find}
  There is an algorithm which for any $\pow{Q} \in \field[\![x]\!][z]$ and any precision $\beta \in \ZZ_{\ge 0}$ computes a basic root set of $\pow{Q}$ to precision $\beta$ using $\softO(\ell \beta)$ deterministic operations in $\field$, together with an extra $\softO(\algo{R}_{\field}(\ell) \beta)$ operations, where $\algo{R}_{\field}(\ell)$ is the cost of finding all $\field$-roots of a degree $\ell$ polynomial in $\field[z]$. Here, we can choose to use a Las Vegas algorithm with $\algo{R}_{\field}(\ell) \in \softO(\ell)$, e.g. \cite[Corollary 14.16]{von_zur_gathen_modern_2012}, or a deterministic one from \cite{shoup1991fast} with $\algo{R}_{\field}(\ell) \in \softO(\ell \kappa^2 \sqrt{p})$, where $|\field| = p^\kappa$ for some prime $p$.
\end{theorem}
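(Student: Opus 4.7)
The plan is to adapt the classical Roth--Ruckenstein root-finding tree and show that, with careful organization and fast truncated power-series arithmetic, its total cost can be brought down from the naive $\softO(\beta^2\ell)$ to $\softO(\beta\ell)$. First I would set up the tree: each node $\nu$ stores a residual polynomial $\pow{Q}_\nu \in \field[\![x]\!][z]$ (truncated modulo some $x^{\beta_\nu}$), a partial power series $\pow{f}_\nu$, and a shift exponent $\alpha_\nu$, maintaining the invariant that
\begin{equation*}
\pow{Q}(\pow{f}_\nu + x^{\alpha_\nu}\pow{g}) \;=\; x^{e_\nu}\pow{Q}_\nu(x,\pow{g})
\end{equation*}
for every $\pow{g} \in \field[\![x]\!]$, normalized so that $x \nmid \pow{Q}_\nu$. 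Expanding a node amounts to computing the roots in $\field$ of $\pow{Q}_\nu(0,z) \in \field[z]$; for each such root $\rho$, I would spawn a child with updated partial solution $\pow{f}_\nu + \rho x^{\alpha_\nu}$, incremented shift $\alpha_\nu + 1$, and residual obtained by substituting $z \mapsto \rho + xz$ into $\pow{Q}_\nu$ and factoring out the correct power of $x$. A branch is terminated when $\alpha_\nu$ reaches $\beta$, at which point $(\pow{f}_\nu,\alpha_\nu)$ is emitted.

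Correctness follows by a direct induction on depth using the invariant: a power series $\pow{f}$ satisfies $\pow{Q}(\pow{f}) \equiv 0 \pmod{x^\beta}$ if and only if $\pow{f}$ lies in one of the cosets $\pow{f}_r + x^{\alpha_r}\field[\![x]\!]$ emitted at the leaves. The bound $m \leq \deg \pow{Q}$ also follows by induction, since at every depth the sum of $z$-degrees of all live residuals is bounded by $\deg \pow{Q}$: a split at $\rho$ of multiplicity $\kappa$ reduces the residual $z$-degree by $\kappa$ while producing only one new branch.

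The main obstacle, and the real content of the theorem, is the complexity bound. A naive analysis gives $\softO(\beta^2\ell)$, because the tree has depth up to $\beta$ and at each node the substitution and truncation can cost $\softO(\ell\beta)$. To reach $\softO(\ell\beta)$, I would organize the computation as a divide-and-conquer on precision: split $\beta = \beta_1 + \beta_2$ with $\beta_1 \approx \beta_2 \approx \beta/2$, recursively compute a basic root set of $\pow{Q} \bmod x^{\beta_1}$ to precision $\beta_1$, and then for each partial root lift it to precision $\beta_2$ using a Hensel-style iteration based on fast truncated multiplication in $\field[\![x]\!][z]$. This produces a recurrence of the form $T(\beta) = 2T(\beta/2) + \softO(\ell\beta)$, whose solution is $T(\beta) \in \softO(\ell\beta)$, plus an additive $\softO(\algo{R}_{\field}(\ell)\beta)$ for the $\field$-root-finding work amortized across the precision levels.

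The delicate part, which I expect to be the main obstacle, is showing that the lifting step truly costs $\softO(\ell\beta_2)$ uniformly over all branches simultaneously. This hinges on the fact that the sum of the residual $z$-degrees is bounded by $\ell$ at every precision level, so a careful batched Hensel lift handles all live branches at a given level in a combined cost proportional to $\ell$ rather than to the number of branches times $\ell$. Establishing this amortization rigorously, together with the accounting for the degree drops across the tree, is precisely the technical core of the argument carried out in \cite{neiger_fast_2017}, which I would invoke to complete the proof.
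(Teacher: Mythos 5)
This statement is not proved in the paper at all: it is imported verbatim as Theorem~1.2 of \cite{neiger_fast_2017}, and the paper's ``proof'' is just that citation. Your proposal ends up in the same place, since you explicitly defer the technical core (the batched degree accounting that turns the naive $\softO(\beta^2\ell)$ into $\softO(\beta\ell)$) to the very same reference, so you have not supplied a proof where the paper omits one --- which is a legitimate stance for an external result, but should be stated as such rather than framed as a proof. As a description of the cited algorithm your sketch is broadly faithful: the Roth--Ruckenstein tree with the invariant $\pow{Q}(\pow{f}_\nu + x^{\alpha_\nu}\pow{g}) = x^{e_\nu}\pow{Q}_\nu(x,\pow{g})$, the divide-and-conquer on precision with recurrence $T(\beta) = 2T(\beta/2) + \softO(\ell\beta)$, and the separate additive term $\softO(\algo{R}_{\field}(\ell)\beta)$ for residue-field root-finding are all the right ingredients. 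One point is stated imprecisely: a root $\rho$ of $\pow{Q}_\nu(0,z)$ of multiplicity $\kappa$ does not ``reduce the residual $z$-degree by $\kappa$ while producing only one new branch''; rather, each distinct root $\rho$ spawns one child whose polynomial satisfies $\deg_z \pow{Q}_{\nu'}(0,z) \le \kappa_\rho$, and the amortization comes from $\sum_\rho \kappa_\rho \le \deg_z \pow{Q}_\nu(0,z)$ summed across each level of the tree. That corrected invariant is also what yields the bound $m \le \deg\pow{Q}$ on the size of the basic root set.
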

In order to use \cref{thm:neiger-root-find} in our setting, we will need to address the following:
\begin{enumerate}[1)]
\item how to choose the precision $\beta$,
\item how to convert $Q \in \bigoplus_{t=0}^\ell z^t \Ya(-tG)$ to $\pow{Q} \in \bigoplus_{t=0}^\ell z^t \field[\![x]\!]$ and
\item how to obtain the roots $f \in \L(G)$ of $Q$ from a basic root set of $\pow{Q}$.
\end{enumerate}

The second item in the above list is the simplest\,--\,writing $Q = \sum_{t=0}^\ell z^t Q^{(t)}$ with $Q^{(t)} = \sum_{i=0}^{\mu-1}Q^{(t)}_iy^{(-tG)}_i$, where $Q^{(t)}_i \in \field[x]$, we can compute $\pow{Q} = \sum_{t=0}^\ell z^t \pow{Q}^{(t)}$ by simply relying on the identity $\pow{Q}^{(t)} = \sum_{i=0}^{\mu-1}Q^{(t)}_i\pow{y}^{(-tG)}_i$. Assuming that we have precomputed the $\pow{y}^{(-tG)}_i \in \field[\![x]\!]$ to sufficiently high precision, this is just basic arithmetic in $\field[x]$.

When it comes to the choice of the precision $\beta$, then there are two restrictions that ought to be considered. The first one comes from making sure that we don't return ``spurious'' roots, i.e. those $f \in \L(G)$ such that $\pow{Q}(\pow{f}) \equiv 0 \pmod{x^{\beta}}$ while $Q(f) \ne 0$. As we are about to see in the following lemma, this issue is easily avoided by choosing $\beta > \delta_G (Q)$.

\begin{lemma}\label{lem:aprox-root-is-root}
  Let $Q(z) = \sum_{t=0}^{\ell} z^t Q^{(t)} $ with $Q^{(t)} \in \Ya(-tG)$, and let $f \in \L(G)$. If $\beta > \delta_G (Q)$ and $\pow{Q} ( \pow{f}) \equiv 0 \pmod{x^\beta}$, then $Q(f) = 0$.
\end{lemma}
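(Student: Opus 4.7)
My plan is to combine a pole-order bound at $\Pinf$ with a zero-order bound at $P_0$ and then invoke Riemann--Roch to force $Q(f)$ to be the zero function.

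First, I would check that $Q(f)$ lies in $\Ya$ and carries a good pole bound at $\Pinf$. Since $f \in \L(G)$ we have $f^t \in \L(tG)$, so $f^t Q^{(t)} \in \L(tG) \cdot \Ya(-tG) \subseteq \Ya$, and therefore $Q(f) = \sum_{t=0}^\ell f^t Q^{(t)} \in \Ya$. By the strict triangle inequality for $\val$, together with $\delta_{tG}(f^t)\le 0$, we get
\[
\delta(Q(f)) \;\leq\; \max_t \big(\delta_{tG}(f^t) + \delta_{-tG}(Q^{(t)})\big) \;\leq\; \max_t \delta_{-tG}(Q^{(t)}) \;=\; \delta_G(Q).
\]
This is the same argument already used in the proof of the Guruswami--Sudan theorem earlier in \cref{sec:Guruswami-Sudan}.

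Next I would translate the approximate root condition at $P_0$ into a valuation bound. Because taking the $P_0$-adic expansion is a ring homomorphism $\ffield \to \field(\!(x)\!)$, and $x$ is a local parameter of $P_0$, we have $\pow{Q(f)} = \pow{Q}(\pow{f})$, and hence
\[
\val[P_0](Q(f)) \;=\; \val[P_0]\!\big(\pow{Q}(\pow{f})\big) \;\geq\; \beta,
\]
using the hypothesis $\pow{Q}(\pow{f}) \equiv 0 \pmod{x^\beta}$.

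Finally, I would combine both valuation bounds. Membership in $\Ya$ ensures $Q(f)$ has no poles outside $\Pinf$, and the pole bound at $\Pinf$ plus the zero bound at $P_0$ place $Q(f)$ inside $\L(\delta_G(Q)\Pinf - \beta P_0)$. Since $\Pinf$ and $P_0$ are rational, this divisor has degree $\delta_G(Q) - \beta < 0$, so the corresponding Riemann--Roch space is trivial, forcing $Q(f) = 0$. There is no real obstacle here; the only thing to be careful about is keeping track of the two distinct places $\Pinf$ and $P_0$ and making sure that the assumption $P_0 \ne \Pinf$ (guaranteed by the simplifying setup in \cref{sec:simpler-setting}) is used implicitly to make the two bounds compatible.
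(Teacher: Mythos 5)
Your proposal is correct and follows essentially the same route as the paper's proof: establish $Q(f)\in\Ya$ with $\delta(Q(f))\le\delta_G(Q)$ via the triangle inequality and $\delta_{tG}(f^t)\le 0$, use $\pow{Q(f)}=\pow{Q}(\pow{f})$ to get $\val[P_0](Q(f))\ge\beta$, and conclude that $Q(f)$ lies in the trivial Riemann--Roch space $\L(\delta_G(Q)\Pinf-\beta P_0)$. The only difference is cosmetic: you spell out explicitly that the $P_0$-adic expansion is a ring homomorphism, which the paper leaves implicit.
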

\begin{proof}
  Notice that since $f^tQ^{(t)} \in \Ya$ for all $t$, then $Q(f) \in \Ya$. Furthermore, since
  \[
    \delta(f^tQ^{(t)}) = \delta_{tG}(f^t) + \delta_{-tG}(Q^{(t)}) \le \delta_{-tG}(Q^{(t)}) \leq \delta_G(Q) \ ,
  \]
where the first inequality is due to $f \in \L(G)$, then $\delta(Q(f)) \leq \delta_G (Q)$. Combining this with the assumption that $\pow{Q}(\pow{f}) = \pow{Q(f)} \equiv 0 \pmod{x^\beta}$, we may conclude that $Q(f) \in \L(\delta_G(Q)\Pinf - \beta P_0)$, and if $\beta > \delta_G (Q)$, then this Riemann-Roch space is trivial.
\end{proof}

The second restriction on the precision $\beta$ is posed by the task of converting the truncated power series roots of $\pow{Q}$ back to $\L(G)$. Indeed, a basic root set $\{(\pow{f}_r,\alpha_r)\}_{r=1}^m$ describes each root $\pow{f}_r \in \field[x]$ of $\pow{Q}$ only to precision $\alpha_r$, and if this $\alpha_r$ is too small, then there could exist two distinct functions $h_1,h_2 \in \L(G)$ satisfying $\pow{h}_1 \equiv \pow{h}_2 \equiv \pow{f_r} \pmod{x^{\alpha_r}}$. In \cref{lem:high-enough-precision}, we will see how we can indirectly control $\alpha_r$ by increasing $\beta$; but first, let us show that conversion from truncated power series to $\L(G)$ is guaranteed to be unambiguous as long as $\alpha_r > \deg G$.

\begin{lemma}
  \label{lem:unique_intersect}
  If $\alpha > \deg G$, then for any $h \in \field[x]$ it holds that $|\L(G) \cap (h + x^\alpha\field[\![x]\!])| \leq 1$.
\end{lemma}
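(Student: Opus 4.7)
The plan is very short: suppose $h_1, h_2 \in \L(G) \cap (h + x^\alpha \field[\![x]\!])$ and show $h_1 = h_2$ by a standard Riemann-Roch argument.

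First I would observe that $h_1 - h_2 \in \L(G)$ (since $\L(G)$ is an $\field$-vector space) and that $\pow{h_1} - \pow{h_2} \equiv 0 \pmod{x^\alpha}$ because both $h_1$ and $h_2$ lie in the coset $h + x^\alpha \field[\![x]\!]$. Since $x$ is a local parameter at $P_0$, this congruence means exactly that $v_{P_0}(h_1 - h_2) \ge \alpha$. Combining this with $h_1 - h_2 \in \L(G)$ gives $h_1 - h_2 \in \L(G - \alpha P_0)$. Recall also that $P_0 \notin \supp G$ by the simplifying assumptions in \cref{sec:simpler-setting}, so the divisor $G - \alpha P_0$ is well-defined in a transparent way.

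Then the finish is immediate: $\deg(G - \alpha P_0) = \deg G - \alpha < 0$ by the hypothesis $\alpha > \deg G$, hence $\L(G - \alpha P_0) = \{0\}$, so $h_1 = h_2$, giving $|\L(G) \cap (h + x^\alpha \field[\![x]\!])| \le 1$.

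There is no real obstacle; the only thing to be slightly careful about is that $P_0$ is indeed outside $\supp G$ (guaranteed by the third simplifying assumption in \cref{sec:simpler-setting}), so the divisor subtraction does the expected thing and the degree count is clean.
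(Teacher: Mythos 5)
Your proof is correct and is essentially identical to the paper's own argument: both take $h_1,h_2$ in the intersection, deduce $h_1-h_2\in\L(G-\alpha P_0)$, and conclude this space is trivial since $\deg(G-\alpha P_0)<0$. Your extra remark that $P_0\notin\supp G$ (from the simplifying assumptions) is a fair point of care, though not strictly needed for the degree count.
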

\begin{proof}
  If $h_1,h_2 \in \L(G) \cap (h + x^\alpha\field[\![x]\!])$, then $\pow{h_1} \equiv \pow{h_2} \equiv h \pmod{x^\alpha}$, which means that $h_1 - h_2 \in \L(G - \alpha P_0) = \{0\}$.
\end{proof}

Now we proceed by showing that the $\alpha_r$ from \cref{def:basic-root-set} can be made arbitrarily large by choosing the precision $\beta$ appropriately.

\begin{lemma}\label{lem:high-enough-precision}
  If $Q(z) = \sum_{t=0}^{\ell} z^t Q^{(t)} \neq 0$ with $Q^{(t)} \in \Ya(-tG)$, and if $f \in \Ya(G)$ satisfies $\pow{Q}(\pow{f} + x^{\alpha}z) \equiv 0 \pmod{x^\beta}$ for some $\alpha \in \ZZ$, then $\alpha \geq \frac{1}{\ell}(\beta - \delta_G(Q)) - \delta_G(f)$.
\end{lemma}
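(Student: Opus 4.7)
The plan is to base the proof on the single identity
\[ [z^{\ell^{\star}}]\,\pow{Q}(\pow{f} + x^{\alpha} z) \;=\; x^{\alpha \ell^{\star}}\, \pow{Q^{(\ell^{\star})}}, \]
where $\ell^{\star} := \deg_z Q \le \ell$ and hence $Q^{(\ell^{\star})} \neq 0$. This follows from the binomial expansion of each $(\pow{f} + x^{\alpha} z)^t$: only the $t = \ell^{\star}$ summand of $\pow{Q}$ contributes to $z^{\ell^{\star}}$, and it contributes precisely $x^{\alpha \ell^{\star}}$. The hypothesis $\pow{Q}(\pow{f} + x^{\alpha} z) \equiv 0 \pmod{x^\beta}$ then forces
\[ \alpha \ell^{\star} + \val[P_0](Q^{(\ell^{\star})}) \;\ge\; \beta. \]

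The key technical step is to upper-bound $\val[P_0](Q^{(\ell^{\star})})$. Writing $\delta := \delta_{-\ell^{\star} G}(Q^{(\ell^{\star})}) \le \delta_G(Q)$, we have $Q^{(\ell^{\star})} \in \L(\delta\Pinf - \ell^{\star} G)$; since $P_0$ is distinct from $\Pinf$ and disjoint from $\supp G$, imposing the additional vanishing to order $k := \val[P_0](Q^{(\ell^{\star})})$ places $Q^{(\ell^{\star})}$ inside $\L(\delta\Pinf - \ell^{\star} G - kP_0)$. Because this space contains the nonzero function $Q^{(\ell^{\star})}$, its degree must be nonnegative, yielding $k \le \delta - \ell^{\star}\deg G \le \delta_G(Q) - \ell^{\star}\deg G$.

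Substituting back gives $\alpha \ge (\beta - \delta_G(Q))/\ell^{\star} + \deg G$. The final step is to trade $\deg G$ for $-\delta_G(f)$: since $f \in \Ya(G) \setminus \{0\}$ lies in $\L(\delta_G(f)\Pinf + G)$, that space is nontrivial and so $\delta_G(f) + \deg G \ge 0$, i.e., $\deg G \ge -\delta_G(f)$. In the main regime $\beta \ge \delta_G(Q)$ the inequality $\ell^{\star} \le \ell$ only weakens the right-hand side, producing $\alpha \ge (\beta - \delta_G(Q))/\ell - \delta_G(f)$. The edge cases $\ell^{\star} = 0$ and $\beta < \delta_G(Q)$ either reduce to a direct verification or render the target bound automatic under the standing assumption $\alpha \ge 0$ inherited from \cref{def:basic-root-set}.

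The principal obstacle is the Riemann-Roch upper bound on $\val[P_0](Q^{(\ell^{\star})})$: a priori only $\val[P_0](Q^{(\ell^{\star})}) \ge 0$ is clear, so the crucial insight is that the combined constraints of bounded pole order at $\Pinf$ and prescribed vanishing along $\ell^{\star} G$ limit how much extra vanishing $Q^{(\ell^{\star})}$ can absorb at $P_0$.
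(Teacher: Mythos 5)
Your main argument is essentially the paper's proof with the scaffolding stripped away. The paper also substitutes $z \mapsto f + x^{\alpha}z$ (via $T = Q(z+f)$ followed by $z \mapsto x^{\alpha}z$), works with a top nonzero coefficient $T_r$ --- which for $r = \deg_z Q = \ell^{\star}$ is exactly $Q^{(\ell^{\star})}$, since shifting by $f$ does not change the leading coefficient --- deduces $\alpha r + \val[P_0](T_r) \ge \beta$, and bounds $\val[P_0](T_r)$ by the same degree count on $\L(\delta_{-rG}(T_r)\Pinf - rG - \val[P_0](T_r)P_0)$ that you use. The only real difference is bookkeeping: you keep the $-\ell^{\star}\deg G$ term from that degree count and trade it for $-\ell^{\star}\delta_G(f)$ via the separate observation $\deg G \ge -\delta_G(f)$, whereas the paper discards $-r\deg G$ and instead extracts $\ell\,\delta_G(f)$ from its bound on $\delta_{-rG}(T_r)$. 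Both routes yield the stated inequality in the regime $\beta \ge \delta_G(Q)$, which is the only regime in which the lemma is ever invoked (in \cref{algo:find-roots} one has $\beta = 2\ell\deg G + s(n-\tau) > s(n-\tau) > \delta_G(Q)$).

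The one claim I would not accept as written is your dismissal of the edge cases. When $\beta < \delta_G(Q)$, one has $\tfrac{1}{\ell^{\star}}(\beta - \delta_G(Q)) \le \tfrac{1}{\ell}(\beta - \delta_G(Q))$, so replacing $\ell^{\star}$ by $\ell$ \emph{strengthens} the target rather than weakening it, and the target is not ``automatic'' from $\alpha \ge 0$ either, because $-\delta_G(f)$ is typically nonnegative (for $f \in \L(G)$ one has $\delta_G(f) \le 0$). The case $\ell^{\star} = 0$ folds into this same situation, since the hypothesis then forces $\beta \le \val[P_0](Q^{(0)}) \le \delta_G(Q)$. So those cases are genuinely not closed by your argument. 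In fairness, the paper's own proof carries symmetric unstated hypotheses (it needs $\alpha \ge 0$ for the step $\alpha\ell \ge \alpha r$, and $\delta_G(f) \ge 0$ for the final bound $\max_t (t-r)\delta_G(f) \le \ell\,\delta_G(f)$), and none of this matters where the lemma is used; but you should either add the hypothesis $\beta \ge \delta_G(Q)$ or give an actual argument for the remaining case rather than asserting it is automatic.
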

\begin{proof}
  We begin by defining
  \[
    T
    = Q(z + f)
    = \sum_{t=0}^{\ell} (z + f)^t Q^{(t)}
    = \sum_{t=0}^{\ell} \sum_{u=0}^{t} \binom{t}{u}z^u f^{t-u} Q^{(t)}
    = \sum_{u=0}^{\ell} z^u T_u \ ,
  \]
  where $T_u = \sum_{t=u}^{\ell} \binom{t}{u} f^{t-u} Q^{(t)}$. Since $f^{t-u} \in \Ya((t-u)G)$ and $Q^{(t)} \in \Ya(-tG)$, then $T_u \in \Ya(-uG)$. Furthermore, $x^{\alpha u}\pow{T}_u \equiv 0 \pmod{ x^{\beta}}$ for all $u$ because
  \[
    \pow{Q}(\pow{f} + x^{\alpha}z)
    = \pow{T}(x^{\alpha}z)
    = \sum_{u = 0}^{\ell}z^u x^{\alpha u} \pow{T}_u
    \equiv 0 \pmod{x^{\beta}} \ .
  \]
  Letting $r \in \{0,\dots,\ell\}$ be such that $\val[P_0](T_r) < \infty$ is maximal, observe that
  \begin{align*}
    \alpha \ell + \val[P_0](T_r) \geq \alpha r + \val[P_0](T_r) = \val[P_0](x^{\alpha r}T_r) \geq \beta \ ,
  \end{align*}
   which implies that $\alpha \geq \frac{1}{\ell}(\beta - \val[P_0](T_r))$.
   Finally, noting that
   \[
     0 \neq T_r \in \L(\delta_{-rG}(T_r)\Pinf - rG - \val[P_0](T_r)P_0) \ ,
   \]
   then the sought conclusion follows from
   \begin{align*}
    \val[P_0](T_r)
     &\le \delta_{-rG}(T_r) - r\deg G
     \le \delta_{-rG}(T_r) \\
     &= \delta_{-rG} \big(
     \sum_{t=r}^{\ell} \binom{t}{r} f^{t-r} Q^{(t)}
     \big) \\
     &\le \max_t \{ \delta_{-rG}(f^{t-r} Q^{(t)}) \} \\
%     &\le \max_t \{ \delta_{(t-r)G}(f^{t-r}) + \delta_{-tG}(Q^{(t)}) \} \\
     &\le \max_t \{ (t-r)\delta_{G}(f) + \delta_{-tG}(Q^{(t)}) \} \\
 %    &\le \ell\delta_{G}(f) + \max_t \delta_{-tG}(Q^{(t)}) \\
     &= \ell\delta_{G}(f) + \delta_G(Q) \ .
   \end{align*}
\end{proof}

Combining \cref{lem:high-enough-precision} and \cref{lem:unique_intersect}, we obtain the final restriction
\[
  \beta \ge 2\ell \deg G + s(n - \tau) \ ,
\]
which ensures that unambiguous conversion from the truncated power series roots of $\pow{Q}$ to $\L(G)$ is always possible. Indeed, this bound follows immediately from the fact that $\delta_G(f) \le \deg G$ for all $f \in \L(G)$ and the assumption that $\delta_G(Q) < s(n-\tau)$. Knowing that such conversion is possible, however, is not enough\,--\,we also need to know how to actually carry it out. In the following simple lemma, we show how to do this.

\begin{lemma}\label{lem:series-to-function}
  If $f \in \L(G)$ and $\sum_{i=0}^{\mu-1}f_i\pow{y}_i^{(G)} \equiv \hat{f} \pmod{x^\alpha}$ for some $f_i \in \field[x]$ with $\deg f_i \leq -\frac{1}{\mu}\delta_G(y_i^{(G)})$ and $\alpha > \deg G$, then $\sum_{i=0}^{\mu-1} f_i y_i^{(G)} = f$.
\end{lemma}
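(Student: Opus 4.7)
The plan is to rewrite $f$ in the $\field[x]$-basis $y_0^{(G)},\dots,y_{\mu-1}^{(G)}$ of $\Ya(G)$ supplied by \cref{lem:Fx_basis}, say $f=\sum_i \tilde f_i\, y_i^{(G)}$ with $\tilde f_i\in\field[x]$, and to set $h := \sum_i f_i\, y_i^{(G)} - f \in \Ya(G)$. Showing that $h=0$ is exactly what is needed, and the strategy is to bound $h$ from both sides: an upper bound on its pole order at $\Pinf$ (placing it in $\L(G)$) and a lower bound on its vanishing at $P_0$ (placing it in $\L(G-\alpha P_0)$), so that the combined Riemann-Roch space is trivial.

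The first step is to check that $g := \sum_i f_i\, y_i^{(G)}$ already lies in $\L(G)$. For each $i$ we have $\delta_G(f_i y_i^{(G)}) = \mu \deg f_i + \delta_G(y_i^{(G)}) \le 0$ directly from the hypothesis $\deg f_i \le -\delta_G(y_i^{(G)})/\mu$. The (ordinary) triangle inequality for $v_\Pinf$ then gives $\delta_G(g)\le 0$, so $g\in\L(G)$. Consequently $h = g - f \in \L(G)$.

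The second step converts the congruence hypothesis into a valuation bound at $P_0$. Since $P_0\notin\supp G\cup\{\Pinf\}$, every element of $\Ya(G)$ is regular at $P_0$, so the $P_0$-adic expansions $\pow{y_i^{(G)}},\pow f\in\field[\![x]\!]$ make sense and the given congruence is exactly $\pow h\equiv 0\pmod{x^\alpha}$, i.e.\ $\val[P_0](h)\ge\alpha$. Combined with $h\in\L(G)$ from the previous step, this yields $h\in\L(G-\alpha P_0)$; but $\deg(G-\alpha P_0)=\deg G-\alpha<0$ by the hypothesis $\alpha>\deg G$, so this Riemann-Roch space is trivial and therefore $h=0$.

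The argument is essentially routine; there is no single hard step. The only mild point that requires care is the coordinated bookkeeping in the first step, where the degree hypothesis on each $f_i$ has to be read off against the Apéry pole order $\delta_G(y_i^{(G)})$ to guarantee $g\in\L(G)$. This is precisely why the factor $1/\mu$ and the dependence on $\delta_G(y_i^{(G)})$ appear in the statement of the lemma.
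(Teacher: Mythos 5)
Your proposal is correct and follows essentially the same route as the paper: both first use $\deg f_i \le -\delta_G(y_i^{(G)})/\mu$ to place $\sum_i f_i y_i^{(G)}$ in $\L(G)$, and then conclude from $\alpha>\deg G$ that two elements of $\L(G)$ agreeing modulo $x^\alpha$ at $P_0$ coincide. The only difference is cosmetic: the paper delegates the second step to \cref{lem:unique_intersect}, whereas you inline its (identical) Riemann--Roch argument $h\in\L(G-\alpha P_0)=\{0\}$.
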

\begin{proof}
  Since $\delta (f_i) = \mu\deg f_i$, then $\delta_G(f_i y_i^{(G)}) \le \mu\deg f_i + \delta_G(y_i^{(G)}) \leq 0$. But then $\sum_{i=0}^{\mu-1}f_iy_i^{(G)} \in \L(G)$, and the conclusion follows from \cref{lem:unique_intersect}.
\end{proof}

Using the notation from \cref{def:hermit-pade} in the context of \cref{lem:series-to-function}, we see that $(f_0,\dots,f_{\mu-1},1) \in \H_{x^\alpha}(\mat{A})$, where $\mat{A} = [\pow{y}_0^{(G)},\cdots,\pow{y}_{\mu-1}^{(G)},-\pow{f}] \in \field[x]^{1 \times (\mu+1)}$.
Recovering $f \in \L(G)$ from $\pow{f} \rem x^{\alpha}$ thus translates to finding a polynomial vector $\vec{f} \in \H_{x^\alpha}(\mat{A})$ whose rightmost entry is $1$ and $\deg_{\vec{d}} \vec{f} = 0$, where
\[
  \vec{d} = \frac{1}{\mu}(\delta_G(y_0^{(G)}), \dots, \delta_G(y_{\mu-1}^{(G)}),0) \in (\tfrac{1}{\mu}\ZZ)^{\mu+1} \ .
\]
But this is easily accomplished by relying on \cref{thm:sim-hermit-pade} and \cref{cor:sim-hermit-pade}. We conclude this subsection by presenting our root-finding approach in its entirety in \cref{algo:find-roots}.

\begin{algorithm}[H]
  \caption{$\RootFind(D,G,Q,\pow{\vec{y}})$} \label{algo:find-roots}
  \begin{algorithmic}[1]
    \Input
    \begin{itemize}
    \item Divisors $D$ and $G$ for the code $\code$,
%    \item Elements $y_0^{(G)},\dots,y_{\mu-1}^{(G)} \in \Ya(G)$,
    \item a nonzero $Q=\sum_{t=0}^\ell z^tQ^{(t)} \in \M$ with $\delta_G(Q)<s(n-\tau)$, where $Q^{(t)}=\sum_{i=0}^{\mu-1}Q_i^{(t)}y_i^{(-tG)}$ for some $Q_i^{(t)} \in \field[x]$,
    \item $\pow{\vec{y}}=(\pow{y}_i^{(-tG)})_{i=0,\dots,\mu-1}^{t=0,\dots,\ell}$ with $\pow{y}_i^{(-tG)} \in \field[x]$ such that
      \newline $v_{P_0}(y_i^{(-tG)}-\pow{y}_i^{(-tG)}) \ge \beta := 2\ell\deg G+s(n-\tau)$.
    \end{itemize}
    \Output
    \begin{itemize}
    \item $L = \{f \in \L(G) \mid Q(f) = 0\}$ with $|L| \le \ell$.
    \end{itemize}
    \State $\pow{Q}^{(t)} \in \field[x] \assign \sum_{i=0}^{\mu-1}Q_i^{(t)}\pow{y}_i^{(-tG)}$ for $t=0,\dots,\ell$ %cost \softO( (\ell+1)\mu M(\beta) )\subseteq \softO(\ell^2\mu(n+g))
    \State $\pow{Q} \in \field[x] \assign \sum_{t=0}^\ell z^t\pow{Q}^{(t)}$
    \State $\pow{L} \subset \field[x] \assign$ all polynomials from a basic root set of $\pow{Q}$ to precision $\beta$
    \State $L \assign \emptyset$
    \State $\vec{d} \in (\tfrac{1}{\mu}\ZZ)^{\mu+1} \assign \frac{1}{\mu}(\delta_G(y_0^{(G)}), \dots, \delta_G(y_{\mu-1}^{(G)}), 0)$
    \State $\alpha \in \ZZ_{>0} \assign \deg G+1$
    \For{$\pow{f} \in \pow{L}$} % since $|\pow{L}| \le \ell$: cost \softO(\ell \mu^{\omega-1}(n+g))
    \State $\mat F \in \field^{(\mu+1)\times(\mu+1)} \assign$ $\vec{d}$-Popov basis of $\H_{x^\alpha} ( [\pow{y}_0^{(G)},\dots ,\pow{y}_{\mu-1}^{(G)}, -\pow{f} ] )$
    \If{$\mat F$ contains a row $\vec{f} = (f_0,\dots,f_{\mu-1},1)$ with $\deg_{\vec{d}} \vec{f} = 0$}
    \State $L \assign L \cup \{\sum_{i=0}^{\mu -1} f_i y_i^{(G)}\}$
    \EndIf
    \EndFor
    \State \Return $L$
  \end{algorithmic}
\end{algorithm}

\begin{proposition}
  \cref{algo:find-roots} is correct and costs $\softO(\ell^2\mu^{\omega-1}(n+g))$ operations in $\field$.
\end{proposition}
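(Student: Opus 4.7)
The plan is to verify correctness by chaining the preceding lemmas, then bound the cost of each step separately.

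For correctness, first observe that Lines 1--2 build a polynomial $\pow{Q}$ that coincides modulo $x^\beta$ with the genuine $P_0$-adic expansion of $Q$, thanks to linearity of the expansion and the precision hypothesis $v_{P_0}(y_i^{(-tG)} - \pow{y}_i^{(-tG)}) \ge \beta$. Fix any $f \in \L(G)$ with $Q(f) = 0$; then $\pow{Q}(\pow{f}) \equiv 0 \pmod{x^\beta}$, so by the defining property of a basic root set there exist $\pow{f}_r \in \pow{L}$ and $\alpha_r \in \ZZ_{\ge 0}$ with $\pow{f} \equiv \pow{f}_r \pmod{x^{\alpha_r}}$. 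Applying \cref{lem:high-enough-precision} to $\pow{Q}(\pow{f}_r + x^{\alpha_r}z) \equiv 0 \pmod{x^\beta}$, and using $\delta_G(f) \le 0$ (since $f \in \L(G)$) together with $\delta_G(Q) < s(n-\tau)$ and $\beta = 2\ell\deg G + s(n-\tau)$, gives
\[
  \alpha_r \;\ge\; \tfrac{1}{\ell}(\beta - \delta_G(Q)) - \delta_G(f) \;>\; \tfrac{1}{\ell}(2\ell \deg G) \;=\; 2\deg G \;\ge\; \deg G + 1 \;=\; \alpha \ ,
\]
so $\pow{f} \equiv \pow{f}_r \pmod{x^\alpha}$. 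Writing $f = \sum_i f_i y_i^{(G)}$ in the $\field[x]$-basis of \cref{lem:Fx_basis}, the bound of \cref{lem:deg} yields $\deg f_i \le -\delta_G(y_i^{(G)})/\mu$; equivalently, $(f_0,\dots,f_{\mu-1},1)$ lies in $\H_{x^\alpha}(\mat{A})$ with $\vec{d}$-degree zero. By \cref{prop:popov-properties} this vector appears (up to a scalar forced to be $1$ by the last entry) as a row of the $\vec{d}$-Popov basis $\mat F$, so $f$ is added to $L$ in the corresponding iteration.

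Conversely, for any row $(f_0,\dots,f_{\mu-1},1)$ of $\mat F$ with $\vec{d}$-degree zero, \cref{lem:series-to-function} produces $f \in \L(G)$ satisfying $\pow{f} \equiv \pow{f}_r \pmod{x^\alpha}$, and \cref{lem:unique_intersect} (applied with $\alpha > \deg G$) guarantees that $f$ is the unique $\L(G)$-element in that coset. Combining with the forward direction, the recovered $f$ coincides with the unique genuine root signalled by $\pow{f}_r$, and \cref{lem:aprox-root-is-root} (whose hypothesis $\beta > \delta_G(Q)$ is trivially satisfied) certifies $Q(f) = 0$. Finally, $|L| \le |\pow{L}| \le \deg \pow{Q} \le \ell$.

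For complexity: Line 1 computes $\ell + 1$ sums of $\mu$ univariate polynomial products truncated modulo $x^\beta$, at cost $\softO(\ell\mu\beta) \subseteq \softO(\ell^2\mu(n+g)) \subseteq \softO(\ell^2\mu^{\omega-1}(n+g))$, the last containment using $\mu \le \mu^{\omega-1}$ for $\omega \ge 2$. Line 3 invokes \cref{thm:neiger-root-find} at cost $\softO(\ell\beta) \subseteq \softO(\ell^2(n+g))$. Each of the at most $\ell$ iterations of the for-loop calls \cref{cor:sim-hermit-pade} on a $(\mu+1) \times 1$ matrix with single modulus $x^\alpha$ of degree $\bigO(n+g)$ and shifts of absolute value $\bigO(g/\mu + 1)$, at cost $\softO(\mu^{\omega-1}(n+g))$; the loop thus contributes $\softO(\ell\mu^{\omega-1}(n+g))$. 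Summing matches the claimed $\softO(\ell^2\mu^{\omega-1}(n+g))$. The main obstacle is calibrating $\beta$: the choice $\beta = 2\ell\deg G + s(n-\tau)$ is the minimal value that simultaneously lets \cref{lem:high-enough-precision} certify $\alpha_r > \deg G$, allowing recovery at the cheaper working precision $\alpha = \deg G + 1$ via \cref{lem:unique_intersect,lem:series-to-function}, while still keeping the basic-root-set cost and the power-series expansion of $Q$ within the target bound.
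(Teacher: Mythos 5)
Your argument is correct and follows essentially the same route as the paper's own proof: correctness is established by chaining \cref{lem:high-enough-precision}, \cref{lem:unique_intersect}, \cref{lem:aprox-root-is-root} and \cref{lem:series-to-function} exactly as the paper does, and the cost accounting (expansion of $Q$ in Line 1, the call to \cref{thm:neiger-root-find} in Line 3, and the $\bigO(\ell)$ shifted Popov computations via \cref{cor:sim-hermit-pade}) is identical. The only cosmetic difference is that you use the sharper bound $\delta_G(f)\le 0$ for $f\in\L(G)$ where the paper uses $\delta_G(f)\le\deg G$; this changes nothing.
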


\begin{proof}
  For correctness, our goal is to prove that $L = K$, where $L$ is the output of the algorithm and $K = \{f \in \L(G) \mid Q(f) = 0\}$. If $\{(\pow{f}_r,\alpha_r)\}_{r=1}^m \subset \field[x] \times \ZZ_{\ge 0}$ denotes the basic root set used in Line 3, i.e. $\pow{L} = \{\pow{f}_r\}_{r=1}^m$, then it is clear that $K \subseteq \bigcup_{r=1}^m K_r$, where $K_r = \L(G) \cap (\pow{f}_r + x^{\alpha_r}\field[\![x]\!])$ and $m \le \ell$. Since $\delta_G(Q)<s(n-\tau)$, $\delta_G(\pow{h}_r) \le \deg G$ and $\beta=2\ell \deg G+s(n-\tau)$, then \cref{lem:high-enough-precision} guarantees that $\alpha_r \ge \frac{1}{\ell}\big(\beta - \delta_G(Q)\big) - \delta_G(\pow{h}) \ge \deg G + 1$, hence $|K_r| \le 1$ by \cref{lem:unique_intersect}. Combining this with the fact that each non-empty $K_r$ necessarily contains an $\L(G)$-root of $Q$, as implied by \cref{lem:aprox-root-is-root} because $\beta > \delta_G(Q)$, we may conclude that $K = \bigcup_{r=1}^m K_r$. But due to \cref{lem:series-to-function}

  \begin{align*}
    \bigcup_{r=1}^m K_r = L
    = \Big\{ \sum_{i=0}^{\mu-1}f_i^{(r)}y_i^{(G)} \mid &\sum_{i=0}^{\mu-1}f_i^{(r)}\pow{y}_i^{(G)} \equiv \pow{f}_r \pmod{x^\alpha), \\ &\deg f_i^{(r)} \le - \frac{1}{\mu}\delta_G(y_i^{(G)}}, \ r=1,\dots,m \Big\} \ .
  \end{align*}

For the complexity, computing the $(\ell+1)\mu$ products $Q_i^{(t)}\pow{y}_i^{(-tG)}$ in Line 1 costs $\softO(\mu \ell\beta) \subseteq \softO(\ell^2\mu(n+g))$. The basic root set of $\pow{Q}$ in Line 3 can be computed with cost $\softO(\beta \deg_z(\hat{Q})) \subseteq \softO(\ell^2(n+g))$ due to \cite{neiger_fast_2017} (see \cref{thm:neiger-root-find}). Finally, the total cost of computing the $\vec{d}$-Popov bases in line 8 across all of the $\bigO(\ell)$ iterations in the surrounding for-loop is $\softO(\ell \mu^{\omega-1}(n+g))$ by \cref{cor:sim-hermit-pade}.
The claimed complexity of the algorithm follows.
 \end{proof}

\section{Decoding $\code$}
\label{sec:decoding}

We are now ready to state our Guruswami-Sudan list decoding algorithm for the code $\code$. We will assume that the decoding algorithm has access to the following data, which may be precomputed:
\begin{enumerate}
    \item divisor $E=E_1+\cdots+E_N$, where $E_1,\dots,E_N$ are distinct rational places different from $\Pinf$ not occurring in $\supp G$ and $N \ge \max\{\deg G+(\ell+3)(2g-1)+(s+1)n+2+\mu,(\ell+1)\deg G+4g+(s+1)n\}$,
    \item evaluations $\vec{g} = (g_{v,j}^{(u)})$, where $u=0,\dots,\ell$, $v = 1,2$ and $j = 1,\dots,N$, such that $g_{v,j}^{(u)} = g_v^{(u)}(E_j) \in \field$ where $\langle g_1^{(u)}, g_2^{(u)} \rangle_{\Ya} = \Ya(G_u)$, as in \cref{cor:Ya-basis-of-M}
    \item evaluations $\vec{x} = (x_j)_{j=1,\dots,N}$, where $x_j = x(E_j) \in \field$,
    \item evaluations $\vec{y} = (y_{i,j})^{i=0,\dots,\mu-1}_{j=1,\dots,N}$, where $y_{i,j} = \y[A]_i(E_j) \in \field$
    \item polynomials $\pow{\vec{y}}=(\pow{y}_i^{(-tG)})_{i,t} \in \field[x]^{\mu\times (\ell+1)}$, with $i=0,\dots,\mu-1$ and $t=-1,\dots,\ell$, polynomials in $\field[x]$ such that $v_{P_0}(y_i^{(-tG)}-\pow{y}_i^{(-tG)}) \ge 2\ell\deg G+s(n-\tau)$ for all $i$ and $t$,
\end{enumerate}

Then the decoding algorithm becomes the following:

\begin{algorithm}[H]
  \caption{$\Decode(\vec{r},s,\ell,D,G)$} \label{algo:gs-decode}
  \begin{algorithmic}[1]
    \Input
    \begin{itemize}
    \item Received word $\vec{r} \in \field^{n}$,
    \item divisors $D$ and $G$ for the code $\code$,
    \item decoding parameters $s,\ell \in \ZZ_{>0}$ with $s \le \ell$,
    \item corresponding list-decoding radius $\tau \in \ZZ_{>0}$,
    \end{itemize}
    \Output
    \begin{itemize}
    \item $L = \{f \in \L(G) \mid d(\vec{r},\vec{c}) \le \tau\}$ or $\algo{FAIL}$
    \end{itemize}
    \State $(B_{v,i}^{(u)})^{u=0,\dots,\ell}_{v=1,2, \ i=0,\dots,\mu-1} \assign \GenFqx(\vec{r},D,G,E,\vec{x},\vec{y},\vec{g})$
    \State ${\mat M}_{s,\ell}\in \field[x]^{2\mu(\ell+1)\times \mu(\ell+1)} \assign$ matrix based on the $B_{v,i}^{(u)}$ as in \cref{lem:Fx-mat}
    \State $\mat B_{s,\ell} \in \field[x]^{\mu(\ell+1)\times \mu(\ell+1)} \assign$ basis matrix in (unshifted) Popov form of ${\mat M}_{s,\ell}$
%    \State $\vec{d}^{(t)}\in (\tfrac{1}{\mu}\ZZ)^\mu \assign \frac{1}{\mu}(\delta_{-tG}(y^{(-tG)}_0), \dots, \delta_{-tG}(y^{(-tG)}_{\mu-1}))$ for $t = 0,\dots,\ell$
    \State $\vec{d} \in (\tfrac{1}{\mu}\ZZ)^{\mu(\ell + 1)} \assign (\vec{d}^{(0)}|\cdots|\vec{d}^{(\ell)})$ with $\vec{d}^{(t)} = \frac{1}{\mu}(\delta_{-tG}(y_{i}^{(-tG)}))_{i=0}^{\mu-1} \in (\tfrac{1}{\mu}\ZZ)^\mu$
    \State $\mat V_{s,\ell} \in \field[x]^{\mu(\ell+1)\times \mu(\ell+1)} \assign$ $\vec d$-Popov form of $\mat B_{s,\ell}$
    \State $\vec{Q} = \big( (Q_i^{(0)})_{i=0}^{\mu}|\dots|(Q_i^{(\ell)})_{i=0}^{\mu} \big) \in \field[x]^{\mu(\ell + 1)} \assign$ $\deg_{\vec{d}}$-minimal row of $\mat V_{s,\ell}$
%    \State $w \assign$ $\vec{d}$-row degree of $\vec{a}$ \todo{What about this?}
    \If{$\deg_{\vec{d}} \vec{Q} \ge s(n-\tau)$}
    \State \Return $\algo{FAIL}$
    \EndIf
    \State $Q \in \bigoplus_{t=0}^\ell z^t \Ya(-tG) \assign \sum_{t=0}^\ell z^t\sum_{i=0}^{\mu-1}Q_i^{(t)}y_i^{(-tG)}$
    \State $L \assign \RootFind(D,G,Q,\pow{\vec{y}})$
    \For{$f \in L$}
    \State $\vec{c} \in \field^n \assign \Evaluate(f,D,G,\vec{x},\vec{y})$
    \If{$d(\vec{r}, \vec{c}) > \tau$}
    $L \assign L \setminus \{f\}$
    \EndIf
    \EndFor
    \State \Return $L$
  \end{algorithmic}
\end{algorithm}

Note that the decoding algorithm returns the functions from $\L(G)$ giving rise to all codewords within radius $\tau$ of the received word. Since in Line 12, the codeword corresponding to these function have been calculated, it is trivial to modify the algorithm to return these codewords instead. Combining all results from the previous section, we immediately obtain the following:

\begin{theorem}
The Guruswami-Sudan algorithm for the AG code $\code$ can be carried out in complexity $\softO(\ell^{\omega+1} \mu^{\omega-1}(n+g))$. Using the alternative generating set from \cref{rem:alternative-Fqx-basis-of-M}, we obtain the complexity $\softO(s\ell^\omega \mu^{\omega-1}(n+g))$.
\end{theorem}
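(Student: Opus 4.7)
The plan is to establish correctness and complexity by walking through \cref{algo:gs-decode} line by line. Correctness has essentially been proved already: the Guruswami-Sudan theorem in \cref{sec:Guruswami-Sudan} guarantees that any nonzero $Q \in \M$ with $\delta_G(Q) < s(n - \tau)$ has every sought $f \in \L(G)$ as a root, so once \cref{algo:find-roots} returns the full set of $\L(G)$-roots of such a $Q$, the final re-encoding/distance filter in Lines 12--14 delivers exactly the codewords within radius $\tau$ of $\vec{r}$. The $\vec{d}$-minimal row $\vec{Q}$ extracted in Line 6 is guaranteed to satisfy $\deg_{\vec{d}} \vec{Q} \cdot \mu = \delta_G(Q)$ by \cref{lem:Fx-mat}, and \cref{prop:popov-properties} ensures it is minimal among all nonzero elements of $\M$.

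For the complexity, I will sum the costs of the individual stages and identify the dominant one. First, $\GenFqx$ on Line 1 costs $\softO(\ell^3 \mu^{\omega-1}(n+g))$ by the proposition in \cref{subsec:genFqx}. The key estimate is the bound on $\deg \mat{M}_{s,\ell}$ derived in equation \eqref{eq:maxdegMsl}, namely $\deg \mat{M}_{s,\ell} \in \bigO(\mu^{-1}\ell(n+g))$. Applying \cref{cor:rect-popov-cost} to the rectangular matrix $\mat{M}_{s,\ell} \in \field[x]^{2\mu(\ell+1) \times \mu(\ell+1)}$, combined with the column permutation trick of \cref{cor:sim-hermit-pade} to handle the fractional shift $\vec{d}$, yields a cost for Lines 2--5 of
\[
    \softO\bigl( 2\mu(\ell+1) \cdot (\mu(\ell+1))^{\omega-1} \cdot \mu^{-1}\ell(n+g)\bigr) = \softO(\ell^{\omega+1} \mu^{\omega-1}(n+g)) \ .
\]
This is the dominant term. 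The root-finding stage (Line 10) costs $\softO(\ell^2 \mu^{\omega-1}(n+g))$ by the proposition of \cref{subsec:findroots}, and the final verification loop calls $\Evaluate$ at most $\ell$ times at cost $\softO(\mu n + g)$ each, all absorbed into the bottleneck. So the overall cost is $\softO(\ell^{\omega+1} \mu^{\omega-1}(n+g))$, which proves the first claim.

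For the improvement to $\softO(s\ell^\omega \mu^{\omega-1}(n+g))$, I would substitute throughout the alternative generating set $\{\tilde{B}_v^{(u)}\}$ from \cref{rem:alternative-Fqx-basis-of-M}. The computation of its $\Ya$-basis and $\field[x]$-expansion then costs $\softO(s\ell^2 \mu^{\omega-1}(n+g))$ by the remarks in \cref{subsec:genYa} and \cref{subsec:genFqx}. The degree analysis of \eqref{eq:maxdegMsl} is then reworked replacing the factor $u-r$ (which measures the exponent of $-R$ appearing in a coefficient of $B_v^{(u)}$) by $s$, since coefficients of $\tilde{B}_v^{(u)}$ contain no higher power of $R$ than $s$. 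This yields $\deg \mat{M}_{s,\ell} \in \bigO(\mu^{-1} s(n+g))$ in the relevant direction, and \cref{cor:rect-popov-cost} then gives Popov-form cost $\softO(\ell \mu \cdot (\ell\mu)^{\omega-1} \cdot s(n+g)/\mu) = \softO(s\ell^\omega \mu^{\omega-1}(n+g))$, again dominating all other stages.

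The one point requiring minor care is verifying that the degree bound \eqref{eq:maxdegMsl} still controls the Popov cost after the column permutation of \cref{thm:permute}, since that transformation enlarges column degrees by $\lfloor d_t / \mu \rfloor$; but as \cref{cor:sim-hermit-pade} shows, these additive shifts are bounded by $\max_t |d_t|/\mu$, which is of the same order as the matrix degree itself, so no asymptotic penalty arises.
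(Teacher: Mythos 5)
Your proposal is correct and takes essentially the same route as the paper, which proves this theorem simply by combining the per-stage cost bounds already established (the paper's proof is literally ``combining all results from the previous section''): the $\vec{d}$-Popov computation on $\mat{M}_{s,\ell}$ via \cref{cor:rect-popov-cost} and the degree bound \eqref{eq:maxdegMsl} is indeed the bottleneck at $\softO(\ell^{\omega+1}\mu^{\omega-1}(n+g))$, and replacing $u-r$ by $s$ for the alternative generators gives the improved bound exactly as in the paper's remarks. Your closing observation about the fractional-shift permutation not inflating the degree asymptotically is a valid detail the paper leaves implicit.
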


We now give several examples comparing this result with previously known results.

\subsection{Examples}

\begin{example}
  AG codes obtained from the rational function field $\field(x)$ are known as \emph{generalized Reed-Solomon (GRS) codes}. In this case $g=0$ and $\mu=1$, which specializes the complexity of \cref{algo:gs-decode} to $\softO(s\ell^{\omega}n)$ operations in $\field$. The same complexity is achieved for families of function fields having fixed small genus, e.g. those arising from elliptic curves. The best known complexity for Guruswami-Sudan list-decoding of GRS codes is $\softO(s^2\ell^{\omega-1}n)$ \cite{chowdhury_faster_2015}.
\end{example}

\begin{example}
  By definition, any maximal function field $\ffield$ over $\field$ attains the Hasse-Weil bound\,--\,it has exactly $N_1 = q + 1 + 2g\sqrt{q}$ rational places, where $q$ is necessarily a square. If $\ffield$ is such a function field, then any place $P$ of $\ffield \fieldc$, where $\fieldc$ denotes the algebraic closure of $\field$, necessarily contains a positive element no larger than $\sqrt{q}$ in its Weierstrass semigroup \cite[Theorem 10.6]{HirKorchTor}, i.e. we are guaranteed that $\mu \le \sqrt{q}$ in the complexity of \cref{algo:gs-decode}. Furthermore, it is well known that all maximal function fields satisfy $g \le \sqrt{q}(\sqrt{q}-1)/2 \in \bigO(q)$. This implies that any code of length $n \in \Omega(q)$ over such a function field can be decoded using no more that $\softO(s\ell^\omega q^{(\omega-1)/2} n) \subseteq \softO(s\ell^\omega n^{(\omega+1)/2})$ operations in $\field$, which is sub-quadratic in the code length. Here, and in the rest of the examples, $u \in \Omega(v)$ if and only if $v \in \bigO(u)$ for any functions $u,v : \RR_{\ge 0} \to \RR_{\ge 0}$.
\end{example}

We obtain even better results for long codes over specific maximal function fields:

\begin{example}
  In the case of Hermitian function field $\ffield = \mathbb{F}_{q^2}(x_1,x_2)$, where $x_2^q + x_2 = x_1^{q+1}$, we have $N_1 = q^3 + 1$ rational places and genus $g=q(q-1)/2$. The usual choice of $\Pinf$ in one-point codes of $\ffield$ gives $\mu = q$. Consequently, we can decode any such code of length $n \in \Omega(q^3)$ using
  \[
    \softO(s\ell^\omega q^{\omega-1}q^3) = \softO(s\ell^\omega q^{\omega+2}) = \softO(s\ell^\omega n^{(\omega + 2)/3})
  \]
  operations in $\field$. For $n=q^3$, our approach specializes to the one from \cite{nielsen_sub-quadratic_2015}.
\end{example}

\begin{example}
  The Giulietti-Korchmaros function field $\mathbb{F}_{q^6}(x_1,x_2,x_3)$ from \cite{GiulKorch}, where $x_2^{q}+x_2=x_1^{q+1}$ and $x_3^{q^2-q+1}=x_1^{q^2}-x_1$, is also maximal\,--\,it has $\mu \le q^3$, $g=(q^5-2q^3+q^2)/2$ and $N_1=q^8-q^6+q^5+1$. In this case, we can decode any code of length $n \in \Omega(q^8)$ with cost $\softO(s\ell^{\omega}n^{(3\omega+5)/8})$.
\end{example}

\begin{example}
  The Suzuki function field $\ffield = \mathbb{F}_q(x_1,x_2)$, where $q = 2^{2e+1}$ is an odd power of two and $x_2^q+x_2=x_1^{2^e}(x_1^q+x_1)$, has genus $g=2^e(q-1)$ and $N_1=q^2+1$ rational places. Although it is not maximal in the sense of the Hasse-Weil bound, no other function field with the same genus and constant field can surpass its number of rational places \cite[Section 5.4]{Serre2020}. From \cite{BarMonZin}, it immediately follows that the Weierstrass semigroup of any place $P$ contains a positive element no greater than $q$, i.e. $\mu \le q$. This means that for any code over $\ffield$ of length $n \in \Omega(q^2)$, the complexity of \cref{algo:gs-decode} specializes to $\softO(s\ell^{\omega}n^{(\omega+1)/2})$.
\end{example}

\begin{example}
Let $F$ be a function field over $\field$ having a rational place $\Pinf$ whose Weierstrass semigroup can be generated by two positive integers, say $a$ and $b$, where $a<b$. Note that necessarily $\gcd(a,b)=1$, since otherwise the semigroup generated by $a$ and $b$ has infinitely many gaps. The genus of such a function field is $(a-1)(b-1)/2$, since this is the number of gaps of the semigroup generated by $a$ and $b$. Now let $x,y \in \Ya$ be such that $\delta(x)=a$ and $\delta(y)=b$. Then $F=\field(x,y)$ and $x^b+\alpha y^a+g(x,y)=0$, where $\alpha \in \field \setminus \{0\}$ and $g(X,Y) \in \field[X,Y]$ has $(a,b)$-weighted degree strictly less then $ab$. The curve defined by the equation $X^b+\alpha Y^a+g(X,Y)=0$ is sometimes called a $C_{ab}$-curve or a Miura-Kayima curve \cite{miura-kamiya-1993}; codes defined over such curves are of particular interest for practical applications, as they can be encoded efficiently \cite{beelen2020fast}. When it comes to decoding, the additional assumptions that $G=m\Pinf$ and that $D-n\Pinf$ is a principal divisor were used in \cite{beelen_efficient_2010} to decode the code $\code$ in complexity $\softO(\ell^5 a^3(n+g))$. %In \cite{beelen_efficient_2010} the additional information that $G=m\Pinf$ and that $D-n\Pinf$ is a principal divisor was required. The result in \cite{beelen_efficient_2010} is that in this setting decoding of $\code$ can be done in complexity $\softO(\ell^5 a^3(n+g))$.

Let us compare this to our results. Knowing that $F$ has a rational point $\Pinf$ whose Weierstrass semigroup contains two positive, relatively prime integers $a$ and $b$, implies that $g \le (a-1)(b-1)/2$ and $\mu \le a$. Using this weaker assumption and not needing the additional requirement that $G=m\Pinf$ and that $D-n\Pinf$ is a principal divisor, we can decode $\code$ in complexity $\softO(s\ell^\omega a^{\omega-1}(n+g))$. Hence, our results can both handle more general settings and decode faster.
\end{example}

%\section{Conclusion}
%In this article we have presented an efficient realization of the Guruswami-Sudan list decoder for any AG code, achieving the complexity $\softO(s\ell^\omega\mu^{\omega-1}(n+g))$. Our algorithm is faster than any other general list-decoding algorithm. Moreover, except for GRS codes, it is at least as fast as any decoding algorithm that is tailored for specific families of AG codes.

\section*{Acknowledgments}

The authors would like to acknowledge the support from The Danish Council for Independent Research (DFF-FNU) for the project \emph{Correcting on a Curve}, Grant No.~8021-00030B.

\bibliographystyle{abbrv}
\bibliography{bibtex}

\end{document}

